\newtheorem{theorem}{Theorem}
\newtheorem*{theorem*}{Theorem}
\newtheorem*{lemma*}{Lemma}
\theoremstyle{definition}
\newtheorem{definition}{Definition}
\newtheorem{property}{Property}
\newtheorem{question}{Question}
\newtheorem*{notation}{Notation}
\theoremstyle{remark}
\newtheorem*{remark}{Remark}
\newcommand{\red}[1]{\textcolor{black}{#1}}
\newcommand{\rred}[1]{\textcolor{black}{#1}}
\begin{document}

%\preprint{Cheng Guo's preprint}

\title{Topological Winding Guaranteed Coherent Orthogonal Scattering}% Force line breaks with \\
%\thanks{A footnote to the article title}%

\author{Cheng Guo}
\email{guocheng@stanford.edu}
\affiliation{
Ginzton Laboratory and Department of Electrical Engineering, Stanford University, Stanford, California 94305, USA
}
 %Lines break automatically or can be forced with \\

\author{Shanhui Fan}
\email{shanhui@stanford.edu}
\affiliation{
Ginzton Laboratory and Department of Electrical Engineering, Stanford University, Stanford, California 94305, USA
}%

\date{\today}% It is always \today, today,
             %  but any date may be explicitly specified

\begin{abstract}
%Wave scattering is a fundamental phenomenon with broad applications. 
Coherent control has enabled various novel phenomena in wave scattering. We introduce an effect called coherent orthogonal scattering, where the output wave becomes orthogonal to \rred{the reference output state without scatterers. This effect leads to a unity extinction coefficient and complete mode conversion.} We examine the conditions for this effect and reveal its topological nature by relating it to the indivisibility between the dimension and the winding number of scattering submatrices. These findings deepen our understanding of topological scattering phenomena.
\end{abstract}

%\keywords{Suggested keywords}%Use showkeys class option if keyword
                              %display desired
\maketitle

%\tableofcontents

\red{Understanding wave scattering is crucial for various applications, such as  imaging~\cite{feng1988a,vellekoop2007b,popoff2010,popoff2010c,vellekoop2010a,choi2011a,popoff2011a,bertolotti2012a,mosk2012c,chaigne2014a,katz2014a,kim2015e}, sensing~\cite{muraviev2018,tan2020}, energy harvesting~\cite{chen2005,zhang2007,howell2016,fan2017,cuevas2018b,ottens2011, raman2014, boriskina2016,raj2017a,fiorino2018a,park2021,zhu2019c,li2019g,li2021e,liu2022a,guo2022b,guo2023b}, and optical computing~\cite{miller2013b,carolan2015,shen2017,zhu2021,long2021}. Coherent control of wave scattering~\cite{popoff2014,liew2016,mounaix2016} has been a significant advancement, enabling the tailoring of scattering behaviors by modifying the input wave profile. This approach has unveiled unique phenomena, particularly coherent perfect absorption~\cite{chong2010a,wan2011,sun2014,baranov2017,mullers2018,pichler2019,sweeney2019,chen2020f,wang2021,slobodkin2022}--the complete absorption of a tailored input wave profile--and related effects~\cite{krasnok2019a,ni2023,guo2023}, including coherent virtual absorption~\cite{longhi2018,trainiti2019,zhong2020b} and reflectionless scattering modes~\cite{sweeney2020a,stone2021,horodynski2022,sol2023}.}

\begin{figure}[htbp]
    \centering
    \includegraphics[width=0.45\textwidth]{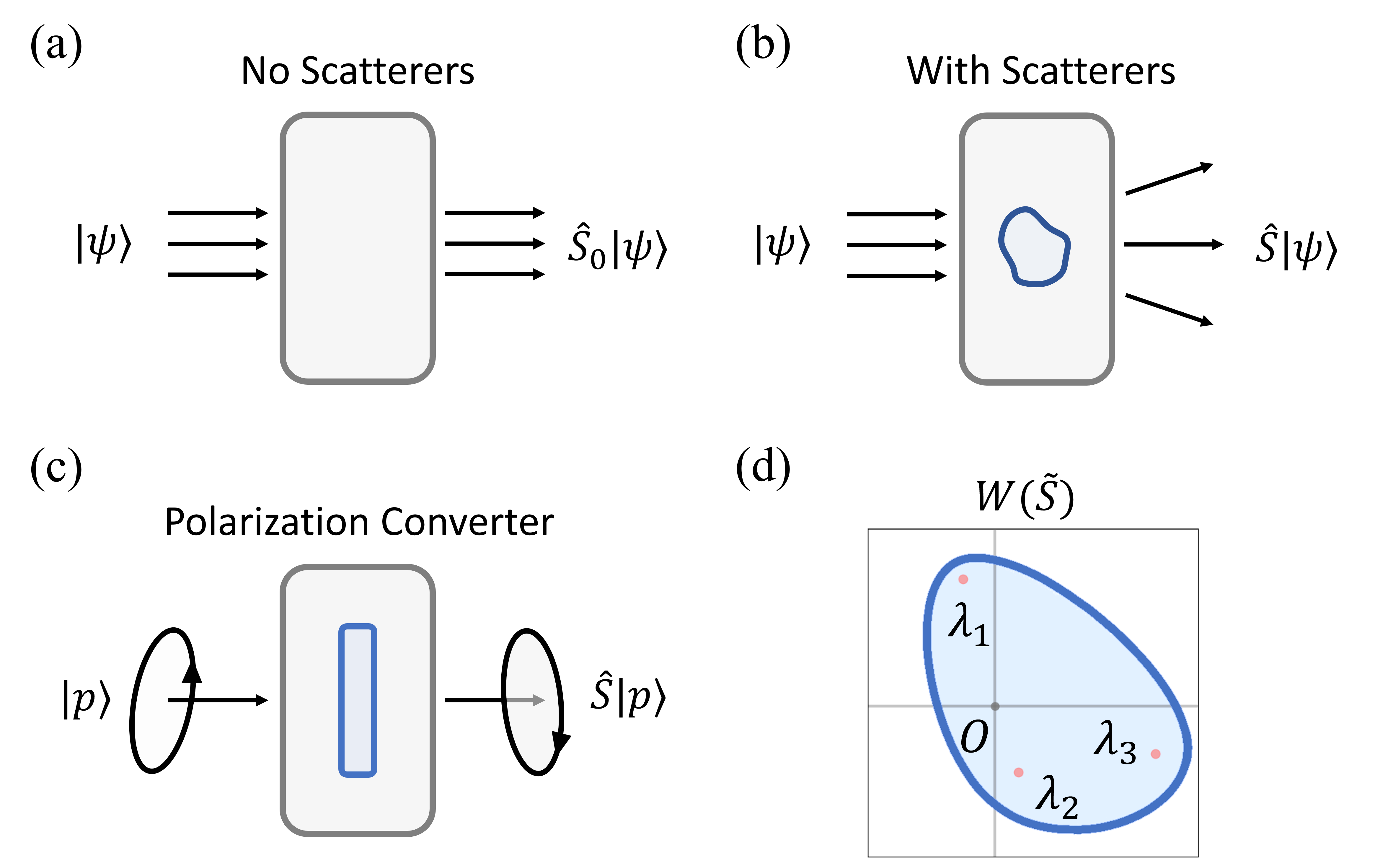}
    \caption{\rred{The concept of coherent orthogonal scattering (COS). (a,b) A standard scattering setup with an input state $\ket{\psi}$. (a) In the absence of scatterers, the output state is $\ket{\xi_0} = \hat{S}_0 \ket{\psi}$. (b) In the presence of scatterers, the output state becomes $\ket{\xi} = \hat{S} \ket{\psi}$. COS occurs when $\braket{\xi_0|\xi} = 0$. (c) In polarization optics, COS corresponds to complete polarization conversion. (d) The numerical range $W(\tilde{S})$ of a scattering submatrix $\tilde{S}$ is convex and contains its eigenvalues. COS occurs iff $0 \in W(\tilde{S})$.}} %Here $\tilde{S} \in M_3$.}
    \label{fig:scheme_COS}
\end{figure}

\rred{In this Letter, we introduce a new phenomenon called \emph{coherent orthogonal scattering}, which refers to the complete extinction of the output state with respect to the reference output state in the absence of scatterers. Consider a standard scattering setup~\cite{bohren2004,taylor2006}. Let $H$ be the Hilbert space of waves. We input a normalized state $\ket{\psi}$ into a system. In the absence of scatterers (Fig.~\ref{fig:scheme_COS}a), the output state, termed as the reference state, is 
\begin{equation}
\ket{\xi_{0}} = \hat{S}_{0} \ket{\psi},    
\end{equation}
where $\hat{S}_{0}$ is a unitary background scattering operator. In the presence of scatterers (Fig.~\ref{fig:scheme_COS}b), the output state is
\begin{equation}
\ket{\xi} = \hat{S} \ket{\psi},    
\end{equation}
where $\hat{S}$ is the total scattering operator which can be non-unitary. By orthogonal projection with respect to $\ket{\xi_{0}}$, the output state $\ket{\xi}$ can be uniquely decomposed as 
\begin{equation}
\ket{\xi} = \alpha \ket{\xi_{0}} + \beta  \ket{\xi_{0}^\bot},     
\end{equation}
where $\ket{\xi_0^\bot}$ is a unit vector orthogonal to $\ket{\xi_0}$. The absorption coefficient is defined as
\begin{equation}
\chi_{a} = 1 - \lvert \alpha \rvert^{2} - \lvert \beta \rvert^{2}  =  1 - \braket{\psi|\hat{S}^{\dagger}\hat{S}|\psi}.  
\end{equation}
The extinction coefficient is defined as 
\begin{equation}
\chi_{e} = 1 - \lvert \alpha \rvert^{2} =  1 - \lvert \braket{\xi_0|\xi}\rvert^{2} =
1 - \lvert \braket{\psi|\hat{S}_0^\dagger \hat{S}|\psi} \rvert^{2},  
\end{equation}
which measures the deviation of $\ket{\xi}$ from $\ket{\xi_0}$. \footnote{\rred{The scattering coefficient is defined as
$\chi_{s} = \lvert \beta \rvert^{2} = \chi_{e} - \chi_{a}$.} }
For passive scatterers, $\chi_a$ and $\chi_e$ lie between $0$ and $1$.} 

\rred{In coherent control experiments, one varies $\ket{\psi}$ within an accessible input subspace $H_i \subseteq H$ to achieve desired outcomes~\cite{ambichl2017,rotter2017,cao2022a}. Coherent perfect absorption refers to the case when $\chi_a = 1$. %, which occurs iff
%\begin{equation}\label{eq:CPA-braket} 
%\hat{S}\ket{\psi} = 0.    
%\end{equation}
In contrast, coherent orthogonal scattering occurs when $\chi_e = 1$, which is equivalent to the output state being orthogonal to the reference state:
\begin{equation}\label{eq:COS-braket} 
\braket{\xi_0|\xi} = \braket{\psi|\hat{S}_0^\dagger\hat{S}|\psi} = 0.    
\end{equation}
For example, in polarization optics, $H_i$ is the two-dimensional ($2$D) polarization space, and the coherent orthogonal scattering corresponds to complete polarization conversion~(Fig.~\ref{fig:scheme_COS}c). A related yet distinct effect, coherent perfect extinction~\cite{guo2023},  refers to the case when the output state is orthogonal to the reference output subspace $H_o = \hat{S}_0 H_i$:
\begin{equation}\label{eq:CPE-braket} 
\forall \ket{\phi} \in H_i, \quad \braket{\phi|\hat{S}_0^\dagger\hat{S}|\psi} = 0.
\end{equation}
We can express Eqs.~(\ref{eq:COS-braket}) and (\ref{eq:CPE-braket}) in matrix forms. 
Suppose $H_i$ is $m$-dimensional. Using a set of orthonormal bases, the compression of $\hat{S}_0^\dagger\hat{S}$ to $H_i$ is represented by an $m\times m$ scattering submatrix $\tilde{S}$~\footnote{\rred{The compression of a linear operator $\hat{O}$ on a Hilbert space $H$ to a subspace $H_i$ is the operator $\hat{P} \hat{O}: H_i \to H_i$, where $\hat{P}$ is the projection from $H$ to $H_i$~\cite{halmos1982,istratescu2020}}}. (In the absence of scatterers, $\hat{S} = \hat{S}_0$, thus $\tilde{S} = I$.) Then Eq.~(\ref{eq:COS-braket}) becomes
\begin{equation}\label{eq:COS_definition}
\exists \, \tilde{\bm{a}} \in \mathbb{C}^m \backslash \{\bm{0}\}, \quad \tilde{\bm{a}}^{\dagger} \tilde{S} \tilde{\bm{a}} = 0,  
\end{equation}
while Eq.~(\ref{eq:CPE-braket}) becomes}
\begin{equation}\label{eq:CPE_definition}
\rred{\exists \, \tilde{\bm{a}} \in \mathbb{C}^m \backslash \{\bm{0}\}, \quad  \tilde{S} \tilde{\bm{a}} = \bm{0}.}  
\end{equation}

\rred{The physical significance of coherent orthogonal scattering is two-fold. First, it corresponds to a unity extinction coefficient, a fundamental observable in scattering experiments with applications in diverse areas, including molecular spectroscopy~\cite{perkampus1992,hammes2005,parson2015,linne2024}, acoustic imaging~\cite{hoff2001}, atmospheric science~\cite{platt2008,rao2012}, and astronomy~\cite{krasnok2019a,tennyson2019}. Understanding the conditions for achieving a unity extinction coefficient is important. Second, coherent orthogonal scattering enables complete mode conversion~\cite{hill1990,lee2001a,miller2012,lu2012,wang2012b,heinrich2014a,ohana2014,zhao2016d}, such as complete polarization conversion~\cite{huan2000,guo2017,guo2019c,guo2020b,gutierrez-vega2020,zeng2021a,liu2021g,biswas2021,dorrah2021a,he2021a,li2021g,li2022a,kang2022}, which plays a crucial role in various applications, including communications~\cite{kim2009a,wang2012b,li2015,fu2018}, sensing~\cite{vukusic1992,gu2014a,verre2016,zhang2021h,zhang2021g}, and quantum technology~\cite{xavier2008,mohanty2017a,fabre2020b,liang2023}. A key challenge in mode converter design is determining whether complete mode conversion can be achieved by adjusting specific design parameters, and efficiently identifying the parameter values that result in complete mode conversion~\cite{li2015, guo2017,guo2019a,dorrah2021a}. This again requires a deeper understanding of the conditions for coherent orthogonal scattering. }

\rred{In this work, we investigate the conditions for coherent orthogonal scattering. A useful mathematical concept for this purpose is the \emph{numerical range} of  $\tilde{S}$~\cite{horn2008,gustafson1997}:}
\begin{equation}\label{eq:def_numerical_range}
W(\tilde{S}) \coloneqq \{\bm{x}^{\dagger} \tilde{S} \bm{x}: \bm{x} \in \mathbb{C}^m, \bm{x}^{\dagger}\bm{x} = 1\}. 
\end{equation}
$W(\tilde{S})$ is a compact \emph{convex} subset of $\mathbb{C}$ that contains all the eigenvalues of $\tilde{S}$~\cite{toeplitz1918,hausdorff1919}. (See Supplemental Material (SM)~\footnote{See Supplemental Material at [URL] for numerical range, examples of scattering submatrices, Kato's theorem, matrix determinant lemma, proof of Eq.~(\ref{eq:main_result}), topological phases of a resonance, and additional numerical results, which includes 
Refs.~\cite{munkres2000,kippenhahn1951a,zachlin2008,helton2007,henrion2010,helton2012,keeler1997,brown2004,gau2006,militzer2017,cox2021,geryba2021,meyer2000,needham2021,zhao2019c}.}, Sec.~\ref{SI-sec:numerical_range} for more details.) Figure~\ref{fig:scheme_COS}d shows $W(\tilde{S})$ for an $\tilde{S}\in M_3$.  
The \emph{Crawford number} of $\tilde{S}$ is the distance of $W(\tilde{S})$ from the origin~\cite{crawford1976,wang2010,choi2023}:
\begin{equation}\label{eq:def_inner_numerical_radius}
c(\tilde{S}) \coloneqq \min \{|z|: z \in W(\tilde{S}) \}.   
\end{equation}
\red{From Eq.~(\ref{eq:COS_definition}), coherent orthogonal scattering occurs iff} 
\begin{equation}\label{eq:inclusion_relation}
0 \in W(\tilde{S}), \quad \text{i.e.,} \quad  c(\tilde{S}) = 0.  
\end{equation}
The condition~(\ref{eq:inclusion_relation}) can be numerically checked using an approximate algorithm~\cite{horn2008} (see SM, Sec.~\ref{SI-subsec:numerical_algorithm}), but it cannot be determined analytically using the entries of $\tilde{S}$ when $m\geq 4$~\cite{psarrakos2002,wu2019h}. In contrast, coherent perfect extinction occurs iff~\cite{guo2023} 
\begin{equation}
\det \tilde{S} = 0, \label{eq:det_S_equal_0}
\end{equation} 
which can be easily checked using the entries of $\tilde{S}$. \footnote{This difference arises because coherent orthogonal scattering, as defined by a system of quadratic equations [Eq.~(\ref{eq:COS_definition})] is more complicated than coherent perfect extinction, as defined by a system of linear equations [Eq.~(\ref{eq:CPE_definition})].} \rred{Hence, the key challenge is to find a simple analytical criterion for coherent orthogonal scattering.}

\begin{figure}[htbp]
    \centering
    \includegraphics[width=0.40\textwidth]{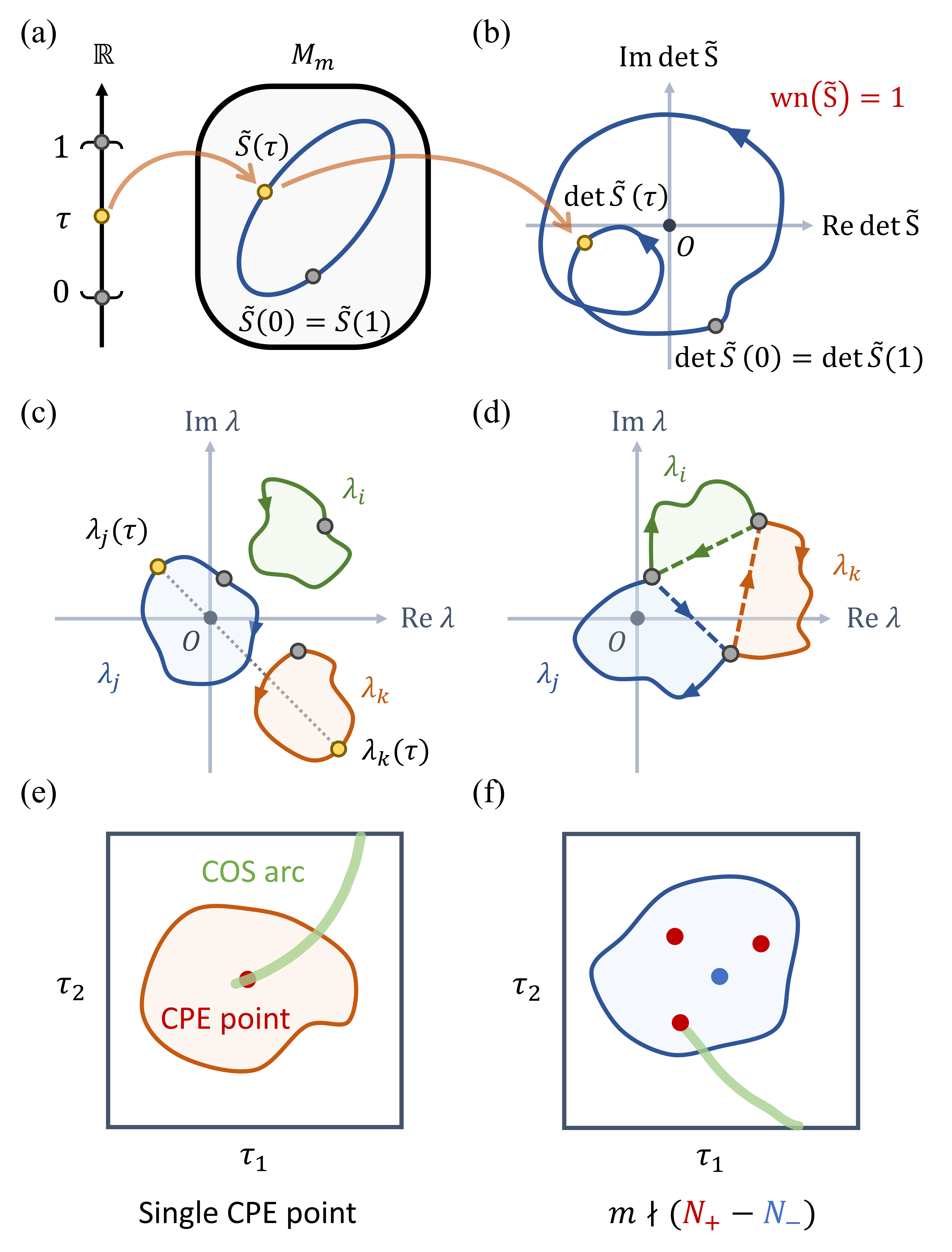}
    \caption{Topological winding guaranteed coherent orthogonal scattering. (a) A loop of $\tilde{S}(\tau) \in M_m$. (b) $\det \tilde{S}(\tau)$ has a winding number $\operatorname{wn}(\tilde{S})$. If $m$ does not divide $\operatorname{wn}(\tilde{S})$, denoted as $m \nmid \operatorname{wn}(\tilde{S})$, coherent orthogonal scattering must occur for some $\tilde{S}(\tau_0)$. (c,d) The path of eigenvalues when (c) $\red{\nu}(\tilde{S})=\operatorname{id}$, (d) $\red{\nu}(\tilde{S})\neq \operatorname{id}$. (e, f) CPE points and COS arcs in a 2D parameter space. (e) A topological COS arc exists associated with each CPE point. (f) A topological COS arc exists across a simple loop that surrounds multiple CPE points if $m \nmid (N_+ - N_-)$. COS arcs within the loop are not shown. }
    \label{fig:topological_winding}
\end{figure}

Here, we present a simple sufficient analytical criterion for coherent orthogonal scattering\rred{. Our idea is to examine} not a single, but a \emph{loop} of scattering submatrices (Fig.~\ref{fig:topological_winding}a). 
Consider a continuous map
\begin{align}
\tilde{S}: \; [0,1] \to M_{m}, \quad 
\tau \mapsto \tilde{S}(\tau),
\end{align}
where $ \tilde{S}(0) = \tilde{S}(1)$.
We ask whether 
\begin{equation}
\exists \, \tau_0 \in [0,1], \quad   0 \in W(\tilde{S}(\tau_0)),
\end{equation} 
\rred{motivated by the question of whether coherent orthogonal scattering can be achieved by tuning a certain parameter.}

Our criterion is as follows. We calculate $\det \tilde{S}(\tau)$ and check whether 
\begin{equation}
\exists \, \tau_0 \in [0,1], \quad \det \tilde{S}(\tau_0) = 0.    
\end{equation}
If true, then $\tilde{S}(\tau_{0})$ exhibits coherent perfect extinction, and thus also coherent orthogonal scattering. If false, then $\det \tilde{S}$ traces a closed path in $\mathbb{C}\backslash\{0\}$ (Fig.~\ref{fig:topological_winding}b), which has a well-defined winding number around the origin~\cite{roe2015,guo2023}:
\begin{equation}
\operatorname{wn}(\tilde{S})  \coloneqq \varphi(1) - \varphi(0) \in \mathbb{Z},    
\end{equation}
where $2\pi \varphi(\tau)$ is the continuous polar angle of $\det \tilde{S}(\tau)$. 
\red{If $m$ does not divide $\operatorname{wn}(\tilde{S})$, denoted as $m \nmid \operatorname{wn}(\tilde{S})$, then}
\begin{equation} \label{eq:criterion_to_prove}
\exists \, \tau_0 \in [0,1], \quad 0 \in W(\tilde{S}(\tau_0)).  
\end{equation} 
\red{If $m$ divides $\operatorname{wn}(\tilde{S})$, denoted as $m \mid \operatorname{wn}(\tilde{S})$~\cite{stein2009}, such $\tau_{0}$ may or may not exist.} See SM, Sec.~\ref{SI-sec:examples_scattering_submatrices} for examples.

The criterion~(\ref{eq:criterion_to_prove}) is our main result. It can be proved using a theorem established in Refs.~\cite{spitkovsky1974,guo2023i}. Here, we outline the essential ideas. Consider the $m$ eigenvalues of $\tilde{S}(\tau)$, denoted as $\lambda_{1}(\tau)$, ..., $\lambda_{m}(\tau)$. Since $\tilde{S}(\tau)$ is a continuous map from $[0,1]$ to $\operatorname{GL}_n$, we can choose all $\lambda_{j}(\tau)$'s to be continuous functions from $[0,1]$ to $\mathbb{C}\backslash\{0\}$~\cite{kato1995,li2019a} (see SM, Sec.~\ref{SI-Sec:Kato}.). Since $\tilde{S}(0)=\tilde{S}(1)$, the set of $\lambda_{j}(1)$'s must coincide with the $\lambda_{j}(0)$'s, up to a permutation:
\begin{equation}
\lambda_{j}(1) = \lambda_{\red{\nu}_{j}}(0); \quad 
\red{\nu} (\tilde{S}) = \begin{pmatrix}
1 & 2 &\dots & m \\
\red{\nu}_{1} & \red{\nu}_{2} &\dots &\red{\nu}_{m}
\end{pmatrix}. 
\end{equation}
For simplicity, let us first consider the special case when the permutation is the identity ($\red{\nu} (\tilde{S}) = \operatorname{id}$). Then, each $\lambda_j(\tau)$ traces out a loop in the punctuated complex plane with a well-defined winding number about the origin:
\begin{equation}\label{eq:def_wind_lambda_j}
\operatorname{wn}(\lambda_{j}) \in \mathbb{Z}, \qquad j=1,\dots,m.    
\end{equation}
There is a simple yet important relation:
\begin{equation}\label{eq:wn_S_and_wn_lambda}
\operatorname{wn}(\tilde{S}) = \sum_{j=1}^{m} \operatorname{wn}(\lambda_{j}),    
\end{equation}
which follows from the fact that~\cite{roe2015} 
\begin{equation}
    \det \tilde{S}(\tau) = \prod_{i=1}^{m} \lambda_{j}(\tau).
\end{equation}
Combining Eq.~(\ref{eq:wn_S_and_wn_lambda}) with our premise $m\nmid \operatorname{wn}(\tilde{S})$, there must exist a pair of eigenvalue loops, say $\lambda_{j}$ and $\lambda_{k}$, that have different winding numbers. As $\tau$ runs over $[0,1]$, the line segment $\overline{\lambda_{j}(\tau)\lambda_{k}(\tau)}$ connecting these two eigenvalues must sweep across the origin at least once, implying that 
\begin{equation}
\exists \tau_0 \in [0,1], \quad 0 \in \overline{\lambda_{j}(\tau_0)\lambda_{k}(\tau_0)} \subseteq W(\tilde{S}(\tau_{0})).   
\end{equation}
Here we used the fact that the numerical range contains the convex hull of the eigenvalues~\cite{toeplitz1918,hausdorff1919}. This completes the proof for the $\red{\nu}(\tilde{S})=\operatorname{id}$ case. To prove the $\red{\nu}(\tilde{S}) \neq \operatorname{id}$ cases (Fig.~\ref{fig:topological_winding}d), we construct $\tilde{S}': [0,1] \to \operatorname{GL}_{m}$, which undoes the permutation $\red{\nu}(\tilde{S})$ along the line segments $\overline{\lambda_{\red{\nu}_j}(0)\lambda_{j}(0)}$. The concatenation of $\tilde{S}$ and $\tilde{S}'$ becomes a loop with the identity permutation, allowing the previous analysis to complete the proof~\cite{guo2023i}.

\rred{With the mathematical groundwork established,} we now discuss the physical implications of criterion~(\ref{eq:criterion_to_prove}). First, our criterion establishes a deeper connection between coherent orthogonal scattering and coherent perfect extinction. Consider an $\tilde{S} \in M_m$ with $m\ge 2$ that depends continuously on two parameters $\tau = (\tau_{1},\tau_{2}) \in \Omega$, where $\Omega$ is a compact and simply connected subset of $\mathbb{R}^2$. Coherent perfect extinction generically occurs in $\Omega$ at isolated points, known as CPE points~\cite{guo2023}. Along a simple closed curve that encloses a single generic CPE point (Fig.~\ref{fig:topological_winding}e),
\begin{equation}
\operatorname{wn}(\tilde{S})=\pm 1.    
\end{equation}
Since $m \nmid \pm 1$, according to~(\ref{eq:criterion_to_prove}), coherent orthogonal scattering must occur somewhere along the loop. We can deform the loop continuously and deduce that coherent orthogonal scattering must occur along at least one arc, referred to as a COS arc. Such a COS arc is topologically protected. We can extend this analysis to the case of multiple CPE points (Fig.~\ref{fig:topological_winding}f). Suppose there are $N_{+}$ and $N_{-}$ CPE points with winding numbers $+1$ and $-1$, respectively, enclosed by a simple closed curve. A COS arc must exist across the loop if
\begin{equation}
m \nmid (N_{+}-N_{-}).    
\end{equation}

Second, we investigate the topological winding of scattering submatrices due to a resonance. We consider a single-resonance scattering submatrix~\cite{haus1984,fan2003,suh2004}
\begin{equation}\label{eq:S_tilde_sub_matrix}
\tilde{S}(\omega) = \tilde{C} + \frac{\tilde{\bm{d}} \tilde{\bm{\kappa}}^T}{-i(\omega-\omega_{0})+\gamma} \in M_m,  
\end{equation}
where $\omega_0$ and $\gamma$ are the resonant frequency and decay rate, respectively. The column vectors $\tilde{\bm{\kappa}}$ and $\tilde{\bm{d}}$ represent the coupling rates between the resonator and the input and output waves in the ports, respectively. $\tilde{C} \in M_m$ describes the background scattering. We can prove that $\det \tilde{S}(\omega)$ traces out a circle in the complex plane with
\begin{equation}
\label{eq:main_result}
\operatorname{wn}(\tilde{S}) =  \begin{cases}
0,  & \operatorname{Re} \tilde{\rho} < \frac{1}{2} \\
1, & \operatorname{Re} \tilde{\rho} > \frac{1}{2}
\end{cases},    \quad 
\tilde{\rho} \coloneqq -\frac{\tilde{\bm{\kappa}}^T \tilde{C}^{-1} \tilde{\bm{d}}}{2 \gamma} \in \mathbb{C}. 
\end{equation}
When $\operatorname{Re} \tilde{\rho} = \frac{1}{2}$, the circle passes through the origin, and $\operatorname{wn}(\tilde{S})$ is undefined. (See the proof in SM, Sec.~\ref{SI-sec:matrix_determinant_lemma}-\ref{SI-sec:topological_phases_resonance}.) Combining (\ref{eq:main_result}) and (\ref{eq:criterion_to_prove}), we conclude that when $m \ge 2$, 
\begin{equation}\label{eq:condition_resonance}
    \operatorname{Re} \tilde{\rho} \ge \frac{1}{2} \implies \exists \, \omega_c \in (-\infty, \infty), \; 0 \in W(\tilde{S}(\omega_c)). 
\end{equation}

\begin{figure}[htbp]
    \centering
    \includegraphics[width=0.36\textwidth]{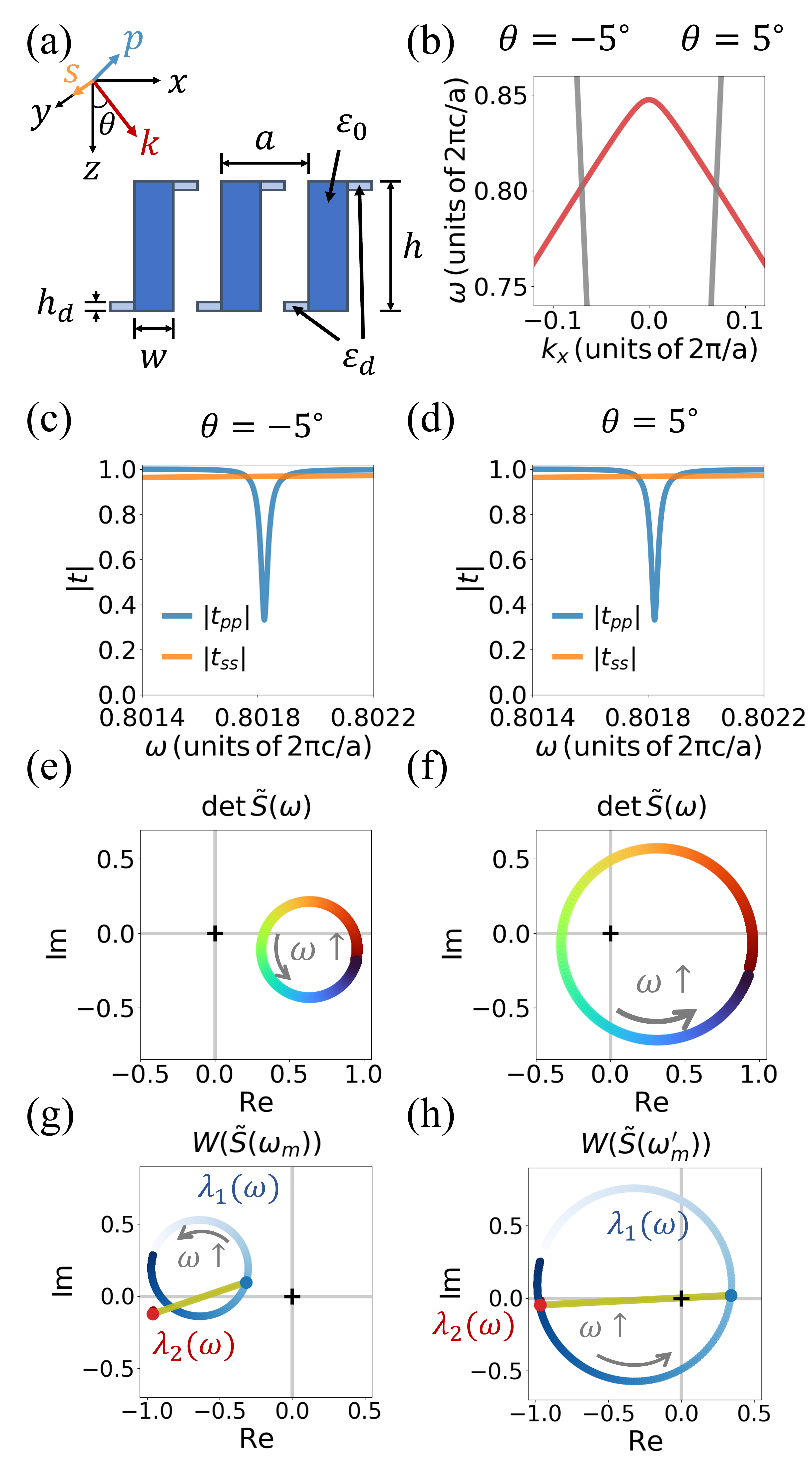}
    \caption{\red{A grating example. (a) Geometry. (b) Band structure. (c,e,g) Results for $\theta= -\ang{5}$. (c) $|t_{pp}|(\omega)$ and $|t_{ss}|(\omega)$.  (e) $\det \tilde{S}(\omega)$. (g) Trajectories of $\lambda_1(\omega)$ and $\lambda_2(\omega)$. The red and blue dots denote   $\lambda_1(\omega_m)$ and $\lambda_2(\omega_m)$, respectively. The line segment denotes $W(\tilde{S})$ at $\omega_m = 0.80182\times 2\pi c /a$. (d,f,h) Corresponding results for $\theta= \ang{5}$. In (h), $\omega'_m = 0.80220\times 2\pi c /a$.} }
    \label{fig:demo_structure}
\end{figure}

\red{We illustrate our theory with two numerical examples. In both cases, coherent orthogonal scattering corresponds to complete polarization conversion in the transmission. \rred{The effect of coherent orthogonal scattering can also be demonstrated in other (e.g., spatial) degrees of freedom. Our theoretical analysis applies to these cases as well.}}

\red{The first example is a 1D dielectric grating taken from Ref.~\cite{zhou2016}. The structure has a periodicity $a$. Each unit cell has a central rod of width $w=0.45 a$ and height $h=1.5 a$ with a dielectric constant $\varepsilon_0 = 2.1025$, and two additional pieces of width $(a-w)/2 = 0.275 a$ and height $h_d = 0.1 a$ with a dielectric constant $\varepsilon_d = 1.21$ (Fig.~\ref{fig:demo_structure}a). This grating supports a band of $p$-polarized guided resonances near $\omega = 0.8 \times 2\pi c /a$ (Fig.~\ref{fig:demo_structure}b). We consider an incident plane wave with an incident angle $\theta=\pm 5^\circ$, and calculate the $2\times 2$ transmission matrix}
\begin{equation}\label{eq:S_2by2_special}
\tilde{S}(\omega) = \begin{pmatrix}
t_{pp} & 0  \\
0 & t_{ss}
\end{pmatrix},
\end{equation}
\red{which is diagonal due to the $xz$ mirror symmetry. Consequently, we have $\lambda_1 = t_{pp}$, $\lambda_2 = t_{ss}$, $\det \tilde{S} = t_{pp} \, t_{ss}$, and $W(\tilde{S})$ is the line segment connecting $t_{pp} $ and $t_{ss}$.}

\red{Figs.~\ref{fig:demo_structure}(c,e,g) present the results for $\theta=\ang{-5}$. Fig.~\ref{fig:demo_structure}c shows that $|t_{pp}(\omega)|$ exhibits a single guided resonance~\cite{Fan2002} while $|t_{ss}(\omega)|$ almost remains constant. We fit the complex spectra with Eq.~(\ref{eq:S_tilde_sub_matrix}) and obtain $\tilde{\rho} = 0.334 - 0.009 i$. Since $\operatorname{Re}\tilde{\rho}<\frac{1}{2}$, Eq.~(\ref{eq:main_result}) predicts that $\operatorname{wn}(\tilde{S}) = 0$. Indeed, Fig.~\ref{fig:demo_structure}e shows that $\det \tilde{S}(\omega)$ traces out a circle that does not enclose the origin. Fig.~\ref{fig:demo_structure}g shows that $\lambda_1(\omega)$ forms a circle that does not enclose the origin, while $\lambda_2(\omega)$ almost remains constant. Fig.~\ref{fig:demo_structure}g also depicts $W(\tilde{S})$ at $\omega_m = 0.80182\times 2\pi c /a$, where $c(\tilde{S})$ reaches minimum. $W(\tilde{S}(\omega_m))$ forms a line segment connecting $\lambda_1(\omega_m)$ and $\lambda_2(\omega_m)$ with $0 \notin W(\tilde{S}(\omega_m))$. Thus, coherent orthogonal scattering does not occur in the entire frequency range.}

\red{Figs.~\ref{fig:demo_structure}(d,f,h) present the results for $\theta=\ang{5}$. The transmission amplitude spectra in Fig.~\ref{fig:demo_structure}d are similar to those in Fig.\ref{fig:demo_structure}c; the differences are in the phases. We fit the complex spectra and obtain $\tilde{\rho} = 0.666 - 0.009 i$. Since $\operatorname{Re}\tilde{\rho} > \frac{1}{2}$, Eq.~(\ref{eq:main_result}) states that $\operatorname{wn}(\tilde{S}) = 1$, and (\ref{eq:condition_resonance}) predicts that coherent orthogonal scattering must occur at some frequency. Indeed, Fig.~\ref{fig:demo_structure}\rred{f} shows that $\det \tilde{S}(\omega)$ traces out a circle that winds around the origin once counterclockwise. Fig.~\ref{fig:demo_structure}\rred{h} shows that $\lambda_1(\omega)$ winds around the origin once, while $\lambda_2(\omega)$ almost remains constant. Fig.~\ref{fig:demo_structure}\rred{h} depicts $W(\tilde{S})$ at $\omega'_m = 0.80220\times 2\pi c /a$, which is the line segment connecting $\lambda_1(\omega'_m)$ and $\lambda_2(\omega'_m)$ with $0 \in W(\tilde{S}(\omega'_m))$. Explicit calculation shows that coherent orthogonal scattering occurs at $\omega'_m$ for the incident polarization $(0.86\hat{p} + 0.51 \mathrm{e}^{i\rred{\eta}}\hat{s})$ with $\rred{\eta}$ an arbitrary phase.}

\begin{figure}[htbp]
    \centering
    \includegraphics[width=0.36\textwidth]{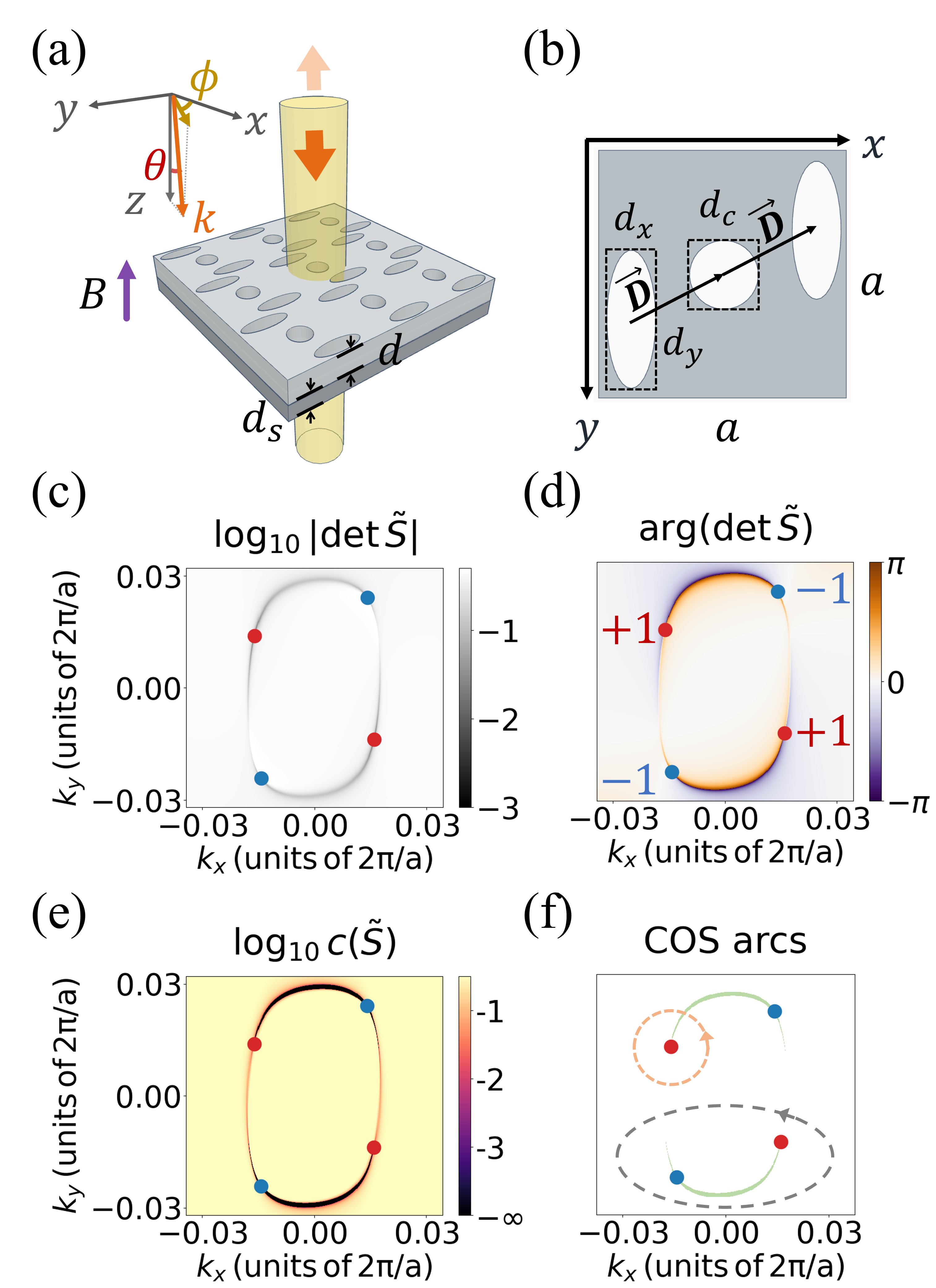}
    \caption{\red{A photonic crystal slab example. (a) Geometry. (b) Unit cell.} (c-f) Results for $\tilde{S}(k_x,k_y)$ at $\omega=0.386\times 2\pi c /a$. (\rred{c}) $\log_{10} \lvert \det \tilde{S}(k_x,k_y)\rvert$.  (\rred{d}) $\arg [\det \tilde{S} (k_x,k_y)]  $. (\rred{e}) $\log_{10}  c[\tilde{S}  (k_x,k_y)]$. (\rred{f}) The green region: the COS arcs. The red and blue dots: CPE points with winding numbers $+1$ and $-1$, respectively.
    }
    \label{fig:S_kx_ky}
\end{figure}

\red{The second example is a 2D magneto-optical photonic crystal slab.} The structure consists of two layers (Fig.~\ref{fig:S_kx_ky}a). The first layer is a photonic crystal slab with a lattice constant $a$ and a thickness $d=\qty{0.3}{\text{$a$}}$. It has a square lattice, with each unit cell having a circular hole at the center with a diameter $d_{c} = \qty{0.28}{\text{$a$}}$, and two elliptical holes displaced by $\pm \vec{D} = \pm (\qty{0.375}{\text{$a$}}, \qty{-0.18}{\text{$a$}})$ from the center, where the major axis is $d_{y} = \qty{0.56}{\text{$a$}}$ and the minor axis is $d_{x} = \qty{0.20}{\text{$a$}}$ (Fig.~\ref{fig:S_kx_ky}b). The second layer is a uniform slab with a thickness $d_{s} = \qty{0.2}{\text{$a$}}$. Both slabs are made of a magneto-optical material with a relative permittivity tensor
\begin{equation}
\varepsilon = \begin{pmatrix}
\varepsilon_r  &  i \varepsilon_a & 0  \\
-i \varepsilon_a & \varepsilon_r & 0  \\
0 & 0 & \varepsilon_r 
\end{pmatrix},
\end{equation}
where $\varepsilon_{r} = 12.0$, $\varepsilon_a = 5.0$. \red{We consider a plane wave with a fixed frequency $\omega=0.386\times 2\pi c /a$ incident from the top side, and calculate the $2\times 2$ transmission matrix}
\begin{equation}\label{eq:S_2by2_general}
\tilde{S}(k_x,k_y) = \begin{pmatrix}
t_{xx} & t_{xy}  \\
t_{yx} & t_{yy}
\end{pmatrix},
\end{equation}
\red{where $(k_x,k_y)$ are the transverse wavevector components. Unlike that in Eq.~(\ref{eq:S_2by2_special}), here $\tilde{S}$ is a generic $2 \times 2$ matrix. In general, $W(\tilde{S})$ forms an elliptical disk with foci at $\lambda_1$ and $\lambda_2$ (see SM, Sec.~\ref{SI-subsec:geometry}.)}

Fig.~\ref{fig:S_kx_ky}c plots 
\begin{equation}
    \log_{10}  |\det \tilde{S} (k_x, k_y)|,
\end{equation}
which exhibits sharp variations near the isofrequency contour of the guided resonances. There are four CPE points where $\det \tilde{S} = 0$. Fig.~\ref{fig:S_kx_ky}d shows that $\det \tilde{S}(k_x, k_y)$ exhibits a $2\pi$ ($-2\pi$) phase winding around each red (blue) CPE point, indicating a winding number of $+1$ ($-1$). The two red (blue) CPE points are related by the $C_2$ symmetry. Fig.~\ref{fig:S_kx_ky}e and~\ref{fig:S_kx_ky}f  show that $c[S(k_x,k_y)]=0$ occurs in two disjoint COS arcs. Each COS arc occupies a two-dimensional domain that contains one red and one blue CPE point in its interior. These COS arcs are topologically protected. For instance, the orange loop encircles a single CPE point, so there must exist a COS arc that intersects the orange loop. In contrast, the grey loop encircles two CPE points with a total winding number $N_+ - N_- = 1-1 = 0$. Since $2 \mid 0$, a COS arc may or may not cross the grey loop. Indeed, in this case, the COS arc is contained entirely inside the loop. \red{(See SM, Sec.~\ref{SI-sec:additional_numerical_results} for additional numerical results.)} These numerical examples confirm our theoretical predictions.

In conclusion, we have introduced the phenomenon of coherent orthogonal scattering\rred{, which occurs when the output wave becomes orthogonal to the reference output state in the absence of scatterers. This effect leads to a unity extinction coefficient and complete mode conversion.} We have revealed the topological nature of this effect by establishing a connection between the winding number of scattering submatrices and the existence of coherent orthogonal scattering. We have discovered topologically protected coherent orthogonal scattering arcs in a two-dimensional parameter space, which connect coherent perfect extinction points with opposite winding numbers. These findings provide a deeper understanding of the interplay between topology and scattering phenomena, paving the way for novel approaches to manipulate coherent wave interactions in various physical systems.

\begin{acknowledgments}
This work is funded by a Simons Investigator in Physics
grant from the Simons Foundation (Grant No. 827065),
by a Multidisciplinary University Research Initiative
(MURI) grant from the U. S. Office of Naval Research (Grant No. N00014-20-1-2450), and by U. S. Department of Energy (Grant No. DE-FG02-07ER46426).
\end{acknowledgments}

\appendix 

\section{Numerical range}\label{SI-sec:numerical_range}

Here we briefly review the numerical range. We refer readers to  Refs.~\cite{horn2008,gustafson1997} for a more detailed introduction.  

\subsection{Definition}\label{SI-subsec:numerical_range_definition}

\begin{definition}[Numerical range]
Let $A \in M_{n}$. The numerical range of $A$ is the subset of $\mathbb{C}$ defined as 
\begin{equation}
W(A) \coloneqq \{x^{\dagger} A x: x \in \mathbb{C}^{n}, x^{\dagger}x = 1\}.    
\end{equation}
\end{definition}
\begin{remark}
The numerical range is also known as the field of values in the literature~\cite{horn2008,gustafson1997}. 
\end{remark}

\subsection{Properties}\label{SI-subsec:properties}

The numerical range has many nice properties. 

\begin{notation}
Let $D \subseteq \mathbb{C}$. $D$ is compact if it is closed and bounded; $D$ is convex if a line segment $L$ joining two points in $D$ satisfies $L \subseteq D$~\cite{munkres2000}.    
\end{notation}

\begin{property}[Compactness]
For all $A\in M_{n}$, $W(A)$ is a compact subset of $\mathbb{C}$.
\end{property} 
\begin{proof}
$W(A)$ is the range of the unit sphere $\{x\in \mathbb{C}^n: x^{\dagger}x=1\}$ under the continuous map $x \mapsto x^{\dagger}A x$. Since the unit sphere is a compact set and the continuous image of a compact set is compact~\cite{munkres2000}, $W(A)$ is compact.
\end{proof}
\begin{property}[Convexity]\label{SI-property:convexity} For all $A \in M_{n}$, $W(A)$ is a convex subset of $\mathbb{C}$.  
\end{property}
\begin{proof}
See Ref.~\cite{horn2008} pp.~17-20, Sec.~1.3. One first reduces the problem to the $2\times 2$ case, then proves the convexity in the $2\times 2$ case.     
\end{proof}
\begin{remark}
The convexity of the numerical range is known as the \emph{Toeplitz-Hausdorff theorem}. Toeplitz~\cite{toeplitz1918} showed that the outer boundary curve of $W(A)$ is convex; Hausdorff~\cite{hausdorff1919} showed that $W(A)$ is simply connected.    
\end{remark}

\begin{property}[Spectral containment]\label{SI-property:spectral_containment} For all $A \in M_{n}$, 
\begin{equation}
\bm{\lambda}(A) \subseteq W(A). 
\end{equation}    
\end{property}
\begin{proof}
Suppose that $\lambda \in \bm{\lambda}(A)$. Then there exists a unit vector $x \in \mathbb{C}^n$ for which $Ax = \lambda x$ and hence $\lambda = \lambda x^{\dagger} x =  x^{\dagger} (\lambda x) = x^{\dagger} A x \in W(A)$.     
\end{proof}

\subsection{Geometry}\label{SI-subsec:geometry}
We discuss two questions about the geometric shapes of the numerical ranges. 
\begin{question}\label{question:Kippenhahan}
Which subsets of $\mathbb{C}$ occur as the numerical range of some $n \times n$ matrices? 
\end{question}

\begin{question}\label{question:determine_WA}
Given $A \in M_{n}$, how to determine the shape of $W(A)$ from the entries of $A$?    
\end{question}

Question~\ref{question:Kippenhahan} was originally raised by Kippenhahn in 1951~\cite{kippenhahn1951a} (see also the English translation~\cite{zachlin2008}). In that seminal paper, Kippenhahn characterized the numerical range of a matrix as being the convex hull of a certain algebraic curve associated with the matrix. This ``boundary generating curve" has led to many useful results and remained a topic of current research~\cite{helton2007,henrion2010,helton2011}. Question~\ref{question:determine_WA} is a harder question and remains open. 

When $n=2$, the answers to both questions are known:
\begin{theorem}\label{theorem:2_by_2_numerical_range}
Suppose $A \in M_{2}$ has eigenvalues $\lambda_{1}$, $\lambda_{2}$. Then $W(A)$ is an elliptical disk with foci $\lambda_{1}$, $\lambda_{2}$ and minor axis length $\sqrt{\mathrm{Tr}(A^{\dagger}A) - \lvert \lambda_{1} \rvert^2 - \lvert \lambda_{2} \rvert^2}$. 
\end{theorem}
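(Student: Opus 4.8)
The plan is to exploit the covariance of the numerical range under the natural group of affine–unitary operations and then to compute $W(A)$ from an explicit parametrization of the unit vectors in $\mathbb{C}^2$. First I would record three elementary facts, each immediate from the definition $z=x^{\dagger}(\cdot)x$: $W(U^{\dagger}AU)=W(A)$ for unitary $U$, $W(A+\mu I)=W(A)+\mu$, and $W(\zeta A)=\zeta W(A)$ for $\zeta\in\mathbb{C}$. Crucially, the conclusion of the theorem is covariant under these same operations: the foci $\lambda_1,\lambda_2$ are unchanged by unitary conjugation, are shifted by $\mu$ under translation, and are scaled by $\zeta$ under $A\mapsto\zeta A$; and a short trace computation shows $\operatorname{Tr}(A^{\dagger}A)-|\lambda_1|^2-|\lambda_2|^2$ is invariant under conjugation and translation and is multiplied by $|\zeta|^2$ under $A\mapsto\zeta A$ — exactly matching how the length of the minor axis must transform. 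Hence, by Schur triangularization, a translation sending $\tfrac12(\lambda_1+\lambda_2)$ to the origin, and a rotation, it suffices to treat $A=\begin{pmatrix} e & c\\ 0 & -e\end{pmatrix}$ with $e=\tfrac12|\lambda_1-\lambda_2|\ge 0$ and $c\in\mathbb{C}$; for this $A$ one has $\operatorname{Tr}(A^{\dagger}A)=2e^2+|c|^2$ and $|\lambda_1|^2+|\lambda_2|^2=2e^2$, so $\operatorname{Tr}(A^{\dagger}A)-|\lambda_1|^2-|\lambda_2|^2=|c|^2$.

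Next I would parametrize. Up to an irrelevant global phase every unit vector in $\mathbb{C}^2$ is $x=(\cos\tfrac t2,\ e^{i\psi}\sin\tfrac t2)^{T}$ with $t\in[0,\pi]$ and $\psi\in[0,2\pi)$, and a direct computation gives
\[
x^{\dagger}Ax \;=\; e\cos t \;+\; \tfrac{c}{2}\,\sin t\,e^{i\psi}.
\]
Writing $s=\cos t\in[-1,1]$ and letting the free variable $\psi$ absorb the phase of $c$, we obtain $W(A)=\{\,es+\tfrac{|c|}{2}\sqrt{1-s^2}\,e^{i\psi}:\ s\in[-1,1],\ \psi\in[0,2\pi)\,\}$. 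For each fixed $s$ this is the circle $(X-es)^2+Y^2=\tfrac{|c|^2}{4}(1-s^2)$ in the plane $z=X+iY$, so the task reduces to showing that the union of these circles over $s\in[-1,1]$ is exactly the filled ellipse $E=\{\,X^2/(e^2+|c|^2/4)+Y^2/(|c|^2/4)\le 1\,\}$, whose semi-axes are $\alpha=\sqrt{e^2+|c|^2/4}$ and $\beta=|c|/2$, hence whose focal distance is $\sqrt{\alpha^2-\beta^2}=e$ — that is, with foci $\pm e$ and minor axis $|c|$.

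To establish this last equality, note that $(X_0,Y_0)$ lies on one of the circles iff the quadratic $f(s):=\alpha^2 s^2-2eX_0 s+(X_0^2+Y_0^2-\beta^2)$ has a root $s\in[-1,1]$. Its discriminant simplifies to $4\alpha^2\beta^2\,(1-X_0^2/\alpha^2-Y_0^2/\beta^2)$, which is nonnegative precisely when $(X_0,Y_0)\in E$; and when $c\ne 0$ one has $f(\pm1)=(e\mp X_0)^2+Y_0^2\ge 0$ while the vertex $s^{\ast}=eX_0/\alpha^2$ lies in $(-1,1)$ (since $e<\alpha$ and $|X_0|\le\alpha$), so a nonnegative discriminant forces a root in $[-1,1]$. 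This yields both inclusions at once, and undoing the reductions turns $E$ into the elliptical disk with foci $\lambda_1,\lambda_2$ and minor-axis length $\sqrt{\operatorname{Tr}(A^{\dagger}A)-|\lambda_1|^2-|\lambda_2|^2}$. The degenerate case $c=0$ (equivalently, $A$ unitarily diagonalizable) gives the line segment $[\lambda_1,\lambda_2]$, and $c=0$ together with $\lambda_1=\lambda_2$ a single point — both consistent with the statement read as permitting flat ellipses.

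I expect the genuinely nontrivial step to be the last one: confirming that the sweep of circles fills the \emph{entire} closed elliptical disk and nothing outside it. Both directions rest on the discriminant identity together with the sign bookkeeping for $f(\pm1)$ and the position of the vertex; the easy pitfall is to verify only that $f$ has a real root and forget to check that the root actually lands in $[-1,1]$, and to overlook the degenerate $c=0$ case where $\alpha=e$ and the ellipse collapses to a segment.
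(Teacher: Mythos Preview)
Your argument is correct. The paper does not supply its own proof here but simply cites Lemma~1.3.3 of Horn and Johnson, \emph{Topics in Matrix Analysis}; your reduction via the affine--unitary covariance of $W(\cdot)$ to an upper-triangular matrix with eigenvalues $\pm e$, followed by the explicit Bloch-sphere parametrization and the discriminant analysis identifying the one-parameter sweep of circles with the filled ellipse, is precisely the standard route taken in that reference.
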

\begin{proof}
See Ref.~\cite{horn2008}, pp.~20, Lemma~1.3.3.  
\end{proof}

When $n=3$, the answers are known but more complicated. Question~\ref{question:Kippenhahan} was answered by Kippenhahn~\cite{kippenhahn1951a}, who showed that the shape of $W(A)$ with $A \in M_{3}$ can be a (possibly degenerate) triangle, an ellipse, a cone-like shape, an ovular shape, or a shape with a flat portion on the boundary. Question~\ref{question:determine_WA} was answered by Keeler et al.~in 1997~\cite{keeler1997}, who provided a series of tests to determine the shape of $W(A)$ from the entries of $A \in M_{3}$. 

When $n>3$, the answers are generally unknown except in some special cases~\cite{brown2004,gau2006,militzer2017,cox2020,geryba2021}. Ref.~\cite{helton2011} provides some useful criteria that address Question~\ref{question:Kippenhahan} in a certain sense.

\subsection{Numerical algorithm}\label{SI-subsec:numerical_algorithm}

Although an analytical characterization of $W(A)$ for arbitrary $A\in M_{n}$ is lacking, a numerical procedure exists for determining and plotting $W(A)$. As $W(A)$ is convex and compact, it is sufficient to determine its boundary, $\partial W(A)$. The approach involves calculating many well-spaced points on $\partial W(A)$ and identifying the support lines of $W(A)$ at these points. The convex hull of these boundary points provides an \emph{internal} convex polygonal approximation to $W(A)$ that is contained within $W(A)$, while the intersection of the half-spaces determined by the support lines provides an \emph{external} convex polygonal approximation to $W(A)$ that contains $W(A)$. A detailed procedure is available in Ref.~\cite{horn2008}, pp.~33-39.

\section{Examples of scattering submatrices}\label{SI-sec:examples_scattering_submatrices}

Here we provide examples of scattering submatrices that  exhibit coherent orthogonal scattering or not:
\begin{equation}\label{eq:example_S}
\tilde{S} = \begin{cases}
0, &m=1\\ 
\sigma_{x}, &m=2 \\
\sigma_{x} \oplus I_{m-2} &m\geq 3
\end{cases}  
\end{equation}
exhibits coherent orthogonal scattering  because \begin{equation}
\tilde{\bm{a}} = (1,0,\dots,0)^T \in \mathbb{C}^m 
\end{equation}
is a solution to Eq.~(\ref{eq:COS_definition}). Here $\oplus$ denotes the direct sum of matrices,
\begin{equation}
\sigma_{x} = \begin{pmatrix}
0 & 1  \\
1 & 0
\end{pmatrix},   
\end{equation}
and $I_{m}$ denotes the $m \times m$ identity matrix. In contrast,
\begin{equation}
\tilde{S}' =   I_{m}, \quad m\geq 1,   \label{eq:example_S_prime}
\end{equation}
does not exhibit coherent orthogonal scattering, since 
\begin{equation}
\tilde{\bm{a}}^{\dagger} \tilde{S}' \tilde{\bm{a}} = |\tilde{\bm{a}}|^2 > 0, \quad \forall \tilde{\bm{a}} \neq \bm{0}.  
\end{equation}
Figs.~\ref{fig:numerical_range_example}a and~\ref{fig:numerical_range_example}b plot the numerical range of $\tilde{S}$ in Eq.~(\ref{eq:example_S}) and $\tilde{S}'$ in Eq.~(\ref{eq:example_S_prime}), respectively. We see that $0 \in W(\tilde{S})$ and $0 \notin W(\tilde{S}')$, which confirms condition~(\ref{eq:inclusion_relation}). 

\begin{figure}[htbp]
    \centering
\includegraphics[width=0.5\textwidth]{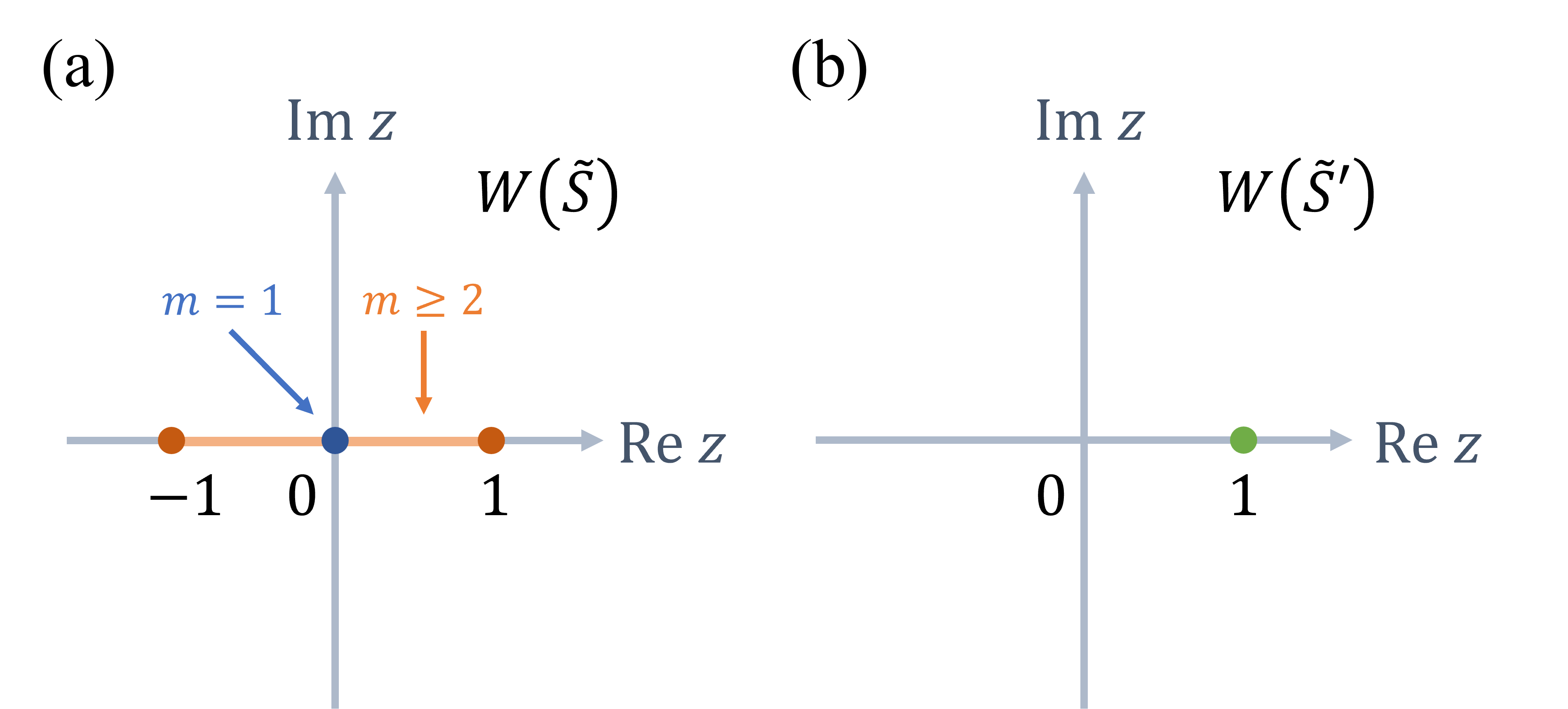}
    \caption{Numerical range of (a) $\tilde{S}$ in Eq.~(\ref{eq:example_S}); (b) $\tilde{S}'$ in Eq.~(\ref{eq:example_S_prime}). }
    \label{fig:numerical_range_example}
\end{figure}

Now we show that our criterion~(\ref{eq:criterion_to_prove}) is sufficient but not necessary: If 
\begin{equation}\label{eq:m_divides_wind_S_SI}
m \mid \operatorname{wn(\tilde{S})},    
\end{equation} 
there may or may not exist $\tau_{0} \in [0,1]$ such that $0 \in W(\tilde{S}(\tau_0))$. For example, consider
\begin{equation}
    \tilde{S}_{k}(\tau) = e^{i2\pi k\tau} \tilde{S}, \quad \tilde{S}_{k}'(\tau) = e^{i2\pi k\tau} \tilde{S}', \quad k \in \mathbb{Z}.    
\end{equation}
where $\tilde{S}$ and $\tilde{S}'$ are defined in Eqs.~(\ref{eq:example_S}) and (\ref{eq:example_S_prime}), respectively. One can show that 
\begin{equation}
\operatorname{wn}(\tilde{S}_{k}) = \operatorname{wn}(\tilde{S}'_{k}) = mk.    
\end{equation}
Hence 
\begin{equation}
m \mid  \operatorname{wn}(\tilde{S}_{k}), \quad m \mid  \operatorname{wn}(\tilde{S}'_{k}).  
\end{equation}
Nonetheless, for all $\tau \in [0,1]$, 
\begin{equation}
0 \in W(\tilde{S}_{k}(\tau)), \quad 0 \notin W(\tilde{S}'_{k}(\tau)).    
\end{equation}

\section{Kato's theorem}\label{SI-Sec:Kato}
\begin{theorem}[Kato, 1966]
Suppose that $D$ is a real interval and that $A: D \to M_{n}$ is a continuous function. Then there exist $n$ eigenvalues (counted with algebraic multiplicities) of $A(t)$ that can be parameterized as continuous functions $\lambda_{1}(t), \dots, \lambda_{n}(t)$ from $D$ to $\mathbb{C}$.
\end{theorem}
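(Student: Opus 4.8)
\section*{Proof proposal}

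The plan is to reduce the statement to a purely polynomial one and then argue by induction on $n$. The algebraic multiplicity of an eigenvalue of $A(t)$ is its multiplicity as a root of the characteristic polynomial $p_{t}(\lambda)=\det(\lambda I-A(t))=\lambda^{n}+c_{n-1}(t)\lambda^{n-1}+\cdots+c_{0}(t)$, and each $c_{j}(t)$ is a polynomial in the entries of $A(t)$, hence continuous. So it suffices to show: \emph{for a continuous family of monic degree-$n$ polynomials $p_{t}$ over a real interval $D$, the roots can be labeled by continuous functions $\lambda_{1}(t),\dots,\lambda_{n}(t)$ with $p_{t}(\lambda)=\prod_{j}(\lambda-\lambda_{j}(t))$.} I would first record two elementary facts. \textbf{(a) Gluing.} If a continuous root-labeling exists on a neighborhood of every point of $D$, then one exists on all of $D$: on an overlap one precomposes the labeling on the right-hand piece with the fixed permutation in $S_{n}$ that matches the two labelings at a single common point, and glues; since $D$ is an interval this patching is carried out over a countable locally finite cover by open intervals. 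Thus it suffices to build the labeling \emph{locally}. \textbf{(b) Continuity of roots (Rouch\'e).} If the distinct roots of $p_{t_{0}}$ are $\mu_{1},\dots,\mu_{k}$ with multiplicities $m_{1},\dots,m_{k}$, then for $t$ near $t_{0}$ exactly $m_{j}$ roots of $p_{t}$ (with multiplicity) lie in a small disk $D(\mu_{j},\epsilon)$, and no roots lie elsewhere.

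Now induct on $n$; the case $n=1$ is immediate since $\lambda_{1}(t)=-c_{0}(t)$. For the inductive step, fix $t_{0}$ and invoke fact (b). Let $q_{t}^{(j)}(\lambda)$ be the monic degree-$m_{j}$ polynomial whose roots are the roots of $p_{t}$ inside $D(\mu_{j},\epsilon)$. Its coefficients depend continuously on $t$: the power sums of those roots are the contour integrals $\frac{1}{2\pi i}\oint_{|\zeta-\mu_{j}|=\epsilon}\zeta^{\ell}\, p_{t}'(\zeta)/p_{t}(\zeta)\, d\zeta$, which are continuous in $t$ because the integrand is jointly continuous in $(t,\zeta)$ on the compact root-free contour, and Newton's identities convert the power sums into the elementary symmetric functions. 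If $k\ge 2$, each $q_{t}^{(j)}$ has degree $<n$; the inductive hypothesis provides continuous labelings of their roots, and concatenation gives a continuous labeling of the roots of $p_{t}=\prod_{j}q_{t}^{(j)}$ near $t_{0}$.

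The only genuine obstacle is the case $k=1$, i.e.\ $p_{t_{0}}(\lambda)=(\lambda-\mu)^{n}$ is a perfect power, where the degree cannot be reduced; my plan is to dispatch \emph{all} such points at once using the one-dimensionality of $D$. Let $F\subseteq D$ be the closed set of $t$ with $p_{t}(\lambda)=(\lambda+c_{n-1}(t)/n)^{n}$; on $F$ the common root $\mu(t)=-c_{n-1}(t)/n$ is continuous. On $D\setminus F$ every $p_{t}$ has at least two distinct roots, so the $k\ge 2$ construction applies at each point, and by fact (a) it yields a continuous labeling on each connected component of the open set $D\setminus F$. Define $\lambda_{j}(t):=\mu(t)$ for all $j$ and all $t\in F$. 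Continuity is clear on the interior of each component and on the interior of $F$; at a point $t^{*}\in\partial F$, fact (b) applied at $t^{*}$ shows that every root of $p_{s}$ lies within $\epsilon$ of $\mu(t^{*})$ once $s$ is close to $t^{*}$, and each $\lambda_{j}(s)$ is always one of those roots, so $\lambda_{j}(s)\to\mu(t^{*})=\lambda_{j}(t^{*})$ no matter how the labelings on the surrounding components were ordered. Together with (a), this proves the degree-$n$ case and hence the theorem. The points needing care in a full write-up are the joint continuity justifying the contour-integral step and the bookkeeping at the endpoints of $D$ and at accumulation points of the components of $D\setminus F$.
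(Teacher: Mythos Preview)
The paper does not actually prove Kato's theorem; its entire ``proof'' is the single-line citation ``See Ref.~[kato1995], pp.~109, Theorem~5.2.'' Your proposal, by contrast, supplies a genuine self-contained argument, so there is nothing in the paper to compare your approach against beyond noting that you have done strictly more work.

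Your strategy---reduce to monic polynomials with continuous coefficients, localize via Rouch\'e, split off each root-cluster as a lower-degree factor whose coefficients are recovered from contour-integral power sums and Newton's identities, induct on degree, and then handle the perfect-power locus $F$ separately using the one-dimensionality of $D$---is sound and is in fact the standard elementary route to this result. Two points of execution deserve the care you already flag. First, in the gluing step~(a), your phrase ``precompose the right-hand labeling with the permutation matching at a single common point'' must be read as: use the left labeling up to that point and the permuted right labeling thereafter. You cannot infer that the permuted right labeling coincides with the left labeling on the whole overlap (it need not when repeated roots occur there), but gluing at a single point suffices and one-dimensionality makes the left/right split well defined. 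Second, each component of $D\setminus F$ is open and possibly non-compact; the cleanest way to pass from local labelings to a global one on such a component is a maximality (or exhaustion-by-compacta) argument rather than a direct locally finite patching. With those two details written out, your proof is complete.
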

\begin{proof}
See Ref.~\cite{kato1995},~pp. 109,~Theorem 5.2. 
\end{proof}
\begin{remark}
Kato's result is known as the functional continuity of eigenvalues~\cite{li2019a}. It sounds simple but is actually tricky. For example, the statement no longer holds if $D$ is a domain with interior points in the complex plane. See detailed discussions in Ref.~\cite{li2019a}.
\end{remark}

\section{Matrix determinant lemma}\label{SI-sec:matrix_determinant_lemma}
\begin{lemma*}
Suppose $A$ is an $n\times n$ invertible matrix, $\bm{u}$ and $\bm{v}$ are $n\times 1$ column vectors, then 
\begin{equation}\label{eq:lemma_det_general_case}
\det \left(A+\bm{u} \bm{v}^T\right) = (1+\bm{v}^T A^{-1} \bm{u})  \, \det A.     
\end{equation}
\end{lemma*}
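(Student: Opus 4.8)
The plan is to prove the matrix determinant lemma by factoring the rank-one update through the invertible matrix $A$ and then reducing to the simplest possible case. First I would write
\begin{equation}
\det\left(A + \bm{u}\bm{v}^T\right) = \det\!\left(A\left(I + A^{-1}\bm{u}\,\bm{v}^T\right)\right) = \det A \cdot \det\!\left(I + A^{-1}\bm{u}\,\bm{v}^T\right),
\end{equation}
using multiplicativity of the determinant and the invertibility of $A$. This isolates the whole difficulty into computing $\det(I + \bm{w}\bm{v}^T)$ for the single vector $\bm{w} \coloneqq A^{-1}\bm{u}$.

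Next I would establish the special case identity $\det(I + \bm{w}\bm{v}^T) = 1 + \bm{v}^T\bm{w}$. There are two clean routes. The eigenvalue route: the matrix $\bm{w}\bm{v}^T$ has rank at most one, so its eigenvalues are $\bm{v}^T\bm{w}$ (with eigenvector $\bm{w}$, when $\bm{w}\neq\bm{0}$) together with $0$ repeated $n-1$ times; hence the eigenvalues of $I + \bm{w}\bm{v}^T$ are $1 + \bm{v}^T\bm{w}$ and $1$ (with multiplicity $n-1$), and the determinant is their product. One should handle the degenerate case $\bm{w}=\bm{0}$ separately (then both sides are $1$), or argue by continuity/density. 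The block-matrix route avoids this: consider
\begin{equation}
M = \begin{pmatrix} I & -\bm{w} \\ \bm{v}^T & 1 \end{pmatrix},
\end{equation}
and compute $\det M$ in two ways using the Schur complement — first eliminating the bottom-left block to get $\det M = \det(I)\cdot(1 + \bm{v}^T\bm{w})$, then eliminating the top-right block to get $\det M = \det(I + \bm{w}\bm{v}^T)\cdot 1$. Equating the two expressions gives the identity with no case distinction.

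Finally I would substitute $\bm{w} = A^{-1}\bm{u}$ back in: $\det(I + A^{-1}\bm{u}\,\bm{v}^T) = 1 + \bm{v}^T A^{-1}\bm{u}$, and combine with the factorization of the first step to conclude
\begin{equation}
\det\left(A + \bm{u}\bm{v}^T\right) = \left(1 + \bm{v}^T A^{-1}\bm{u}\right)\det A.
\end{equation}
I expect no serious obstacle here — the lemma is elementary — so the only thing to be careful about is the bookkeeping in the special case: if one takes the eigenvalue approach, the argument that a rank-one matrix has the claimed spectrum needs the $\bm{w}=\bm{0}$ (equivalently $\bm{u}=\bm{0}$) case treated on its own, whereas the Schur-complement block-matrix identity sidesteps this entirely and is the route I would actually write up.
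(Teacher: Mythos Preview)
Your proposal is correct and follows essentially the same approach as the paper: reduce to the identity case by factoring $A+\bm{u}\bm{v}^T = A(I+A^{-1}\bm{u}\bm{v}^T)$, and handle the identity case via a block-matrix identity (the paper uses a specific $2\times 2$-block product identity that is equivalent to your Schur-complement argument). The only cosmetic differences are that the paper presents the special case first and the factorization second, and does not mention the eigenvalue alternative.
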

\begin{proof}
The proof is taken from Ref.~\cite{meyer2000}, pp.~475. First, we prove the special case $A = I$. We use the following block matrix equality:
\begin{equation}
\begin{pmatrix}
I & 0  \\
\bm{v}^T & 1
\end{pmatrix}
\begin{pmatrix}
I + \bm{u}\bm{v}^T & \bm{u} \\
0 & 1
\end{pmatrix}
\begin{pmatrix}
I & 0  \\
-\bm{v}^T & 1
\end{pmatrix}
= \begin{pmatrix}
I & \bm{u} \\
0 & 1+ \bm{v}^T \bm{u}
\end{pmatrix}.    
\end{equation}
We take the determinant on both sides and use the product rule of determinant to obtain 
\begin{equation}\label{eq:lemma_det_special_case}
\det (I+\bm{u}\bm{v}^T) = 1+\bm{v}^T \bm{u}.     
\end{equation}
To prove the general case, we write 
\begin{equation}
A+\bm{u}\bm{v}^T = A(I+A^{-1}\bm{u} \bm{v}^T).    
\end{equation}
We take the determinant on both sides and use Eq.~(\ref{eq:lemma_det_special_case}) to obtain Eq.~(\ref{eq:lemma_det_general_case}).    
\end{proof}

\section{Proof of Eq.~(\ref{eq:main_result})}\label{SI-sec:proof_resonance_winding}

Here we provide a detailed proof of Eq.~(\ref{eq:main_result}). We apply the matrix determinant lemma to Eq.~(\ref{eq:S_tilde_sub_matrix}) and obtain 
\begin{equation}
\det \tilde{S}(\omega) = \tilde{f}(\Omega) \det \tilde{C}.
\end{equation}
where 
\begin{gather}
\tilde{f}(\Omega) \coloneqq \frac{-i \Omega + (1-2\tilde{\rho})}{-i \Omega + 1}, \\
\Omega \coloneqq \frac{\omega - \omega_{0}}{\gamma}, \quad \tilde{\rho} \coloneqq -\frac{\tilde{\bm{\kappa}}^T \tilde{C}^{-1} \tilde{\bm{d}}}{2 \gamma} \in \mathbb{C}.    
\end{gather}
Now we calculate the winding number
\begin{equation}\label{eq:def_wind_S_resonance}
\operatorname{wn}(\tilde{S}) = \frac{1}{2\pi i} \int_{-\infty}^{\infty} \, \frac{1}{\det \tilde{S}} \frac{\mathrm{d}\det \tilde{S} }{\mathrm{d}\omega} \mathrm{d}\omega.
\end{equation}
If $\det \tilde{C} = 0$, $\operatorname{wn}(\tilde{S})$ is undefined. If $\det \tilde{C} \neq 0$, 
\begin{equation}
\operatorname{wn} (\tilde{S}) = \operatorname{wn} (\tilde{f}).  
\end{equation}
$\tilde{f}(\Omega)$ is a M\"{o}bius transformation~\cite{needham2021}, which maps the real line into a circle centered at $1-\tilde{\rho}$ with a radius of $|\tilde{\rho}|$ (Fig.~\ref{fig:phase_diagram}a). The circle encloses the origin if and only if $\operatorname{Re} \tilde{\rho} > \frac{1}{2}$. 
As $\Omega$ runs over $[-\infty,+\infty]$, $\tilde{f}(\Omega)$ traces out the circle once counterclockwise. Hence, we obtain
\begin{equation}\label{eq:main_result_SI}
\operatorname{wn}(\tilde{S}) = \begin{cases}
0,  & \operatorname{Re} \tilde{\rho} < \frac{1}{2} \\
\textup{Undefined}, & \operatorname{Re} \tilde{\rho} = \frac{1}{2} \\ 
1, & \operatorname{Re} \tilde{\rho} > \frac{1}{2}
\end{cases}    
\end{equation}
which completes the proof of Eq.~(\ref{eq:main_result}). 
\begin{figure}[htbp]
    \centering
\includegraphics[width=0.5\textwidth]{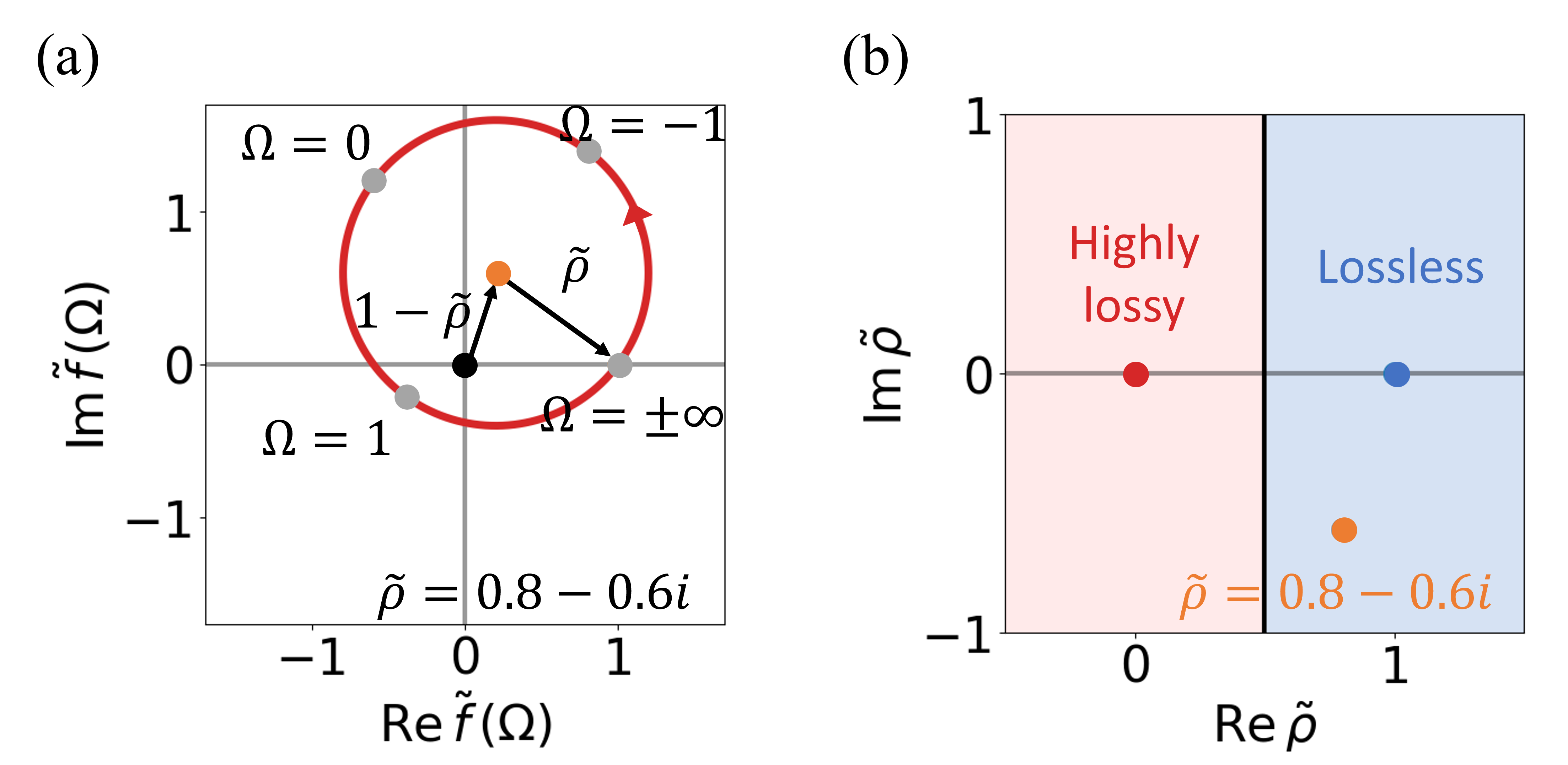}
    \caption{ Topological winding of a single resonance. (c) The path of $\tilde{f}(\Omega)$. (d) The topological phase diagram. }
    \label{fig:phase_diagram}
\end{figure}
\section{Topological phases of a resonance}\label{SI-sec:topological_phases_resonance}

Eq.~(\ref{eq:main_result}) indicates two topological phases for a resonance, one with $\operatorname{wn}(\tilde{S}) = 0$ and the other with $\operatorname{wn}(\tilde{S}) = 1$ (Fig.~\ref{fig:phase_diagram}b).  We illustrate them with two special cases.

\begin{enumerate}
    \item \textbf{Lossless case.} For a lossless system with a \emph{total} scattering matrix $S$~\cite{zhao2019c},  
    \begin{equation}
    \det C \neq 0, \qquad C \bm{\kappa}^* + \bm{d}  = \bm{0},  \qquad   \bm{\kappa}^{\dagger} \bm{\kappa} = 2 \gamma.  
    \end{equation}

Thus
\begin{equation}
\rho = -\frac{\bm{\kappa}^T C^{-1} \bm{d}}{2 \gamma} = \frac{\bm{\kappa}^T \bm{\kappa}^*}{2 \gamma} = \frac{(\bm{\kappa} ^{\dagger} \bm{\kappa})^*}{2 \gamma} = 1.  
\end{equation}
The image of $f(\Omega)$ is the unit circle centered at the origin. Hence $\operatorname{wn}(S) = 1$.
\item \textbf{Highly lossy case.} For a highly lossy system with any scattering submatrix $\tilde{S}$, $\gamma\to +\infty$, thus
\begin{equation}
\tilde{\rho} = -\frac{\tilde{\bm{\kappa}}^T \tilde{C}^{-1} \tilde{\bm{d}}}{2 \gamma} \to 0, \qquad \tilde{S}(\omega) \to \tilde{C}.  
\end{equation}
The image of $\tilde{f}(\Omega)$ shrinks to a single point $1$ in the complex plane. Hence $\operatorname{wn}(\tilde{S}) = 0$.
\end{enumerate}

\section{Additional Numerical Results}\label{SI-sec:additional_numerical_results}

\begin{figure}[htbp]
    \centering
    \includegraphics[width=0.36\textwidth]{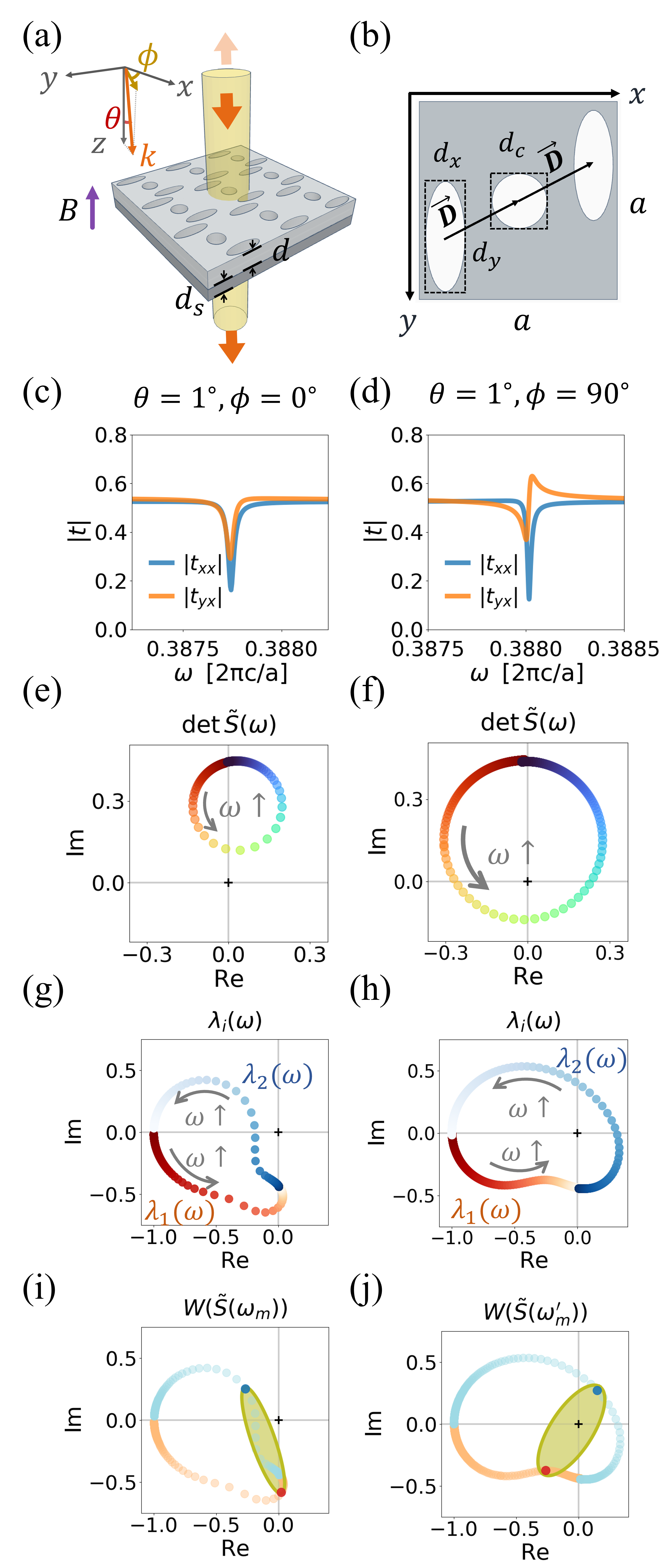}
    \caption{Additional numerical results for the photonic crystal slab example. (a) Geometry. (b) Unit cell. (c,e,g,i) Results for $\tilde{S}(\omega;\theta= \ang{1}, \phi = \ang{0})$. (c) $|t_{xx}|(\omega)$ and $|t_{yx}|(\omega)$.  (e) $\det \tilde{S}(\omega)$. (g) $\lambda_1(\omega)$ and $\lambda_2(\omega)$. (i) The ellipse denotes $W(\tilde{S})$ at $\omega_m = 0.3877\times 2\pi c /a$. The red and blue dots denote   $\lambda_1(\omega_m)$ and $\lambda_2(\omega_m)$, respectively. Also shown are the trajectories of $\lambda_1(\omega)$ and $\lambda_2(\omega)$. (d,f,h,j) Corresponding results for $\tilde{S}(\omega;\theta= \ang{1}, \phi = \ang{90})$. In (j), $\omega'_m = 0.3880\times 2\pi c /a$. }
    \label{fig-SI:demo_structure}
\end{figure}

\red{Here we provide additional numerical results for the photonic crystal slab example. For readers' reference, the structure is reproduced in Figs.~\ref{fig-SI:demo_structure}(a,b). }

\red{Figs.~\ref{fig-SI:demo_structure}(c,e,g,i) present the results for $\tilde{S}(\omega;\theta=\ang{1},\phi=\ang{0})$. Fig.~\ref{fig-SI:demo_structure}c shows that the transmission spectra exhibit a single guided resonance~\cite{Fan2002}. We fit the spectra with Eq.~(\ref{eq:S_tilde_sub_matrix}) and obtain $\tilde{\rho} = 0.361 + 0.086 i$. Since $\operatorname{Re}\tilde{\rho}<\frac{1}{2}$, Eq.~(\ref{eq:main_result}) states that $\operatorname{wn}(\tilde{S}) = 0$. We now verify these predictions.  Fig.~\ref{fig-SI:demo_structure}e shows that $\det \tilde{S}(\omega)$ traces out a circle that does not enclose the origin. Fig.~\ref{fig-SI:demo_structure}g shows that as $\omega$ runs over the entire range, the eigenvalues $\lambda_1$ and $\lambda_2$ are permutated, but $[\arg (\lambda_1) + \arg (\lambda_2)]$ does not change. These observations confirm that $\operatorname{wn}(\tilde{S}) = 0$. Fig.~\ref{fig-SI:demo_structure}i depicts $W(\tilde{S})$ at $\omega_m = 0.3877\times 2\pi c /a$, where $c(\tilde{S})$ is minimized. $W(\tilde{S}(\omega_m))$ forms an elliptical disk with foci at $\lambda_1(\omega_m)$ and $\lambda_2(\omega_m)$ (see SM, Sec.~\ref{SI-subsec:geometry}.). Since $0 \notin W(\tilde{S}(\omega_m))$, coherent orthogonal scattering does not occur in the entire frequency range.}

\red{Figs.~\ref{fig-SI:demo_structure}(d,f,h,j) present the results for $\tilde{S}(\omega;\theta=\ang{1},\phi=\ang{90})$. Fig.~\ref{fig-SI:demo_structure}d shows that the transmission spectra also exhibit a single guided resonance. We fit the spectra with Eq.~(\ref{eq:S_tilde_sub_matrix}) and obtain $\tilde{\rho} = 0.657 - 0.027 i$. As $\operatorname{Re}\tilde{\rho} > \frac{1}{2}$, Eq.~(\ref{eq:main_result}) states that $\operatorname{wn}(\tilde{S}) = 1$. Consequently, (\ref{eq:condition_resonance}) states that coherent orthogonal scattering occurs at some frequency. We now verify these predictions. Fig.~\ref{fig-SI:demo_structure}f shows that $\det \tilde{S}(\omega)$ traces out a circle that winds around the origin once counterclockwise. Fig.~\ref{fig-SI:demo_structure}h shows that as $\omega$ runs over the entire range, the eigenvalues $\lambda_1$ and $\lambda_2$ are permutated, and $[\arg (\lambda_1) + \arg (\lambda_2)]$ increases by $2\pi$. These observations confirm that $\operatorname{wn}(\tilde{S}) = 1$. Fig.~\ref{fig-SI:demo_structure}j depicts $W(\tilde{S})$ at $\omega'_m = 0.3880\times 2\pi c /a$. Again, $W(\tilde{S}(\omega'_m))$ forms an elliptical disk with foci at $\lambda_1(\omega'_m)$ and $\lambda_2(\omega'_m)$. However, now $0 \in W(\tilde{S}(\omega'_m))$, thus coherent orthogonal scattering occurs at $\omega'_m$.}

% The \nocite command causes all entries in a bibliography to be printed out
% whether or not they are actually referenced in the text. This is appropriate
% for the sample file to show the different styles of references, but authors
% most likely will not want to use it.
%\nocite{*}

\bibliography{main}% Produces the bibliography via BibTeX.

%apsrev4-2.bst 2019-01-14 (MD) hand-edited version of apsrev4-1.bst
%Control: key (0)
%Control: author (8) initials jnrlst
%Control: editor formatted (1) identically to author
%Control: production of article title (0) allowed
%Control: page (0) single
%Control: year (1) truncated
%Control: production of eprint (0) enabled
\begin{thebibliography}{149}%
\makeatletter
\providecommand \@ifxundefined [1]{%
 \@ifx{#1\undefined}
}%
\providecommand \@ifnum [1]{%
 \ifnum #1\expandafter \@firstoftwo
 \else \expandafter \@secondoftwo
 \fi
}%
\providecommand \@ifx [1]{%
 \ifx #1\expandafter \@firstoftwo
 \else \expandafter \@secondoftwo
 \fi
}%
\providecommand \natexlab [1]{#1}%
\providecommand \enquote  [1]{``#1''}%
\providecommand \bibnamefont  [1]{#1}%
\providecommand \bibfnamefont [1]{#1}%
\providecommand \citenamefont [1]{#1}%
\providecommand \href@noop [0]{\@secondoftwo}%
\providecommand \href [0]{\begingroup \@sanitize@url \@href}%
\providecommand \@href[1]{\@@startlink{#1}\@@href}%
\providecommand \@@href[1]{\endgroup#1\@@endlink}%
\providecommand \@sanitize@url [0]{\catcode `\\12\catcode `\$12\catcode `\&12\catcode `\#12\catcode `\^12\catcode `\_12\catcode `\%12\relax}%
\providecommand \@@startlink[1]{}%
\providecommand \@@endlink[0]{}%
\providecommand \url  [0]{\begingroup\@sanitize@url \@url }%
\providecommand \@url [1]{\endgroup\@href {#1}{\urlprefix }}%
\providecommand \urlprefix  [0]{URL }%
\providecommand \Eprint [0]{\href }%
\providecommand \doibase [0]{https://doi.org/}%
\providecommand \selectlanguage [0]{\@gobble}%
\providecommand \bibinfo  [0]{\@secondoftwo}%
\providecommand \bibfield  [0]{\@secondoftwo}%
\providecommand \translation [1]{[#1]}%
\providecommand \BibitemOpen [0]{}%
\providecommand \bibitemStop [0]{}%
\providecommand \bibitemNoStop [0]{.\EOS\space}%
\providecommand \EOS [0]{\spacefactor3000\relax}%
\providecommand \BibitemShut  [1]{\csname bibitem#1\endcsname}%
\let\auto@bib@innerbib\@empty
%</preamble>
\bibitem [{\citenamefont {Feng}\ \emph {et~al.}(1988)\citenamefont {Feng}, \citenamefont {Kane}, \citenamefont {Lee},\ and\ \citenamefont {Stone}}]{feng1988a}%
  \BibitemOpen
  \bibfield  {author} {\bibinfo {author} {\bibfnamefont {S.}~\bibnamefont {Feng}}, \bibinfo {author} {\bibfnamefont {C.}~\bibnamefont {Kane}}, \bibinfo {author} {\bibfnamefont {P.~A.}\ \bibnamefont {Lee}},\ and\ \bibinfo {author} {\bibfnamefont {A.~D.}\ \bibnamefont {Stone}},\ }\bibfield  {title} {\bibinfo {title} {Correlations and {{Fluctuations}} of {{Coherent Wave Transmission}} through {{Disordered Media}}},\ }\href {https://doi.org/10.1103/PhysRevLett.61.834} {\bibfield  {journal} {\bibinfo  {journal} {Physical Review Letters}\ }\textbf {\bibinfo {volume} {61}},\ \bibinfo {pages} {834} (\bibinfo {year} {1988})}\BibitemShut {NoStop}%
\bibitem [{\citenamefont {Vellekoop}\ and\ \citenamefont {Mosk}(2007)}]{vellekoop2007b}%
  \BibitemOpen
  \bibfield  {author} {\bibinfo {author} {\bibfnamefont {I.~M.}\ \bibnamefont {Vellekoop}}\ and\ \bibinfo {author} {\bibfnamefont {A.~P.}\ \bibnamefont {Mosk}},\ }\bibfield  {title} {\bibinfo {title} {Focusing coherent light through opaque strongly scattering media},\ }\href {https://doi.org/10.1364/OL.32.002309} {\bibfield  {journal} {\bibinfo  {journal} {Optics Letters}\ }\textbf {\bibinfo {volume} {32}},\ \bibinfo {pages} {2309} (\bibinfo {year} {2007})}\BibitemShut {NoStop}%
\bibitem [{\citenamefont {Popoff}\ \emph {et~al.}(2010{\natexlab{a}})\citenamefont {Popoff}, \citenamefont {Lerosey}, \citenamefont {Carminati}, \citenamefont {Fink}, \citenamefont {Boccara},\ and\ \citenamefont {Gigan}}]{popoff2010}%
  \BibitemOpen
  \bibfield  {author} {\bibinfo {author} {\bibfnamefont {S.~M.}\ \bibnamefont {Popoff}}, \bibinfo {author} {\bibfnamefont {G.}~\bibnamefont {Lerosey}}, \bibinfo {author} {\bibfnamefont {R.}~\bibnamefont {Carminati}}, \bibinfo {author} {\bibfnamefont {M.}~\bibnamefont {Fink}}, \bibinfo {author} {\bibfnamefont {A.~C.}\ \bibnamefont {Boccara}},\ and\ \bibinfo {author} {\bibfnamefont {S.}~\bibnamefont {Gigan}},\ }\bibfield  {title} {\bibinfo {title} {Measuring the {{Transmission Matrix}} in {{Optics}}: {{An Approach}} to the {{Study}} and {{Control}} of {{Light Propagation}} in {{Disordered Media}}},\ }\href {https://doi.org/10.1103/PhysRevLett.104.100601} {\bibfield  {journal} {\bibinfo  {journal} {Physical Review Letters}\ }\textbf {\bibinfo {volume} {104}},\ \bibinfo {pages} {100601} (\bibinfo {year} {2010}{\natexlab{a}})}\BibitemShut {NoStop}%
\bibitem [{\citenamefont {Popoff}\ \emph {et~al.}(2010{\natexlab{b}})\citenamefont {Popoff}, \citenamefont {Lerosey}, \citenamefont {Fink}, \citenamefont {Boccara},\ and\ \citenamefont {Gigan}}]{popoff2010c}%
  \BibitemOpen
  \bibfield  {author} {\bibinfo {author} {\bibfnamefont {S.}~\bibnamefont {Popoff}}, \bibinfo {author} {\bibfnamefont {G.}~\bibnamefont {Lerosey}}, \bibinfo {author} {\bibfnamefont {M.}~\bibnamefont {Fink}}, \bibinfo {author} {\bibfnamefont {A.~C.}\ \bibnamefont {Boccara}},\ and\ \bibinfo {author} {\bibfnamefont {S.}~\bibnamefont {Gigan}},\ }\bibfield  {title} {\bibinfo {title} {Image transmission through an opaque material},\ }\href {https://doi.org/10.1038/ncomms1078} {\bibfield  {journal} {\bibinfo  {journal} {Nature Communications}\ }\textbf {\bibinfo {volume} {1}},\ \bibinfo {pages} {81} (\bibinfo {year} {2010}{\natexlab{b}})}\BibitemShut {NoStop}%
\bibitem [{\citenamefont {Vellekoop}\ \emph {et~al.}(2010)\citenamefont {Vellekoop}, \citenamefont {Lagendijk},\ and\ \citenamefont {Mosk}}]{vellekoop2010a}%
  \BibitemOpen
  \bibfield  {author} {\bibinfo {author} {\bibfnamefont {I.~M.}\ \bibnamefont {Vellekoop}}, \bibinfo {author} {\bibfnamefont {A.}~\bibnamefont {Lagendijk}},\ and\ \bibinfo {author} {\bibfnamefont {A.~P.}\ \bibnamefont {Mosk}},\ }\bibfield  {title} {\bibinfo {title} {Exploiting disorder for perfect focusing},\ }\href {https://doi.org/10.1038/nphoton.2010.3} {\bibfield  {journal} {\bibinfo  {journal} {Nature Photonics}\ }\textbf {\bibinfo {volume} {4}},\ \bibinfo {pages} {320} (\bibinfo {year} {2010})}\BibitemShut {NoStop}%
\bibitem [{\citenamefont {Choi}\ \emph {et~al.}(2011)\citenamefont {Choi}, \citenamefont {Yang}, \citenamefont {{Fang-Yen}}, \citenamefont {Kang}, \citenamefont {Lee}, \citenamefont {Dasari}, \citenamefont {Feld},\ and\ \citenamefont {Choi}}]{choi2011a}%
  \BibitemOpen
  \bibfield  {author} {\bibinfo {author} {\bibfnamefont {Y.}~\bibnamefont {Choi}}, \bibinfo {author} {\bibfnamefont {T.~D.}\ \bibnamefont {Yang}}, \bibinfo {author} {\bibfnamefont {C.}~\bibnamefont {{Fang-Yen}}}, \bibinfo {author} {\bibfnamefont {P.}~\bibnamefont {Kang}}, \bibinfo {author} {\bibfnamefont {K.~J.}\ \bibnamefont {Lee}}, \bibinfo {author} {\bibfnamefont {R.~R.}\ \bibnamefont {Dasari}}, \bibinfo {author} {\bibfnamefont {M.~S.}\ \bibnamefont {Feld}},\ and\ \bibinfo {author} {\bibfnamefont {W.}~\bibnamefont {Choi}},\ }\bibfield  {title} {\bibinfo {title} {Overcoming the {{Diffraction Limit Using Multiple Light Scattering}} in a {{Highly Disordered Medium}}},\ }\href {https://doi.org/10.1103/PhysRevLett.107.023902} {\bibfield  {journal} {\bibinfo  {journal} {Physical Review Letters}\ }\textbf {\bibinfo {volume} {107}},\ \bibinfo {pages} {023902} (\bibinfo {year} {2011})}\BibitemShut {NoStop}%
\bibitem [{\citenamefont {Popoff}\ \emph {et~al.}(2011)\citenamefont {Popoff}, \citenamefont {Lerosey}, \citenamefont {Fink}, \citenamefont {Boccara},\ and\ \citenamefont {Gigan}}]{popoff2011a}%
  \BibitemOpen
  \bibfield  {author} {\bibinfo {author} {\bibfnamefont {S.~M.}\ \bibnamefont {Popoff}}, \bibinfo {author} {\bibfnamefont {G.}~\bibnamefont {Lerosey}}, \bibinfo {author} {\bibfnamefont {M.}~\bibnamefont {Fink}}, \bibinfo {author} {\bibfnamefont {A.~C.}\ \bibnamefont {Boccara}},\ and\ \bibinfo {author} {\bibfnamefont {S.}~\bibnamefont {Gigan}},\ }\bibfield  {title} {\bibinfo {title} {Controlling light through optical disordered media: Transmission matrix approach},\ }\href {https://doi.org/10.1088/1367-2630/13/12/123021} {\bibfield  {journal} {\bibinfo  {journal} {New Journal of Physics}\ }\textbf {\bibinfo {volume} {13}},\ \bibinfo {pages} {123021} (\bibinfo {year} {2011})}\BibitemShut {NoStop}%
\bibitem [{\citenamefont {Bertolotti}\ \emph {et~al.}(2012)\citenamefont {Bertolotti}, \citenamefont {{van Putten}}, \citenamefont {Blum}, \citenamefont {Lagendijk}, \citenamefont {Vos},\ and\ \citenamefont {Mosk}}]{bertolotti2012a}%
  \BibitemOpen
  \bibfield  {author} {\bibinfo {author} {\bibfnamefont {J.}~\bibnamefont {Bertolotti}}, \bibinfo {author} {\bibfnamefont {E.~G.}\ \bibnamefont {{van Putten}}}, \bibinfo {author} {\bibfnamefont {C.}~\bibnamefont {Blum}}, \bibinfo {author} {\bibfnamefont {A.}~\bibnamefont {Lagendijk}}, \bibinfo {author} {\bibfnamefont {W.~L.}\ \bibnamefont {Vos}},\ and\ \bibinfo {author} {\bibfnamefont {A.~P.}\ \bibnamefont {Mosk}},\ }\bibfield  {title} {\bibinfo {title} {Non-invasive imaging through opaque scattering layers},\ }\href {https://doi.org/10.1038/nature11578} {\bibfield  {journal} {\bibinfo  {journal} {Nature}\ }\textbf {\bibinfo {volume} {491}},\ \bibinfo {pages} {232} (\bibinfo {year} {2012})}\BibitemShut {NoStop}%
\bibitem [{\citenamefont {Mosk}\ \emph {et~al.}(2012)\citenamefont {Mosk}, \citenamefont {Lagendijk}, \citenamefont {Lerosey},\ and\ \citenamefont {Fink}}]{mosk2012c}%
  \BibitemOpen
  \bibfield  {author} {\bibinfo {author} {\bibfnamefont {A.~P.}\ \bibnamefont {Mosk}}, \bibinfo {author} {\bibfnamefont {A.}~\bibnamefont {Lagendijk}}, \bibinfo {author} {\bibfnamefont {G.}~\bibnamefont {Lerosey}},\ and\ \bibinfo {author} {\bibfnamefont {M.}~\bibnamefont {Fink}},\ }\bibfield  {title} {\bibinfo {title} {Controlling waves in space and time for imaging and focusing in complex media},\ }\href {https://doi.org/10.1038/nphoton.2012.88} {\bibfield  {journal} {\bibinfo  {journal} {Nature Photonics}\ }\textbf {\bibinfo {volume} {6}},\ \bibinfo {pages} {283} (\bibinfo {year} {2012})}\BibitemShut {NoStop}%
\bibitem [{\citenamefont {Chaigne}\ \emph {et~al.}(2014)\citenamefont {Chaigne}, \citenamefont {Katz}, \citenamefont {Boccara}, \citenamefont {Fink}, \citenamefont {Bossy},\ and\ \citenamefont {Gigan}}]{chaigne2014a}%
  \BibitemOpen
  \bibfield  {author} {\bibinfo {author} {\bibfnamefont {T.}~\bibnamefont {Chaigne}}, \bibinfo {author} {\bibfnamefont {O.}~\bibnamefont {Katz}}, \bibinfo {author} {\bibfnamefont {A.~C.}\ \bibnamefont {Boccara}}, \bibinfo {author} {\bibfnamefont {M.}~\bibnamefont {Fink}}, \bibinfo {author} {\bibfnamefont {E.}~\bibnamefont {Bossy}},\ and\ \bibinfo {author} {\bibfnamefont {S.}~\bibnamefont {Gigan}},\ }\bibfield  {title} {\bibinfo {title} {Controlling light in scattering media non-invasively using the photoacoustic transmission matrix},\ }\href {https://doi.org/10.1038/nphoton.2013.307} {\bibfield  {journal} {\bibinfo  {journal} {Nature Photonics}\ }\textbf {\bibinfo {volume} {8}},\ \bibinfo {pages} {58} (\bibinfo {year} {2014})}\BibitemShut {NoStop}%
\bibitem [{\citenamefont {Katz}\ \emph {et~al.}(2014)\citenamefont {Katz}, \citenamefont {Heidmann}, \citenamefont {Fink},\ and\ \citenamefont {Gigan}}]{katz2014a}%
  \BibitemOpen
  \bibfield  {author} {\bibinfo {author} {\bibfnamefont {O.}~\bibnamefont {Katz}}, \bibinfo {author} {\bibfnamefont {P.}~\bibnamefont {Heidmann}}, \bibinfo {author} {\bibfnamefont {M.}~\bibnamefont {Fink}},\ and\ \bibinfo {author} {\bibfnamefont {S.}~\bibnamefont {Gigan}},\ }\bibfield  {title} {\bibinfo {title} {Non-invasive single-shot imaging through scattering layers and around corners via speckle correlations},\ }\href {https://doi.org/10.1038/nphoton.2014.189} {\bibfield  {journal} {\bibinfo  {journal} {Nature Photonics}\ }\textbf {\bibinfo {volume} {8}},\ \bibinfo {pages} {784} (\bibinfo {year} {2014})}\BibitemShut {NoStop}%
\bibitem [{\citenamefont {Kim}\ \emph {et~al.}(2015)\citenamefont {Kim}, \citenamefont {Choi}, \citenamefont {Choi}, \citenamefont {Yoon},\ and\ \citenamefont {Choi}}]{kim2015e}%
  \BibitemOpen
  \bibfield  {author} {\bibinfo {author} {\bibfnamefont {M.}~\bibnamefont {Kim}}, \bibinfo {author} {\bibfnamefont {W.}~\bibnamefont {Choi}}, \bibinfo {author} {\bibfnamefont {Y.}~\bibnamefont {Choi}}, \bibinfo {author} {\bibfnamefont {C.}~\bibnamefont {Yoon}},\ and\ \bibinfo {author} {\bibfnamefont {W.}~\bibnamefont {Choi}},\ }\bibfield  {title} {\bibinfo {title} {Transmission matrix of a scattering medium and its applications in biophotonics},\ }\href {https://doi.org/10.1364/OE.23.012648} {\bibfield  {journal} {\bibinfo  {journal} {Optics Express}\ }\textbf {\bibinfo {volume} {23}},\ \bibinfo {pages} {12648} (\bibinfo {year} {2015})}\BibitemShut {NoStop}%
\bibitem [{\citenamefont {Muraviev}\ \emph {et~al.}(2018)\citenamefont {Muraviev}, \citenamefont {Smolski}, \citenamefont {Loparo},\ and\ \citenamefont {Vodopyanov}}]{muraviev2018}%
  \BibitemOpen
  \bibfield  {author} {\bibinfo {author} {\bibfnamefont {A.~V.}\ \bibnamefont {Muraviev}}, \bibinfo {author} {\bibfnamefont {V.~O.}\ \bibnamefont {Smolski}}, \bibinfo {author} {\bibfnamefont {Z.~E.}\ \bibnamefont {Loparo}},\ and\ \bibinfo {author} {\bibfnamefont {K.~L.}\ \bibnamefont {Vodopyanov}},\ }\bibfield  {title} {\bibinfo {title} {Massively parallel sensing of trace molecules and their isotopologues with broadband subharmonic mid-infrared frequency combs},\ }\href {https://doi.org/10.1038/s41566-018-0135-2} {\bibfield  {journal} {\bibinfo  {journal} {Nature Photonics}\ }\textbf {\bibinfo {volume} {12}},\ \bibinfo {pages} {209} (\bibinfo {year} {2018})}\BibitemShut {NoStop}%
\bibitem [{\citenamefont {Tan}\ \emph {et~al.}(2020)\citenamefont {Tan}, \citenamefont {Zhang}, \citenamefont {Li}, \citenamefont {Wan}, \citenamefont {Guo}, \citenamefont {Zhu}, \citenamefont {Liu},\ and\ \citenamefont {Yi}}]{tan2020}%
  \BibitemOpen
  \bibfield  {author} {\bibinfo {author} {\bibfnamefont {X.}~\bibnamefont {Tan}}, \bibinfo {author} {\bibfnamefont {H.}~\bibnamefont {Zhang}}, \bibinfo {author} {\bibfnamefont {J.}~\bibnamefont {Li}}, \bibinfo {author} {\bibfnamefont {H.}~\bibnamefont {Wan}}, \bibinfo {author} {\bibfnamefont {Q.}~\bibnamefont {Guo}}, \bibinfo {author} {\bibfnamefont {H.}~\bibnamefont {Zhu}}, \bibinfo {author} {\bibfnamefont {H.}~\bibnamefont {Liu}},\ and\ \bibinfo {author} {\bibfnamefont {F.}~\bibnamefont {Yi}},\ }\bibfield  {title} {\bibinfo {title} {Non-dispersive infrared multi-gas sensing via nanoantenna integrated narrowband detectors},\ }\href {https://doi.org/10.1038/s41467-020-19085-1} {\bibfield  {journal} {\bibinfo  {journal} {Nature Communications}\ }\textbf {\bibinfo {volume} {11}},\ \bibinfo {pages} {5245} (\bibinfo {year} {2020})}\BibitemShut {NoStop}%
\bibitem [{\citenamefont {Chen}(2005)}]{chen2005}%
  \BibitemOpen
  \bibfield  {author} {\bibinfo {author} {\bibfnamefont {G.}~\bibnamefont {Chen}},\ }\href@noop {} {\emph {\bibinfo {title} {Nanoscale Energy Transport and Conversion: A Parallel Treatment of Electrons, Molecules, Phonons, and Photons}}}\ (\bibinfo  {publisher} {{Oxford University Press}},\ \bibinfo {address} {{Oxford}},\ \bibinfo {year} {2005})\BibitemShut {NoStop}%
\bibitem [{\citenamefont {Zhang}(2007)}]{zhang2007}%
  \BibitemOpen
  \bibfield  {author} {\bibinfo {author} {\bibfnamefont {Z.~M.}\ \bibnamefont {Zhang}},\ }\href@noop {} {\emph {\bibinfo {title} {Nano/Microscale Heat Transfer}}}\ (\bibinfo  {publisher} {{McGraw-Hill}},\ \bibinfo {address} {{New York}},\ \bibinfo {year} {2007})\BibitemShut {NoStop}%
\bibitem [{\citenamefont {Howell}\ \emph {et~al.}(2016)\citenamefont {Howell}, \citenamefont {Meng{\"u}{\c c}},\ and\ \citenamefont {Siegel}}]{howell2016}%
  \BibitemOpen
  \bibfield  {author} {\bibinfo {author} {\bibfnamefont {J.~R.}\ \bibnamefont {Howell}}, \bibinfo {author} {\bibfnamefont {M.~P.}\ \bibnamefont {Meng{\"u}{\c c}}},\ and\ \bibinfo {author} {\bibfnamefont {R.}~\bibnamefont {Siegel}},\ }\href@noop {} {\emph {\bibinfo {title} {Thermal Radiation Heat Transfer}}},\ \bibinfo {edition} {sixth}\ ed.\ (\bibinfo  {publisher} {{CRC Press}},\ \bibinfo {address} {{London}},\ \bibinfo {year} {2016})\BibitemShut {NoStop}%
\bibitem [{\citenamefont {Fan}(2017)}]{fan2017}%
  \BibitemOpen
  \bibfield  {author} {\bibinfo {author} {\bibfnamefont {S.}~\bibnamefont {Fan}},\ }\bibfield  {title} {\bibinfo {title} {Thermal {{Photonics}} and {{Energy Applications}}},\ }\href {https://doi.org/10.1016/j.joule.2017.07.012} {\bibfield  {journal} {\bibinfo  {journal} {Joule}\ }\textbf {\bibinfo {volume} {1}},\ \bibinfo {pages} {264} (\bibinfo {year} {2017})}\BibitemShut {NoStop}%
\bibitem [{\citenamefont {Cuevas}\ and\ \citenamefont {{Garc{\'i}a-Vidal}}(2018)}]{cuevas2018b}%
  \BibitemOpen
  \bibfield  {author} {\bibinfo {author} {\bibfnamefont {J.~C.}\ \bibnamefont {Cuevas}}\ and\ \bibinfo {author} {\bibfnamefont {F.~J.}\ \bibnamefont {{Garc{\'i}a-Vidal}}},\ }\bibfield  {title} {\bibinfo {title} {Radiative {{Heat Transfer}}},\ }\href {https://doi.org/10.1021/acsphotonics.8b01031} {\bibfield  {journal} {\bibinfo  {journal} {ACS Photonics}\ }\textbf {\bibinfo {volume} {5}},\ \bibinfo {pages} {3896} (\bibinfo {year} {2018})}\BibitemShut {NoStop}%
\bibitem [{\citenamefont {Ottens}\ \emph {et~al.}(2011)\citenamefont {Ottens}, \citenamefont {Quetschke}, \citenamefont {Wise}, \citenamefont {Alemi}, \citenamefont {Lundock}, \citenamefont {Mueller}, \citenamefont {Reitze}, \citenamefont {Tanner},\ and\ \citenamefont {Whiting}}]{ottens2011}%
  \BibitemOpen
  \bibfield  {author} {\bibinfo {author} {\bibfnamefont {R.~S.}\ \bibnamefont {Ottens}}, \bibinfo {author} {\bibfnamefont {V.}~\bibnamefont {Quetschke}}, \bibinfo {author} {\bibfnamefont {S.}~\bibnamefont {Wise}}, \bibinfo {author} {\bibfnamefont {A.~A.}\ \bibnamefont {Alemi}}, \bibinfo {author} {\bibfnamefont {R.}~\bibnamefont {Lundock}}, \bibinfo {author} {\bibfnamefont {G.}~\bibnamefont {Mueller}}, \bibinfo {author} {\bibfnamefont {D.~H.}\ \bibnamefont {Reitze}}, \bibinfo {author} {\bibfnamefont {D.~B.}\ \bibnamefont {Tanner}},\ and\ \bibinfo {author} {\bibfnamefont {B.~F.}\ \bibnamefont {Whiting}},\ }\bibfield  {title} {\bibinfo {title} {Near-{{Field Radiative Heat Transfer}} between {{Macroscopic Planar Surfaces}}},\ }\href {https://doi.org/10.1103/PhysRevLett.107.014301} {\bibfield  {journal} {\bibinfo  {journal} {Physical Review Letters}\ }\textbf {\bibinfo {volume} {107}},\ \bibinfo {pages} {014301} (\bibinfo {year} {2011})}\BibitemShut {NoStop}%
\bibitem [{\citenamefont {Raman}\ \emph {et~al.}(2014)\citenamefont {Raman}, \citenamefont {Anoma}, \citenamefont {Zhu}, \citenamefont {Rephaeli},\ and\ \citenamefont {Fan}}]{raman2014}%
  \BibitemOpen
  \bibfield  {author} {\bibinfo {author} {\bibfnamefont {A.~P.}\ \bibnamefont {Raman}}, \bibinfo {author} {\bibfnamefont {M.~A.}\ \bibnamefont {Anoma}}, \bibinfo {author} {\bibfnamefont {L.}~\bibnamefont {Zhu}}, \bibinfo {author} {\bibfnamefont {E.}~\bibnamefont {Rephaeli}},\ and\ \bibinfo {author} {\bibfnamefont {S.}~\bibnamefont {Fan}},\ }\bibfield  {title} {\bibinfo {title} {Passive radiative cooling below ambient air temperature under direct sunlight},\ }\href {https://doi.org/10.1038/nature13883} {\bibfield  {journal} {\bibinfo  {journal} {Nature}\ }\textbf {\bibinfo {volume} {515}},\ \bibinfo {pages} {540} (\bibinfo {year} {2014})}\BibitemShut {NoStop}%
\bibitem [{\citenamefont {Boriskina}\ \emph {et~al.}(2016)\citenamefont {Boriskina}, \citenamefont {Green}, \citenamefont {Catchpole}, \citenamefont {Yablonovitch}, \citenamefont {Beard}, \citenamefont {Okada}, \citenamefont {Lany}, \citenamefont {Gershon}, \citenamefont {Zakutayev}, \citenamefont {Tahersima}, \citenamefont {Sorger}, \citenamefont {Naughton}, \citenamefont {Kempa}, \citenamefont {Dagenais}, \citenamefont {Yao}, \citenamefont {Xu}, \citenamefont {Sheng}, \citenamefont {Bronstein}, \citenamefont {Rogers}, \citenamefont {Alivisatos}, \citenamefont {Nuzzo}, \citenamefont {Gordon}, \citenamefont {Wu}, \citenamefont {Wisser}, \citenamefont {Salleo}, \citenamefont {Dionne}, \citenamefont {Bermel}, \citenamefont {Greffet}, \citenamefont {Celanovic}, \citenamefont {Soljacic}, \citenamefont {Manor}, \citenamefont {Rotschild}, \citenamefont {Raman}, \citenamefont {Zhu}, \citenamefont {Fan},\ and\ \citenamefont {Chen}}]{boriskina2016}%
  \BibitemOpen
  \bibfield  {author} {\bibinfo {author} {\bibfnamefont {S.~V.}\ \bibnamefont {Boriskina}}, \bibinfo {author} {\bibfnamefont {M.~A.}\ \bibnamefont {Green}}, \bibinfo {author} {\bibfnamefont {K.}~\bibnamefont {Catchpole}}, \bibinfo {author} {\bibfnamefont {E.}~\bibnamefont {Yablonovitch}}, \bibinfo {author} {\bibfnamefont {M.~C.}\ \bibnamefont {Beard}}, \bibinfo {author} {\bibfnamefont {Y.}~\bibnamefont {Okada}}, \bibinfo {author} {\bibfnamefont {S.}~\bibnamefont {Lany}}, \bibinfo {author} {\bibfnamefont {T.}~\bibnamefont {Gershon}}, \bibinfo {author} {\bibfnamefont {A.}~\bibnamefont {Zakutayev}}, \bibinfo {author} {\bibfnamefont {M.~H.}\ \bibnamefont {Tahersima}}, \bibinfo {author} {\bibfnamefont {V.~J.}\ \bibnamefont {Sorger}}, \bibinfo {author} {\bibfnamefont {M.~J.}\ \bibnamefont {Naughton}}, \bibinfo {author} {\bibfnamefont {K.}~\bibnamefont {Kempa}}, \bibinfo {author} {\bibfnamefont {M.}~\bibnamefont {Dagenais}}, \bibinfo {author} {\bibfnamefont {Y.}~\bibnamefont {Yao}}, \bibinfo {author} {\bibfnamefont
  {L.}~\bibnamefont {Xu}}, \bibinfo {author} {\bibfnamefont {X.}~\bibnamefont {Sheng}}, \bibinfo {author} {\bibfnamefont {N.~D.}\ \bibnamefont {Bronstein}}, \bibinfo {author} {\bibfnamefont {J.~A.}\ \bibnamefont {Rogers}}, \bibinfo {author} {\bibfnamefont {A.~P.}\ \bibnamefont {Alivisatos}}, \bibinfo {author} {\bibfnamefont {R.~G.}\ \bibnamefont {Nuzzo}}, \bibinfo {author} {\bibfnamefont {J.~M.}\ \bibnamefont {Gordon}}, \bibinfo {author} {\bibfnamefont {D.~M.}\ \bibnamefont {Wu}}, \bibinfo {author} {\bibfnamefont {M.~D.}\ \bibnamefont {Wisser}}, \bibinfo {author} {\bibfnamefont {A.}~\bibnamefont {Salleo}}, \bibinfo {author} {\bibfnamefont {J.}~\bibnamefont {Dionne}}, \bibinfo {author} {\bibfnamefont {P.}~\bibnamefont {Bermel}}, \bibinfo {author} {\bibfnamefont {J.-J.}\ \bibnamefont {Greffet}}, \bibinfo {author} {\bibfnamefont {I.}~\bibnamefont {Celanovic}}, \bibinfo {author} {\bibfnamefont {M.}~\bibnamefont {Soljacic}}, \bibinfo {author} {\bibfnamefont {A.}~\bibnamefont {Manor}}, \bibinfo {author}
  {\bibfnamefont {C.}~\bibnamefont {Rotschild}}, \bibinfo {author} {\bibfnamefont {A.}~\bibnamefont {Raman}}, \bibinfo {author} {\bibfnamefont {L.}~\bibnamefont {Zhu}}, \bibinfo {author} {\bibfnamefont {S.}~\bibnamefont {Fan}},\ and\ \bibinfo {author} {\bibfnamefont {G.}~\bibnamefont {Chen}},\ }\bibfield  {title} {\bibinfo {title} {Roadmap on optical energy conversion},\ }\href {https://doi.org/10.1088/2040-8978/18/7/073004} {\bibfield  {journal} {\bibinfo  {journal} {Journal of Optics}\ }\textbf {\bibinfo {volume} {18}},\ \bibinfo {pages} {073004} (\bibinfo {year} {2016})}\BibitemShut {NoStop}%
\bibitem [{\citenamefont {Raj}\ \emph {et~al.}(2017)\citenamefont {Raj}, \citenamefont {van~de Voorde},\ and\ \citenamefont {Mahajan}}]{raj2017a}%
  \BibitemOpen
  \bibinfo {editor} {\bibfnamefont {B.}~\bibnamefont {Raj}}, \bibinfo {editor} {\bibfnamefont {M.~H.}\ \bibnamefont {van~de Voorde}},\ and\ \bibinfo {editor} {\bibfnamefont {Y.~R.}\ \bibnamefont {Mahajan}},\ eds.,\ \href@noop {} {\emph {\bibinfo {title} {Nanotechnology for Energy Sustainability. {{Volume}} 3}}}\ (\bibinfo  {publisher} {{Wiley-VCH}},\ \bibinfo {address} {{Weinheim}},\ \bibinfo {year} {2017})\BibitemShut {NoStop}%
\bibitem [{\citenamefont {Fiorino}\ \emph {et~al.}(2018)\citenamefont {Fiorino}, \citenamefont {Zhu}, \citenamefont {Thompson}, \citenamefont {Mittapally}, \citenamefont {Reddy},\ and\ \citenamefont {Meyhofer}}]{fiorino2018a}%
  \BibitemOpen
  \bibfield  {author} {\bibinfo {author} {\bibfnamefont {A.}~\bibnamefont {Fiorino}}, \bibinfo {author} {\bibfnamefont {L.}~\bibnamefont {Zhu}}, \bibinfo {author} {\bibfnamefont {D.}~\bibnamefont {Thompson}}, \bibinfo {author} {\bibfnamefont {R.}~\bibnamefont {Mittapally}}, \bibinfo {author} {\bibfnamefont {P.}~\bibnamefont {Reddy}},\ and\ \bibinfo {author} {\bibfnamefont {E.}~\bibnamefont {Meyhofer}},\ }\bibfield  {title} {\bibinfo {title} {Nanogap near-field thermophotovoltaics},\ }\href {https://doi.org/10.1038/s41565-018-0172-5} {\bibfield  {journal} {\bibinfo  {journal} {Nature Nanotechnology}\ }\textbf {\bibinfo {volume} {13}},\ \bibinfo {pages} {806} (\bibinfo {year} {2018})}\BibitemShut {NoStop}%
\bibitem [{\citenamefont {Park}\ \emph {et~al.}(2021)\citenamefont {Park}, \citenamefont {Asadchy}, \citenamefont {Zhao}, \citenamefont {Guo}, \citenamefont {Wang},\ and\ \citenamefont {Fan}}]{park2021}%
  \BibitemOpen
  \bibfield  {author} {\bibinfo {author} {\bibfnamefont {Y.}~\bibnamefont {Park}}, \bibinfo {author} {\bibfnamefont {V.~S.}\ \bibnamefont {Asadchy}}, \bibinfo {author} {\bibfnamefont {B.}~\bibnamefont {Zhao}}, \bibinfo {author} {\bibfnamefont {C.}~\bibnamefont {Guo}}, \bibinfo {author} {\bibfnamefont {J.}~\bibnamefont {Wang}},\ and\ \bibinfo {author} {\bibfnamefont {S.}~\bibnamefont {Fan}},\ }\bibfield  {title} {\bibinfo {title} {Violating {{Kirchhoff}}'s {{Law}} of {{Thermal Radiation}} in {{Semitransparent Structures}}},\ }\href {https://doi.org/10.1021/acsphotonics.1c00612} {\bibfield  {journal} {\bibinfo  {journal} {ACS Photonics}\ }\textbf {\bibinfo {volume} {8}},\ \bibinfo {pages} {2417} (\bibinfo {year} {2021})}\BibitemShut {NoStop}%
\bibitem [{\citenamefont {Zhu}\ \emph {et~al.}(2019)\citenamefont {Zhu}, \citenamefont {Fiorino}, \citenamefont {Thompson}, \citenamefont {Mittapally}, \citenamefont {Meyhofer},\ and\ \citenamefont {Reddy}}]{zhu2019c}%
  \BibitemOpen
  \bibfield  {author} {\bibinfo {author} {\bibfnamefont {L.}~\bibnamefont {Zhu}}, \bibinfo {author} {\bibfnamefont {A.}~\bibnamefont {Fiorino}}, \bibinfo {author} {\bibfnamefont {D.}~\bibnamefont {Thompson}}, \bibinfo {author} {\bibfnamefont {R.}~\bibnamefont {Mittapally}}, \bibinfo {author} {\bibfnamefont {E.}~\bibnamefont {Meyhofer}},\ and\ \bibinfo {author} {\bibfnamefont {P.}~\bibnamefont {Reddy}},\ }\bibfield  {title} {\bibinfo {title} {Near-field photonic cooling through control of the chemical potential of photons},\ }\href {https://doi.org/10.1038/s41586-019-0918-8} {\bibfield  {journal} {\bibinfo  {journal} {Nature}\ }\textbf {\bibinfo {volume} {566}},\ \bibinfo {pages} {239} (\bibinfo {year} {2019})}\BibitemShut {NoStop}%
\bibitem [{\citenamefont {Li}\ \emph {et~al.}(2019)\citenamefont {Li}, \citenamefont {Zhai}, \citenamefont {He}, \citenamefont {Gan}, \citenamefont {Wei}, \citenamefont {Heidarinejad}, \citenamefont {Dalgo}, \citenamefont {Mi}, \citenamefont {Zhao}, \citenamefont {Song}, \citenamefont {Dai}, \citenamefont {Chen}, \citenamefont {Aili}, \citenamefont {Vellore}, \citenamefont {Martini}, \citenamefont {Yang}, \citenamefont {Srebric}, \citenamefont {Yin},\ and\ \citenamefont {Hu}}]{li2019g}%
  \BibitemOpen
  \bibfield  {author} {\bibinfo {author} {\bibfnamefont {T.}~\bibnamefont {Li}}, \bibinfo {author} {\bibfnamefont {Y.}~\bibnamefont {Zhai}}, \bibinfo {author} {\bibfnamefont {S.}~\bibnamefont {He}}, \bibinfo {author} {\bibfnamefont {W.}~\bibnamefont {Gan}}, \bibinfo {author} {\bibfnamefont {Z.}~\bibnamefont {Wei}}, \bibinfo {author} {\bibfnamefont {M.}~\bibnamefont {Heidarinejad}}, \bibinfo {author} {\bibfnamefont {D.}~\bibnamefont {Dalgo}}, \bibinfo {author} {\bibfnamefont {R.}~\bibnamefont {Mi}}, \bibinfo {author} {\bibfnamefont {X.}~\bibnamefont {Zhao}}, \bibinfo {author} {\bibfnamefont {J.}~\bibnamefont {Song}}, \bibinfo {author} {\bibfnamefont {J.}~\bibnamefont {Dai}}, \bibinfo {author} {\bibfnamefont {C.}~\bibnamefont {Chen}}, \bibinfo {author} {\bibfnamefont {A.}~\bibnamefont {Aili}}, \bibinfo {author} {\bibfnamefont {A.}~\bibnamefont {Vellore}}, \bibinfo {author} {\bibfnamefont {A.}~\bibnamefont {Martini}}, \bibinfo {author} {\bibfnamefont {R.}~\bibnamefont {Yang}}, \bibinfo {author} {\bibfnamefont
  {J.}~\bibnamefont {Srebric}}, \bibinfo {author} {\bibfnamefont {X.}~\bibnamefont {Yin}},\ and\ \bibinfo {author} {\bibfnamefont {L.}~\bibnamefont {Hu}},\ }\bibfield  {title} {\bibinfo {title} {A radiative cooling structural material},\ }\href {https://doi.org/10.1126/science.aau9101} {\bibfield  {journal} {\bibinfo  {journal} {Science}\ }\textbf {\bibinfo {volume} {364}},\ \bibinfo {pages} {760} (\bibinfo {year} {2019})}\BibitemShut {NoStop}%
\bibitem [{\citenamefont {Li}\ \emph {et~al.}(2021{\natexlab{a}})\citenamefont {Li}, \citenamefont {Li}, \citenamefont {Han}, \citenamefont {Zheng}, \citenamefont {Li}, \citenamefont {Li}, \citenamefont {Fan},\ and\ \citenamefont {Qiu}}]{li2021e}%
  \BibitemOpen
  \bibfield  {author} {\bibinfo {author} {\bibfnamefont {Y.}~\bibnamefont {Li}}, \bibinfo {author} {\bibfnamefont {W.}~\bibnamefont {Li}}, \bibinfo {author} {\bibfnamefont {T.}~\bibnamefont {Han}}, \bibinfo {author} {\bibfnamefont {X.}~\bibnamefont {Zheng}}, \bibinfo {author} {\bibfnamefont {J.}~\bibnamefont {Li}}, \bibinfo {author} {\bibfnamefont {B.}~\bibnamefont {Li}}, \bibinfo {author} {\bibfnamefont {S.}~\bibnamefont {Fan}},\ and\ \bibinfo {author} {\bibfnamefont {C.-W.}\ \bibnamefont {Qiu}},\ }\bibfield  {title} {\bibinfo {title} {Transforming heat transfer with thermal metamaterials and devices},\ }\href {https://doi.org/10.1038/s41578-021-00283-2} {\bibfield  {journal} {\bibinfo  {journal} {Nature Reviews Materials}\ }\textbf {\bibinfo {volume} {6}},\ \bibinfo {pages} {488} (\bibinfo {year} {2021}{\natexlab{a}})}\BibitemShut {NoStop}%
\bibitem [{\citenamefont {Liu}\ \emph {et~al.}(2022)\citenamefont {Liu}, \citenamefont {Guo}, \citenamefont {Li},\ and\ \citenamefont {Fan}}]{liu2022a}%
  \BibitemOpen
  \bibfield  {author} {\bibinfo {author} {\bibfnamefont {T.}~\bibnamefont {Liu}}, \bibinfo {author} {\bibfnamefont {C.}~\bibnamefont {Guo}}, \bibinfo {author} {\bibfnamefont {W.}~\bibnamefont {Li}},\ and\ \bibinfo {author} {\bibfnamefont {S.}~\bibnamefont {Fan}},\ }\bibfield  {title} {\bibinfo {title} {Thermal photonics with broken symmetries},\ }\href {https://doi.org/10.1186/s43593-022-00025-z} {\bibfield  {journal} {\bibinfo  {journal} {eLight}\ }\textbf {\bibinfo {volume} {2}},\ \bibinfo {pages} {25} (\bibinfo {year} {2022})}\BibitemShut {NoStop}%
\bibitem [{\citenamefont {Guo}\ \emph {et~al.}(2022)\citenamefont {Guo}, \citenamefont {Zhao},\ and\ \citenamefont {Fan}}]{guo2022b}%
  \BibitemOpen
  \bibfield  {author} {\bibinfo {author} {\bibfnamefont {C.}~\bibnamefont {Guo}}, \bibinfo {author} {\bibfnamefont {B.}~\bibnamefont {Zhao}},\ and\ \bibinfo {author} {\bibfnamefont {S.}~\bibnamefont {Fan}},\ }\bibfield  {title} {\bibinfo {title} {Adjoint {{Kirchhoff}}'s {{Law}} and {{General Symmetry Implications}} for {{All Thermal Emitters}}},\ }\href {https://doi.org/10.1103/PhysRevX.12.021023} {\bibfield  {journal} {\bibinfo  {journal} {Physical Review X}\ }\textbf {\bibinfo {volume} {12}},\ \bibinfo {pages} {021023} (\bibinfo {year} {2022})}\BibitemShut {NoStop}%
\bibitem [{\citenamefont {Guo}\ and\ \citenamefont {Fan}(2023{\natexlab{a}})}]{guo2023b}%
  \BibitemOpen
  \bibfield  {author} {\bibinfo {author} {\bibfnamefont {C.}~\bibnamefont {Guo}}\ and\ \bibinfo {author} {\bibfnamefont {S.}~\bibnamefont {Fan}},\ }\bibfield  {title} {\bibinfo {title} {Majorization {{Theory}} for {{Unitary Control}} of {{Optical Absorption}} and {{Emission}}},\ }\href {https://doi.org/10.1103/PhysRevLett.130.146202} {\bibfield  {journal} {\bibinfo  {journal} {Physical Review Letters}\ }\textbf {\bibinfo {volume} {130}},\ \bibinfo {pages} {146202} (\bibinfo {year} {2023}{\natexlab{a}})}\BibitemShut {NoStop}%
\bibitem [{\citenamefont {Miller}(2013)}]{miller2013b}%
  \BibitemOpen
  \bibfield  {author} {\bibinfo {author} {\bibfnamefont {D.~A.~B.}\ \bibnamefont {Miller}},\ }\bibfield  {title} {\bibinfo {title} {Self-configuring universal linear optical component [{{Invited}}]},\ }\href {https://doi.org/10.1364/PRJ.1.000001} {\bibfield  {journal} {\bibinfo  {journal} {Photonics Research}\ }\textbf {\bibinfo {volume} {1}},\ \bibinfo {pages} {1} (\bibinfo {year} {2013})}\BibitemShut {NoStop}%
\bibitem [{\citenamefont {Carolan}\ \emph {et~al.}(2015)\citenamefont {Carolan}, \citenamefont {Harrold}, \citenamefont {Sparrow}, \citenamefont {{Mart{\'i}n-L{\'o}pez}}, \citenamefont {Russell}, \citenamefont {Silverstone}, \citenamefont {Shadbolt}, \citenamefont {Matsuda}, \citenamefont {Oguma}, \citenamefont {Itoh}, \citenamefont {Marshall}, \citenamefont {Thompson}, \citenamefont {Matthews}, \citenamefont {Hashimoto}, \citenamefont {O'Brien},\ and\ \citenamefont {Laing}}]{carolan2015}%
  \BibitemOpen
  \bibfield  {author} {\bibinfo {author} {\bibfnamefont {J.}~\bibnamefont {Carolan}}, \bibinfo {author} {\bibfnamefont {C.}~\bibnamefont {Harrold}}, \bibinfo {author} {\bibfnamefont {C.}~\bibnamefont {Sparrow}}, \bibinfo {author} {\bibfnamefont {E.}~\bibnamefont {{Mart{\'i}n-L{\'o}pez}}}, \bibinfo {author} {\bibfnamefont {N.~J.}\ \bibnamefont {Russell}}, \bibinfo {author} {\bibfnamefont {J.~W.}\ \bibnamefont {Silverstone}}, \bibinfo {author} {\bibfnamefont {P.~J.}\ \bibnamefont {Shadbolt}}, \bibinfo {author} {\bibfnamefont {N.}~\bibnamefont {Matsuda}}, \bibinfo {author} {\bibfnamefont {M.}~\bibnamefont {Oguma}}, \bibinfo {author} {\bibfnamefont {M.}~\bibnamefont {Itoh}}, \bibinfo {author} {\bibfnamefont {G.~D.}\ \bibnamefont {Marshall}}, \bibinfo {author} {\bibfnamefont {M.~G.}\ \bibnamefont {Thompson}}, \bibinfo {author} {\bibfnamefont {J.~C.~F.}\ \bibnamefont {Matthews}}, \bibinfo {author} {\bibfnamefont {T.}~\bibnamefont {Hashimoto}}, \bibinfo {author} {\bibfnamefont {J.~L.}\ \bibnamefont {O'Brien}},\ and\
  \bibinfo {author} {\bibfnamefont {A.}~\bibnamefont {Laing}},\ }\bibfield  {title} {\bibinfo {title} {Universal linear optics},\ }\href {https://doi.org/10.1126/science.aab3642} {\bibfield  {journal} {\bibinfo  {journal} {Science}\ }\textbf {\bibinfo {volume} {349}},\ \bibinfo {pages} {711} (\bibinfo {year} {2015})}\BibitemShut {NoStop}%
\bibitem [{\citenamefont {Shen}\ \emph {et~al.}(2017)\citenamefont {Shen}, \citenamefont {Harris}, \citenamefont {Skirlo}, \citenamefont {Prabhu}, \citenamefont {{Baehr-Jones}}, \citenamefont {Hochberg}, \citenamefont {Sun}, \citenamefont {Zhao}, \citenamefont {Larochelle}, \citenamefont {Englund},\ and\ \citenamefont {Solja{\v c}i{\'c}}}]{shen2017}%
  \BibitemOpen
  \bibfield  {author} {\bibinfo {author} {\bibfnamefont {Y.}~\bibnamefont {Shen}}, \bibinfo {author} {\bibfnamefont {N.~C.}\ \bibnamefont {Harris}}, \bibinfo {author} {\bibfnamefont {S.}~\bibnamefont {Skirlo}}, \bibinfo {author} {\bibfnamefont {M.}~\bibnamefont {Prabhu}}, \bibinfo {author} {\bibfnamefont {T.}~\bibnamefont {{Baehr-Jones}}}, \bibinfo {author} {\bibfnamefont {M.}~\bibnamefont {Hochberg}}, \bibinfo {author} {\bibfnamefont {X.}~\bibnamefont {Sun}}, \bibinfo {author} {\bibfnamefont {S.}~\bibnamefont {Zhao}}, \bibinfo {author} {\bibfnamefont {H.}~\bibnamefont {Larochelle}}, \bibinfo {author} {\bibfnamefont {D.}~\bibnamefont {Englund}},\ and\ \bibinfo {author} {\bibfnamefont {M.}~\bibnamefont {Solja{\v c}i{\'c}}},\ }\bibfield  {title} {\bibinfo {title} {Deep learning with coherent nanophotonic circuits},\ }\href {https://doi.org/10.1038/nphoton.2017.93} {\bibfield  {journal} {\bibinfo  {journal} {Nature Photonics}\ }\textbf {\bibinfo {volume} {11}},\ \bibinfo {pages} {441} (\bibinfo {year}
  {2017})}\BibitemShut {NoStop}%
\bibitem [{\citenamefont {Zhu}\ \emph {et~al.}(2021)\citenamefont {Zhu}, \citenamefont {Guo}, \citenamefont {Huang}, \citenamefont {Wang}, \citenamefont {Orenstein}, \citenamefont {Ruan},\ and\ \citenamefont {Fan}}]{zhu2021}%
  \BibitemOpen
  \bibfield  {author} {\bibinfo {author} {\bibfnamefont {T.}~\bibnamefont {Zhu}}, \bibinfo {author} {\bibfnamefont {C.}~\bibnamefont {Guo}}, \bibinfo {author} {\bibfnamefont {J.}~\bibnamefont {Huang}}, \bibinfo {author} {\bibfnamefont {H.}~\bibnamefont {Wang}}, \bibinfo {author} {\bibfnamefont {M.}~\bibnamefont {Orenstein}}, \bibinfo {author} {\bibfnamefont {Z.}~\bibnamefont {Ruan}},\ and\ \bibinfo {author} {\bibfnamefont {S.}~\bibnamefont {Fan}},\ }\bibfield  {title} {\bibinfo {title} {Topological optical differentiator},\ }\href {https://doi.org/10.1038/s41467-021-20972-4} {\bibfield  {journal} {\bibinfo  {journal} {Nature Communications}\ }\textbf {\bibinfo {volume} {12}},\ \bibinfo {pages} {680} (\bibinfo {year} {2021})}\BibitemShut {NoStop}%
\bibitem [{\citenamefont {Long}\ \emph {et~al.}(2021)\citenamefont {Long}, \citenamefont {Guo}, \citenamefont {Wang},\ and\ \citenamefont {Fan}}]{long2021}%
  \BibitemOpen
  \bibfield  {author} {\bibinfo {author} {\bibfnamefont {O.~Y.}\ \bibnamefont {Long}}, \bibinfo {author} {\bibfnamefont {C.}~\bibnamefont {Guo}}, \bibinfo {author} {\bibfnamefont {H.}~\bibnamefont {Wang}},\ and\ \bibinfo {author} {\bibfnamefont {S.}~\bibnamefont {Fan}},\ }\bibfield  {title} {\bibinfo {title} {Isotropic topological second-order spatial differentiator operating in transmission mode},\ }\href {https://doi.org/10.1364/OL.430699} {\bibfield  {journal} {\bibinfo  {journal} {Optics Letters}\ }\textbf {\bibinfo {volume} {46}},\ \bibinfo {pages} {3247} (\bibinfo {year} {2021})}\BibitemShut {NoStop}%
\bibitem [{\citenamefont {Popoff}\ \emph {et~al.}(2014)\citenamefont {Popoff}, \citenamefont {Goetschy}, \citenamefont {Liew}, \citenamefont {Stone},\ and\ \citenamefont {Cao}}]{popoff2014}%
  \BibitemOpen
  \bibfield  {author} {\bibinfo {author} {\bibfnamefont {S.~M.}\ \bibnamefont {Popoff}}, \bibinfo {author} {\bibfnamefont {A.}~\bibnamefont {Goetschy}}, \bibinfo {author} {\bibfnamefont {S.~F.}\ \bibnamefont {Liew}}, \bibinfo {author} {\bibfnamefont {A.~D.}\ \bibnamefont {Stone}},\ and\ \bibinfo {author} {\bibfnamefont {H.}~\bibnamefont {Cao}},\ }\bibfield  {title} {\bibinfo {title} {Coherent {{Control}} of {{Total Transmission}} of {{Light}} through {{Disordered Media}}},\ }\href {https://doi.org/10.1103/physrevlett.112.133903} {\bibfield  {journal} {\bibinfo  {journal} {Physical Review Letters}\ }\textbf {\bibinfo {volume} {112}},\ \bibinfo {pages} {133903} (\bibinfo {year} {2014})}\BibitemShut {NoStop}%
\bibitem [{\citenamefont {Liew}\ \emph {et~al.}(2016)\citenamefont {Liew}, \citenamefont {Popoff}, \citenamefont {Sheehan}, \citenamefont {Goetschy}, \citenamefont {Schmuttenmaer}, \citenamefont {Stone},\ and\ \citenamefont {Cao}}]{liew2016}%
  \BibitemOpen
  \bibfield  {author} {\bibinfo {author} {\bibfnamefont {S.~F.}\ \bibnamefont {Liew}}, \bibinfo {author} {\bibfnamefont {S.~M.}\ \bibnamefont {Popoff}}, \bibinfo {author} {\bibfnamefont {S.~W.}\ \bibnamefont {Sheehan}}, \bibinfo {author} {\bibfnamefont {A.}~\bibnamefont {Goetschy}}, \bibinfo {author} {\bibfnamefont {C.~A.}\ \bibnamefont {Schmuttenmaer}}, \bibinfo {author} {\bibfnamefont {A.~D.}\ \bibnamefont {Stone}},\ and\ \bibinfo {author} {\bibfnamefont {H.}~\bibnamefont {Cao}},\ }\bibfield  {title} {\bibinfo {title} {Coherent {{Control}} of {{Photocurrent}} in a {{Strongly Scattering Photoelectrochemical System}}},\ }\href {https://doi.org/10.1021/acsphotonics.5b00642} {\bibfield  {journal} {\bibinfo  {journal} {ACS Photonics}\ }\textbf {\bibinfo {volume} {3}},\ \bibinfo {pages} {449} (\bibinfo {year} {2016})}\BibitemShut {NoStop}%
\bibitem [{\citenamefont {Mounaix}\ \emph {et~al.}(2016)\citenamefont {Mounaix}, \citenamefont {Andreoli}, \citenamefont {Defienne}, \citenamefont {Volpe}, \citenamefont {Katz}, \citenamefont {Gr{\'e}sillon},\ and\ \citenamefont {Gigan}}]{mounaix2016}%
  \BibitemOpen
  \bibfield  {author} {\bibinfo {author} {\bibfnamefont {M.}~\bibnamefont {Mounaix}}, \bibinfo {author} {\bibfnamefont {D.}~\bibnamefont {Andreoli}}, \bibinfo {author} {\bibfnamefont {H.}~\bibnamefont {Defienne}}, \bibinfo {author} {\bibfnamefont {G.}~\bibnamefont {Volpe}}, \bibinfo {author} {\bibfnamefont {O.}~\bibnamefont {Katz}}, \bibinfo {author} {\bibfnamefont {S.}~\bibnamefont {Gr{\'e}sillon}},\ and\ \bibinfo {author} {\bibfnamefont {S.}~\bibnamefont {Gigan}},\ }\bibfield  {title} {\bibinfo {title} {Spatiotemporal {{Coherent Control}} of {{Light}} through a {{Multiple Scattering Medium}} with the {{Multispectral Transmission Matrix}}},\ }\href {https://doi.org/10.1103/physrevlett.116.253901} {\bibfield  {journal} {\bibinfo  {journal} {Physical Review Letters}\ }\textbf {\bibinfo {volume} {116}},\ \bibinfo {pages} {253901} (\bibinfo {year} {2016})}\BibitemShut {NoStop}%
\bibitem [{\citenamefont {Chong}\ \emph {et~al.}(2010)\citenamefont {Chong}, \citenamefont {Ge}, \citenamefont {Cao},\ and\ \citenamefont {Stone}}]{chong2010a}%
  \BibitemOpen
  \bibfield  {author} {\bibinfo {author} {\bibfnamefont {Y.~D.}\ \bibnamefont {Chong}}, \bibinfo {author} {\bibfnamefont {L.}~\bibnamefont {Ge}}, \bibinfo {author} {\bibfnamefont {H.}~\bibnamefont {Cao}},\ and\ \bibinfo {author} {\bibfnamefont {A.~D.}\ \bibnamefont {Stone}},\ }\bibfield  {title} {\bibinfo {title} {Coherent {{Perfect Absorbers}}: {{Time-Reversed Lasers}}},\ }\href {https://doi.org/10.1103/PhysRevLett.105.053901} {\bibfield  {journal} {\bibinfo  {journal} {Physical Review Letters}\ }\textbf {\bibinfo {volume} {105}},\ \bibinfo {pages} {053901} (\bibinfo {year} {2010})}\BibitemShut {NoStop}%
\bibitem [{\citenamefont {Wan}\ \emph {et~al.}(2011)\citenamefont {Wan}, \citenamefont {Chong}, \citenamefont {Ge}, \citenamefont {Noh}, \citenamefont {Stone},\ and\ \citenamefont {Cao}}]{wan2011}%
  \BibitemOpen
  \bibfield  {author} {\bibinfo {author} {\bibfnamefont {W.}~\bibnamefont {Wan}}, \bibinfo {author} {\bibfnamefont {Y.}~\bibnamefont {Chong}}, \bibinfo {author} {\bibfnamefont {L.}~\bibnamefont {Ge}}, \bibinfo {author} {\bibfnamefont {H.}~\bibnamefont {Noh}}, \bibinfo {author} {\bibfnamefont {A.~D.}\ \bibnamefont {Stone}},\ and\ \bibinfo {author} {\bibfnamefont {H.}~\bibnamefont {Cao}},\ }\bibfield  {title} {\bibinfo {title} {Time-reversed lasing and interferometric control of absorption.},\ }\href {https://doi.org/10.1126/science.1200735} {\bibfield  {journal} {\bibinfo  {journal} {Science}\ }\textbf {\bibinfo {volume} {331}},\ \bibinfo {pages} {889} (\bibinfo {year} {2011})}\BibitemShut {NoStop}%
\bibitem [{\citenamefont {Sun}\ \emph {et~al.}(2014)\citenamefont {Sun}, \citenamefont {Tan}, \citenamefont {Li}, \citenamefont {Li},\ and\ \citenamefont {Chen}}]{sun2014}%
  \BibitemOpen
  \bibfield  {author} {\bibinfo {author} {\bibfnamefont {Y.}~\bibnamefont {Sun}}, \bibinfo {author} {\bibfnamefont {W.}~\bibnamefont {Tan}}, \bibinfo {author} {\bibfnamefont {H.-q.}\ \bibnamefont {Li}}, \bibinfo {author} {\bibfnamefont {J.}~\bibnamefont {Li}},\ and\ \bibinfo {author} {\bibfnamefont {H.}~\bibnamefont {Chen}},\ }\bibfield  {title} {\bibinfo {title} {Experimental {{Demonstration}} of a {{Coherent Perfect Absorber}} with {{PT Phase Transition}}},\ }\href {https://doi.org/10.1103/PhysRevLett.112.143903} {\bibfield  {journal} {\bibinfo  {journal} {Physical Review Letters}\ }\textbf {\bibinfo {volume} {112}},\ \bibinfo {pages} {143903} (\bibinfo {year} {2014})}\BibitemShut {NoStop}%
\bibitem [{\citenamefont {Baranov}\ \emph {et~al.}(2017)\citenamefont {Baranov}, \citenamefont {Krasnok}, \citenamefont {Shegai}, \citenamefont {Al{\`u}},\ and\ \citenamefont {Chong}}]{baranov2017}%
  \BibitemOpen
  \bibfield  {author} {\bibinfo {author} {\bibfnamefont {D.~G.}\ \bibnamefont {Baranov}}, \bibinfo {author} {\bibfnamefont {A.}~\bibnamefont {Krasnok}}, \bibinfo {author} {\bibfnamefont {T.}~\bibnamefont {Shegai}}, \bibinfo {author} {\bibfnamefont {A.}~\bibnamefont {Al{\`u}}},\ and\ \bibinfo {author} {\bibfnamefont {Y.}~\bibnamefont {Chong}},\ }\bibfield  {title} {\bibinfo {title} {Coherent perfect absorbers: Linear control of light with light},\ }\href {https://doi.org/10.1038/natrevmats.2017.64} {\bibfield  {journal} {\bibinfo  {journal} {Nature Reviews Materials}\ }\textbf {\bibinfo {volume} {2}},\ \bibinfo {pages} {1} (\bibinfo {year} {2017})}\BibitemShut {NoStop}%
\bibitem [{\citenamefont {M{\"u}llers}\ \emph {et~al.}(2018)\citenamefont {M{\"u}llers}, \citenamefont {Santra}, \citenamefont {Baals}, \citenamefont {Jiang}, \citenamefont {Benary}, \citenamefont {Labouvie}, \citenamefont {Zezyulin}, \citenamefont {Konotop},\ and\ \citenamefont {Ott}}]{mullers2018}%
  \BibitemOpen
  \bibfield  {author} {\bibinfo {author} {\bibfnamefont {A.}~\bibnamefont {M{\"u}llers}}, \bibinfo {author} {\bibfnamefont {B.}~\bibnamefont {Santra}}, \bibinfo {author} {\bibfnamefont {C.}~\bibnamefont {Baals}}, \bibinfo {author} {\bibfnamefont {J.}~\bibnamefont {Jiang}}, \bibinfo {author} {\bibfnamefont {J.}~\bibnamefont {Benary}}, \bibinfo {author} {\bibfnamefont {R.}~\bibnamefont {Labouvie}}, \bibinfo {author} {\bibfnamefont {D.~A.}\ \bibnamefont {Zezyulin}}, \bibinfo {author} {\bibfnamefont {V.~V.}\ \bibnamefont {Konotop}},\ and\ \bibinfo {author} {\bibfnamefont {H.}~\bibnamefont {Ott}},\ }\bibfield  {title} {\bibinfo {title} {Coherent perfect absorption of nonlinear matter waves},\ }\href {https://doi.org/10.1126/sciadv.aat6539} {\bibfield  {journal} {\bibinfo  {journal} {Science Advances}\ }\textbf {\bibinfo {volume} {4}},\ \bibinfo {pages} {eaat6539} (\bibinfo {year} {2018})}\BibitemShut {NoStop}%
\bibitem [{\citenamefont {Pichler}\ \emph {et~al.}(2019)\citenamefont {Pichler}, \citenamefont {K{\"u}hmayer}, \citenamefont {B{\"o}hm}, \citenamefont {Brandst{\"o}tter}, \citenamefont {Ambichl}, \citenamefont {Kuhl},\ and\ \citenamefont {Rotter}}]{pichler2019}%
  \BibitemOpen
  \bibfield  {author} {\bibinfo {author} {\bibfnamefont {K.}~\bibnamefont {Pichler}}, \bibinfo {author} {\bibfnamefont {M.}~\bibnamefont {K{\"u}hmayer}}, \bibinfo {author} {\bibfnamefont {J.}~\bibnamefont {B{\"o}hm}}, \bibinfo {author} {\bibfnamefont {A.}~\bibnamefont {Brandst{\"o}tter}}, \bibinfo {author} {\bibfnamefont {P.}~\bibnamefont {Ambichl}}, \bibinfo {author} {\bibfnamefont {U.}~\bibnamefont {Kuhl}},\ and\ \bibinfo {author} {\bibfnamefont {S.}~\bibnamefont {Rotter}},\ }\bibfield  {title} {\bibinfo {title} {Random anti-lasing through coherent perfect absorption in a disordered medium},\ }\href {https://doi.org/10.1038/s41586-019-0971-3} {\bibfield  {journal} {\bibinfo  {journal} {Nature}\ }\textbf {\bibinfo {volume} {567}},\ \bibinfo {pages} {351} (\bibinfo {year} {2019})}\BibitemShut {NoStop}%
\bibitem [{\citenamefont {Sweeney}\ \emph {et~al.}(2019)\citenamefont {Sweeney}, \citenamefont {Hsu}, \citenamefont {Rotter},\ and\ \citenamefont {Stone}}]{sweeney2019}%
  \BibitemOpen
  \bibfield  {author} {\bibinfo {author} {\bibfnamefont {W.~R.}\ \bibnamefont {Sweeney}}, \bibinfo {author} {\bibfnamefont {C.~W.}\ \bibnamefont {Hsu}}, \bibinfo {author} {\bibfnamefont {S.}~\bibnamefont {Rotter}},\ and\ \bibinfo {author} {\bibfnamefont {A.~D.}\ \bibnamefont {Stone}},\ }\bibfield  {title} {\bibinfo {title} {Perfectly {{Absorbing Exceptional Points}} and {{Chiral Absorbers}}},\ }\href {https://doi.org/10.1103/PhysRevLett.122.093901} {\bibfield  {journal} {\bibinfo  {journal} {Physical Review Letters}\ }\textbf {\bibinfo {volume} {122}},\ \bibinfo {pages} {093901} (\bibinfo {year} {2019})}\BibitemShut {NoStop}%
\bibitem [{\citenamefont {Chen}\ \emph {et~al.}(2020)\citenamefont {Chen}, \citenamefont {Kottos},\ and\ \citenamefont {Anlage}}]{chen2020f}%
  \BibitemOpen
  \bibfield  {author} {\bibinfo {author} {\bibfnamefont {L.}~\bibnamefont {Chen}}, \bibinfo {author} {\bibfnamefont {T.}~\bibnamefont {Kottos}},\ and\ \bibinfo {author} {\bibfnamefont {S.~M.}\ \bibnamefont {Anlage}},\ }\bibfield  {title} {\bibinfo {title} {Perfect absorption in complex scattering systems with or without hidden symmetries},\ }\href {https://doi.org/10.1038/s41467-020-19645-5} {\bibfield  {journal} {\bibinfo  {journal} {Nature Communications}\ }\textbf {\bibinfo {volume} {11}},\ \bibinfo {pages} {5826} (\bibinfo {year} {2020})}\BibitemShut {NoStop}%
\bibitem [{\citenamefont {Wang}\ \emph {et~al.}(2021)\citenamefont {Wang}, \citenamefont {Sweeney}, \citenamefont {Stone},\ and\ \citenamefont {Yang}}]{wang2021}%
  \BibitemOpen
  \bibfield  {author} {\bibinfo {author} {\bibfnamefont {C.}~\bibnamefont {Wang}}, \bibinfo {author} {\bibfnamefont {W.~R.}\ \bibnamefont {Sweeney}}, \bibinfo {author} {\bibfnamefont {A.~D.}\ \bibnamefont {Stone}},\ and\ \bibinfo {author} {\bibfnamefont {L.}~\bibnamefont {Yang}},\ }\bibfield  {title} {\bibinfo {title} {Coherent perfect absorption at an exceptional point},\ }\href {https://doi.org/10.1126/science.abj1028} {\bibfield  {journal} {\bibinfo  {journal} {Science}\ }\textbf {\bibinfo {volume} {373}},\ \bibinfo {pages} {1261} (\bibinfo {year} {2021})}\BibitemShut {NoStop}%
\bibitem [{\citenamefont {Slobodkin}\ \emph {et~al.}(2022)\citenamefont {Slobodkin}, \citenamefont {Weinberg}, \citenamefont {H{\"o}rner}, \citenamefont {Pichler}, \citenamefont {Rotter},\ and\ \citenamefont {Katz}}]{slobodkin2022}%
  \BibitemOpen
  \bibfield  {author} {\bibinfo {author} {\bibfnamefont {Y.}~\bibnamefont {Slobodkin}}, \bibinfo {author} {\bibfnamefont {G.}~\bibnamefont {Weinberg}}, \bibinfo {author} {\bibfnamefont {H.}~\bibnamefont {H{\"o}rner}}, \bibinfo {author} {\bibfnamefont {K.}~\bibnamefont {Pichler}}, \bibinfo {author} {\bibfnamefont {S.}~\bibnamefont {Rotter}},\ and\ \bibinfo {author} {\bibfnamefont {O.}~\bibnamefont {Katz}},\ }\bibfield  {title} {\bibinfo {title} {Massively degenerate coherent perfect absorber for arbitrary wavefronts},\ }\href {https://doi.org/10.1126/science.abq8103} {\bibfield  {journal} {\bibinfo  {journal} {Science}\ }\textbf {\bibinfo {volume} {377}},\ \bibinfo {pages} {995} (\bibinfo {year} {2022})}\BibitemShut {NoStop}%
\bibitem [{\citenamefont {Krasnok}\ \emph {et~al.}(2019)\citenamefont {Krasnok}, \citenamefont {Baranov}, \citenamefont {Li}, \citenamefont {Miri}, \citenamefont {Monticone},\ and\ \citenamefont {Al{\'u}}}]{krasnok2019a}%
  \BibitemOpen
  \bibfield  {author} {\bibinfo {author} {\bibfnamefont {A.}~\bibnamefont {Krasnok}}, \bibinfo {author} {\bibfnamefont {D.}~\bibnamefont {Baranov}}, \bibinfo {author} {\bibfnamefont {H.}~\bibnamefont {Li}}, \bibinfo {author} {\bibfnamefont {M.-A.}\ \bibnamefont {Miri}}, \bibinfo {author} {\bibfnamefont {F.}~\bibnamefont {Monticone}},\ and\ \bibinfo {author} {\bibfnamefont {A.}~\bibnamefont {Al{\'u}}},\ }\bibfield  {title} {\bibinfo {title} {Anomalies in light scattering},\ }\href {https://doi.org/10.1364/AOP.11.000892} {\bibfield  {journal} {\bibinfo  {journal} {Advances in Optics and Photonics}\ }\textbf {\bibinfo {volume} {11}},\ \bibinfo {pages} {892} (\bibinfo {year} {2019})}\BibitemShut {NoStop}%
\bibitem [{\citenamefont {Ni}\ \emph {et~al.}(2023)\citenamefont {Ni}, \citenamefont {Yves}, \citenamefont {Krasnok},\ and\ \citenamefont {Al{\`u}}}]{ni2023}%
  \BibitemOpen
  \bibfield  {author} {\bibinfo {author} {\bibfnamefont {X.}~\bibnamefont {Ni}}, \bibinfo {author} {\bibfnamefont {S.}~\bibnamefont {Yves}}, \bibinfo {author} {\bibfnamefont {A.}~\bibnamefont {Krasnok}},\ and\ \bibinfo {author} {\bibfnamefont {A.}~\bibnamefont {Al{\`u}}},\ }\bibfield  {title} {\bibinfo {title} {Topological {{Metamaterials}}},\ }\href {https://doi.org/10.1021/acs.chemrev.2c00800} {\bibfield  {journal} {\bibinfo  {journal} {Chemical Reviews}\ }\textbf {\bibinfo {volume} {123}},\ \bibinfo {pages} {7585} (\bibinfo {year} {2023})}\BibitemShut {NoStop}%
\bibitem [{\citenamefont {Guo}\ \emph {et~al.}(2023)\citenamefont {Guo}, \citenamefont {Li}, \citenamefont {Xiao},\ and\ \citenamefont {Fan}}]{guo2023}%
  \BibitemOpen
  \bibfield  {author} {\bibinfo {author} {\bibfnamefont {C.}~\bibnamefont {Guo}}, \bibinfo {author} {\bibfnamefont {J.}~\bibnamefont {Li}}, \bibinfo {author} {\bibfnamefont {M.}~\bibnamefont {Xiao}},\ and\ \bibinfo {author} {\bibfnamefont {S.}~\bibnamefont {Fan}},\ }\bibfield  {title} {\bibinfo {title} {Singular topology of scattering matrices},\ }\href {https://doi.org/10.1103/PhysRevB.108.155418} {\bibfield  {journal} {\bibinfo  {journal} {Physical Review B}\ }\textbf {\bibinfo {volume} {108}},\ \bibinfo {pages} {155418} (\bibinfo {year} {2023})}\BibitemShut {NoStop}%
\bibitem [{\citenamefont {Longhi}(2018)}]{longhi2018}%
  \BibitemOpen
  \bibfield  {author} {\bibinfo {author} {\bibfnamefont {S.}~\bibnamefont {Longhi}},\ }\bibfield  {title} {\bibinfo {title} {Coherent virtual absorption for discretized light},\ }\href {https://doi.org/10.1364/OL.43.002122} {\bibfield  {journal} {\bibinfo  {journal} {Optics Letters}\ }\textbf {\bibinfo {volume} {43}},\ \bibinfo {pages} {2122} (\bibinfo {year} {2018})}\BibitemShut {NoStop}%
\bibitem [{\citenamefont {Trainiti}\ \emph {et~al.}(2019)\citenamefont {Trainiti}, \citenamefont {Ra'di}, \citenamefont {Ruzzene},\ and\ \citenamefont {Al{\`u}}}]{trainiti2019}%
  \BibitemOpen
  \bibfield  {author} {\bibinfo {author} {\bibfnamefont {G.}~\bibnamefont {Trainiti}}, \bibinfo {author} {\bibfnamefont {Y.}~\bibnamefont {Ra'di}}, \bibinfo {author} {\bibfnamefont {M.}~\bibnamefont {Ruzzene}},\ and\ \bibinfo {author} {\bibfnamefont {A.}~\bibnamefont {Al{\`u}}},\ }\bibfield  {title} {\bibinfo {title} {Coherent virtual absorption of elastodynamic waves},\ }\href {https://doi.org/10.1126/sciadv.aaw3255} {\bibfield  {journal} {\bibinfo  {journal} {Science Advances}\ }\textbf {\bibinfo {volume} {5}},\ \bibinfo {pages} {eaaw3255} (\bibinfo {year} {2019})}\BibitemShut {NoStop}%
\bibitem [{\citenamefont {Zhong}\ \emph {et~al.}(2020)\citenamefont {Zhong}, \citenamefont {Simonson}, \citenamefont {Kottos},\ and\ \citenamefont {{El-Ganainy}}}]{zhong2020b}%
  \BibitemOpen
  \bibfield  {author} {\bibinfo {author} {\bibfnamefont {Q.}~\bibnamefont {Zhong}}, \bibinfo {author} {\bibfnamefont {L.}~\bibnamefont {Simonson}}, \bibinfo {author} {\bibfnamefont {T.}~\bibnamefont {Kottos}},\ and\ \bibinfo {author} {\bibfnamefont {R.}~\bibnamefont {{El-Ganainy}}},\ }\bibfield  {title} {\bibinfo {title} {Coherent virtual absorption of light in microring resonators},\ }\href {https://doi.org/10.1103/PhysRevResearch.2.013362} {\bibfield  {journal} {\bibinfo  {journal} {Physical Review Research}\ }\textbf {\bibinfo {volume} {2}},\ \bibinfo {pages} {013362} (\bibinfo {year} {2020})}\BibitemShut {NoStop}%
\bibitem [{\citenamefont {Sweeney}\ \emph {et~al.}(2020)\citenamefont {Sweeney}, \citenamefont {Hsu},\ and\ \citenamefont {Stone}}]{sweeney2020a}%
  \BibitemOpen
  \bibfield  {author} {\bibinfo {author} {\bibfnamefont {W.~R.}\ \bibnamefont {Sweeney}}, \bibinfo {author} {\bibfnamefont {C.~W.}\ \bibnamefont {Hsu}},\ and\ \bibinfo {author} {\bibfnamefont {A.~D.}\ \bibnamefont {Stone}},\ }\bibfield  {title} {\bibinfo {title} {Theory of reflectionless scattering modes},\ }\href {https://doi.org/10.1103/PhysRevA.102.063511} {\bibfield  {journal} {\bibinfo  {journal} {Physical Review A}\ }\textbf {\bibinfo {volume} {102}},\ \bibinfo {pages} {063511} (\bibinfo {year} {2020})}\BibitemShut {NoStop}%
\bibitem [{\citenamefont {Stone}\ \emph {et~al.}(2021)\citenamefont {Stone}, \citenamefont {Sweeney}, \citenamefont {Hsu}, \citenamefont {Wisal},\ and\ \citenamefont {Wang}}]{stone2021}%
  \BibitemOpen
  \bibfield  {author} {\bibinfo {author} {\bibfnamefont {A.~D.}\ \bibnamefont {Stone}}, \bibinfo {author} {\bibfnamefont {W.~R.}\ \bibnamefont {Sweeney}}, \bibinfo {author} {\bibfnamefont {C.~W.}\ \bibnamefont {Hsu}}, \bibinfo {author} {\bibfnamefont {K.}~\bibnamefont {Wisal}},\ and\ \bibinfo {author} {\bibfnamefont {Z.}~\bibnamefont {Wang}},\ }\bibfield  {title} {\bibinfo {title} {Reflectionless excitation of arbitrary photonic structures: A general theory},\ }\href {https://doi.org/10.1515/nanoph-2020-0403} {\bibfield  {journal} {\bibinfo  {journal} {Nanophotonics}\ }\textbf {\bibinfo {volume} {10}},\ \bibinfo {pages} {343} (\bibinfo {year} {2021})}\BibitemShut {NoStop}%
\bibitem [{\citenamefont {Horodynski}\ \emph {et~al.}(2022)\citenamefont {Horodynski}, \citenamefont {K{\"u}hmayer}, \citenamefont {Ferise}, \citenamefont {Rotter},\ and\ \citenamefont {Davy}}]{horodynski2022}%
  \BibitemOpen
  \bibfield  {author} {\bibinfo {author} {\bibfnamefont {M.}~\bibnamefont {Horodynski}}, \bibinfo {author} {\bibfnamefont {M.}~\bibnamefont {K{\"u}hmayer}}, \bibinfo {author} {\bibfnamefont {C.}~\bibnamefont {Ferise}}, \bibinfo {author} {\bibfnamefont {S.}~\bibnamefont {Rotter}},\ and\ \bibinfo {author} {\bibfnamefont {M.}~\bibnamefont {Davy}},\ }\bibfield  {title} {\bibinfo {title} {Anti-reflection structure for perfect transmission through complex media},\ }\href {https://doi.org/10.1038/s41586-022-04843-6} {\bibfield  {journal} {\bibinfo  {journal} {Nature}\ }\textbf {\bibinfo {volume} {607}},\ \bibinfo {pages} {281} (\bibinfo {year} {2022})}\BibitemShut {NoStop}%
\bibitem [{\citenamefont {Sol}\ \emph {et~al.}(2023)\citenamefont {Sol}, \citenamefont {Alhulaymi}, \citenamefont {Stone},\ and\ \citenamefont {{del Hougne}}}]{sol2023}%
  \BibitemOpen
  \bibfield  {author} {\bibinfo {author} {\bibfnamefont {J.}~\bibnamefont {Sol}}, \bibinfo {author} {\bibfnamefont {A.}~\bibnamefont {Alhulaymi}}, \bibinfo {author} {\bibfnamefont {A.~D.}\ \bibnamefont {Stone}},\ and\ \bibinfo {author} {\bibfnamefont {P.}~\bibnamefont {{del Hougne}}},\ }\bibfield  {title} {\bibinfo {title} {Reflectionless programmable signal routers},\ }\href {https://doi.org/10.1126/sciadv.adf0323} {\bibfield  {journal} {\bibinfo  {journal} {Science Advances}\ }\textbf {\bibinfo {volume} {9}},\ \bibinfo {pages} {eadf0323} (\bibinfo {year} {2023})}\BibitemShut {NoStop}%
\bibitem [{\citenamefont {Bohren}\ and\ \citenamefont {Huffman}(2004)}]{bohren2004}%
  \BibitemOpen
  \bibfield  {author} {\bibinfo {author} {\bibfnamefont {C.~F.}\ \bibnamefont {Bohren}}\ and\ \bibinfo {author} {\bibfnamefont {D.~R.}\ \bibnamefont {Huffman}},\ }\href@noop {} {\emph {\bibinfo {title} {Absorption and Scattering of Light by Small Particles}}}\ (\bibinfo  {publisher} {{Wiley-VCH}},\ \bibinfo {address} {{Weinheim}},\ \bibinfo {year} {2004})\BibitemShut {NoStop}%
\bibitem [{\citenamefont {Taylor}(2006)}]{taylor2006}%
  \BibitemOpen
  \bibfield  {author} {\bibinfo {author} {\bibfnamefont {J.~R.}\ \bibnamefont {Taylor}},\ }\href@noop {} {\emph {\bibinfo {title} {Scattering Theory: The Quantum Theory of Nonrelativistic Collisions}}}\ (\bibinfo  {publisher} {{Dover Publications}},\ \bibinfo {address} {{Mineola, NY}},\ \bibinfo {year} {2006})\BibitemShut {NoStop}%
\bibitem [{Note1()}]{Note1}%
  \BibitemOpen
  \bibinfo {note} {\textcolor {black}{The scattering coefficient is defined as $\chi _{s} = \lvert \beta \rvert ^{2} = \chi _{e} - \chi _{a}$.}}\BibitemShut {Stop}%
\bibitem [{\citenamefont {Ambichl}\ \emph {et~al.}(2017)\citenamefont {Ambichl}, \citenamefont {Brandst{\"o}tter}, \citenamefont {B{\"o}hm}, \citenamefont {K{\"u}hmayer}, \citenamefont {Kuhl},\ and\ \citenamefont {Rotter}}]{ambichl2017}%
  \BibitemOpen
  \bibfield  {author} {\bibinfo {author} {\bibfnamefont {P.}~\bibnamefont {Ambichl}}, \bibinfo {author} {\bibfnamefont {A.}~\bibnamefont {Brandst{\"o}tter}}, \bibinfo {author} {\bibfnamefont {J.}~\bibnamefont {B{\"o}hm}}, \bibinfo {author} {\bibfnamefont {M.}~\bibnamefont {K{\"u}hmayer}}, \bibinfo {author} {\bibfnamefont {U.}~\bibnamefont {Kuhl}},\ and\ \bibinfo {author} {\bibfnamefont {S.}~\bibnamefont {Rotter}},\ }\bibfield  {title} {\bibinfo {title} {Focusing inside {{Disordered Media}} with the {{Generalized Wigner-Smith Operator}}},\ }\href {https://doi.org/10.1103/PhysRevLett.119.033903} {\bibfield  {journal} {\bibinfo  {journal} {Physical Review Letters}\ }\textbf {\bibinfo {volume} {119}},\ \bibinfo {pages} {033903} (\bibinfo {year} {2017})}\BibitemShut {NoStop}%
\bibitem [{\citenamefont {Rotter}\ and\ \citenamefont {Gigan}(2017)}]{rotter2017}%
  \BibitemOpen
  \bibfield  {author} {\bibinfo {author} {\bibfnamefont {S.}~\bibnamefont {Rotter}}\ and\ \bibinfo {author} {\bibfnamefont {S.}~\bibnamefont {Gigan}},\ }\bibfield  {title} {\bibinfo {title} {Light fields in complex media: {{Mesoscopic}} scattering meets wave control},\ }\href {https://doi.org/10.1103/RevModPhys.89.015005} {\bibfield  {journal} {\bibinfo  {journal} {Reviews of Modern Physics}\ }\textbf {\bibinfo {volume} {89}},\ \bibinfo {pages} {015005} (\bibinfo {year} {2017})}\BibitemShut {NoStop}%
\bibitem [{\citenamefont {Cao}\ \emph {et~al.}(2022)\citenamefont {Cao}, \citenamefont {Mosk},\ and\ \citenamefont {Rotter}}]{cao2022a}%
  \BibitemOpen
  \bibfield  {author} {\bibinfo {author} {\bibfnamefont {H.}~\bibnamefont {Cao}}, \bibinfo {author} {\bibfnamefont {A.~P.}\ \bibnamefont {Mosk}},\ and\ \bibinfo {author} {\bibfnamefont {S.}~\bibnamefont {Rotter}},\ }\bibfield  {title} {\bibinfo {title} {Shaping the propagation of light in complex media},\ }\href {https://doi.org/10.1038/s41567-022-01677-x} {\bibfield  {journal} {\bibinfo  {journal} {Nature Physics}\ }\textbf {\bibinfo {volume} {18}},\ \bibinfo {pages} {994} (\bibinfo {year} {2022})}\BibitemShut {NoStop}%
\bibitem [{Note2()}]{Note2}%
  \BibitemOpen
  \bibinfo {note} {\textcolor {black}{The compression of a linear operator $\protect \hat {O}$ on a Hilbert space $H$ to a subspace $H_i$ is the operator $\protect \hat {P} \protect \hat {O}: H_i \to H_i$, where $\protect \hat {P}$ is the projection from $H$ to $H_i$~\cite {halmos1982,istratescu2020}}}\BibitemShut {NoStop}%
\bibitem [{\citenamefont {Perkampus}(1992)}]{perkampus1992}%
  \BibitemOpen
  \bibfield  {author} {\bibinfo {author} {\bibfnamefont {H.-H.}\ \bibnamefont {Perkampus}},\ }\href {https://doi.org/10.1007/978-3-642-77477-5} {\emph {\bibinfo {title} {{{UV-VIS Spectroscopy}} and {{Its Applications}}}}}\ (\bibinfo  {publisher} {Springer Berlin Heidelberg},\ \bibinfo {address} {Berlin, Heidelberg},\ \bibinfo {year} {1992})\BibitemShut {NoStop}%
\bibitem [{\citenamefont {Hammes}(2005)}]{hammes2005}%
  \BibitemOpen
  \bibfield  {author} {\bibinfo {author} {\bibfnamefont {G.~G.}\ \bibnamefont {Hammes}},\ }\href@noop {} {\emph {\bibinfo {title} {Spectroscopy for the {{Biological Sciences}}}}}\ (\bibinfo  {publisher} {John Wiley \& Sons},\ \bibinfo {year} {2005})\BibitemShut {NoStop}%
\bibitem [{\citenamefont {Parson}(2015)}]{parson2015}%
  \BibitemOpen
  \bibfield  {author} {\bibinfo {author} {\bibfnamefont {W.~W.}\ \bibnamefont {Parson}},\ }\href@noop {} {\emph {\bibinfo {title} {Modern Optical Spectroscopy}}}\ (\bibinfo  {publisher} {Springer Berlin Heidelberg},\ \bibinfo {address} {New York, NY},\ \bibinfo {year} {2015})\BibitemShut {NoStop}%
\bibitem [{\citenamefont {Linne}(2024)}]{linne2024}%
  \BibitemOpen
  \bibfield  {author} {\bibinfo {author} {\bibfnamefont {M.~A.}\ \bibnamefont {Linne}},\ }\href@noop {} {\emph {\bibinfo {title} {Spectroscopic {{Measurement}}: {{An Introduction}} to the {{Fundamentals}}}}}\ (\bibinfo  {publisher} {Elsevier},\ \bibinfo {year} {2024})\BibitemShut {NoStop}%
\bibitem [{\citenamefont {Hoff}(2001)}]{hoff2001}%
  \BibitemOpen
  \bibfield  {author} {\bibinfo {author} {\bibfnamefont {L.}~\bibnamefont {Hoff}},\ }\href {https://doi.org/10.1007/978-94-017-0613-1} {\emph {\bibinfo {title} {Acoustic {{Characterization}} of {{Contrast Agents}} for {{Medical Ultrasound Imaging}}}}}\ (\bibinfo  {publisher} {Springer Netherlands},\ \bibinfo {address} {Dordrecht},\ \bibinfo {year} {2001})\BibitemShut {NoStop}%
\bibitem [{\citenamefont {Platt}\ and\ \citenamefont {Stutz}(2008)}]{platt2008}%
  \BibitemOpen
  \bibfield  {author} {\bibinfo {author} {\bibfnamefont {U.}~\bibnamefont {Platt}}\ and\ \bibinfo {author} {\bibfnamefont {J.}~\bibnamefont {Stutz}},\ }\href {https://doi.org/10.1007/978-3-540-75776-4} {\emph {\bibinfo {title} {Differential {{Optical Absorption Spectroscopy}}}}},\ Physics of {{Earth}} and {{Space Environments}}\ (\bibinfo  {publisher} {Springer Berlin Heidelberg},\ \bibinfo {address} {Berlin, Heidelberg},\ \bibinfo {year} {2008})\BibitemShut {NoStop}%
\bibitem [{\citenamefont {Rao}(2012)}]{rao2012}%
  \BibitemOpen
  \bibfield  {author} {\bibinfo {author} {\bibfnamefont {K.~N.}\ \bibnamefont {Rao}},\ }\href@noop {} {\emph {\bibinfo {title} {Spectroscopy of the Earth's {{Atmosphere}} and Interstellar {{Medium}}}}}\ (\bibinfo  {publisher} {Academic Press},\ \bibinfo {year} {2012})\BibitemShut {NoStop}%
\bibitem [{\citenamefont {Tennyson}(2019)}]{tennyson2019}%
  \BibitemOpen
  \bibfield  {author} {\bibinfo {author} {\bibfnamefont {J.}~\bibnamefont {Tennyson}},\ }\href {https://doi.org/10.1142/q0207} {\emph {\bibinfo {title} {Astronomical {{Spectroscopy}}: {{An Introduction}} to the {{Atomic}} and {{Molecular Physics}} of {{Astronomical Spectroscopy}}}}},\ \bibinfo {edition} {3rd}\ ed.,\ Advanced {{Textbooks}} in {{Physics}}\ (\bibinfo  {publisher} {WORLD SCIENTIFIC (EUROPE)},\ \bibinfo {year} {2019})\BibitemShut {NoStop}%
\bibitem [{\citenamefont {Hill}\ \emph {et~al.}(1990)\citenamefont {Hill}, \citenamefont {Malo}, \citenamefont {Vineberg}, \citenamefont {Bilodeau}, \citenamefont {Johnson},\ and\ \citenamefont {Skinner}}]{hill1990}%
  \BibitemOpen
  \bibfield  {author} {\bibinfo {author} {\bibfnamefont {K.~O.}\ \bibnamefont {Hill}}, \bibinfo {author} {\bibfnamefont {B.}~\bibnamefont {Malo}}, \bibinfo {author} {\bibfnamefont {K.}~\bibnamefont {Vineberg}}, \bibinfo {author} {\bibfnamefont {F.}~\bibnamefont {Bilodeau}}, \bibinfo {author} {\bibfnamefont {D.~C.}\ \bibnamefont {Johnson}},\ and\ \bibinfo {author} {\bibfnamefont {I.}~\bibnamefont {Skinner}},\ }\bibfield  {title} {\bibinfo {title} {Efficient mode conversion in telecommunication fiber using externally written gratings},\ }in\ \href {https://doi.org/10.1364/OAM.1990.TuRR2} {\emph {\bibinfo {booktitle} {Optical {{Society}} of {{America Annual Meeting}} (1990), Paper {{TuRR2}}}}}\ (\bibinfo  {publisher} {{Optica Publishing Group}},\ \bibinfo {year} {1990})\ p.\ \bibinfo {pages} {TuRR2}\BibitemShut {NoStop}%
\bibitem [{\citenamefont {Lee}\ and\ \citenamefont {Erdogan}(2001)}]{lee2001a}%
  \BibitemOpen
  \bibfield  {author} {\bibinfo {author} {\bibfnamefont {K.~S.}\ \bibnamefont {Lee}}\ and\ \bibinfo {author} {\bibfnamefont {T.}~\bibnamefont {Erdogan}},\ }\bibfield  {title} {\bibinfo {title} {Fiber mode conversion with tilted gratings in an optical fiber},\ }\href {https://doi.org/10.1364/JOSAA.18.001176} {\bibfield  {journal} {\bibinfo  {journal} {JOSA A}\ }\textbf {\bibinfo {volume} {18}},\ \bibinfo {pages} {1176} (\bibinfo {year} {2001})}\BibitemShut {NoStop}%
\bibitem [{\citenamefont {Miller}(2012)}]{miller2012}%
  \BibitemOpen
  \bibfield  {author} {\bibinfo {author} {\bibfnamefont {D.~A.~B.}\ \bibnamefont {Miller}},\ }\bibfield  {title} {\bibinfo {title} {All linear optical devices are mode converters},\ }\href {https://doi.org/10.1364/OE.20.023985} {\bibfield  {journal} {\bibinfo  {journal} {Optics Express}\ }\textbf {\bibinfo {volume} {20}},\ \bibinfo {pages} {23985} (\bibinfo {year} {2012})}\BibitemShut {NoStop}%
\bibitem [{\citenamefont {Lu}\ and\ \citenamefont {Vu{\v c}kovi{\'c}}(2012)}]{lu2012}%
  \BibitemOpen
  \bibfield  {author} {\bibinfo {author} {\bibfnamefont {J.}~\bibnamefont {Lu}}\ and\ \bibinfo {author} {\bibfnamefont {J.}~\bibnamefont {Vu{\v c}kovi{\'c}}},\ }\bibfield  {title} {\bibinfo {title} {Objective-first design of high-efficiency, small-footprint couplers between arbitrary nanophotonic waveguide modes},\ }\href {https://doi.org/10.1364/OE.20.007221} {\bibfield  {journal} {\bibinfo  {journal} {Optics Express}\ }\textbf {\bibinfo {volume} {20}},\ \bibinfo {pages} {7221} (\bibinfo {year} {2012})}\BibitemShut {NoStop}%
\bibitem [{\citenamefont {Wang}\ \emph {et~al.}(2012)\citenamefont {Wang}, \citenamefont {Yang}, \citenamefont {Fazal}, \citenamefont {Ahmed}, \citenamefont {Yan}, \citenamefont {Huang}, \citenamefont {Ren}, \citenamefont {Yue}, \citenamefont {Dolinar}, \citenamefont {Tur},\ and\ \citenamefont {Willner}}]{wang2012b}%
  \BibitemOpen
  \bibfield  {author} {\bibinfo {author} {\bibfnamefont {J.}~\bibnamefont {Wang}}, \bibinfo {author} {\bibfnamefont {J.-Y.}\ \bibnamefont {Yang}}, \bibinfo {author} {\bibfnamefont {I.~M.}\ \bibnamefont {Fazal}}, \bibinfo {author} {\bibfnamefont {N.}~\bibnamefont {Ahmed}}, \bibinfo {author} {\bibfnamefont {Y.}~\bibnamefont {Yan}}, \bibinfo {author} {\bibfnamefont {H.}~\bibnamefont {Huang}}, \bibinfo {author} {\bibfnamefont {Y.}~\bibnamefont {Ren}}, \bibinfo {author} {\bibfnamefont {Y.}~\bibnamefont {Yue}}, \bibinfo {author} {\bibfnamefont {S.}~\bibnamefont {Dolinar}}, \bibinfo {author} {\bibfnamefont {M.}~\bibnamefont {Tur}},\ and\ \bibinfo {author} {\bibfnamefont {A.~E.}\ \bibnamefont {Willner}},\ }\bibfield  {title} {\bibinfo {title} {Terabit free-space data transmission employing orbital angular momentum multiplexing},\ }\href {https://doi.org/10.1038/nphoton.2012.138} {\bibfield  {journal} {\bibinfo  {journal} {Nature Photonics}\ }\textbf {\bibinfo {volume} {6}},\ \bibinfo {pages} {488} (\bibinfo {year}
  {2012})}\BibitemShut {NoStop}%
\bibitem [{\citenamefont {Heinrich}\ \emph {et~al.}(2014)\citenamefont {Heinrich}, \citenamefont {Miri}, \citenamefont {St{\"u}tzer}, \citenamefont {{El-Ganainy}}, \citenamefont {Nolte}, \citenamefont {Szameit},\ and\ \citenamefont {Christodoulides}}]{heinrich2014a}%
  \BibitemOpen
  \bibfield  {author} {\bibinfo {author} {\bibfnamefont {M.}~\bibnamefont {Heinrich}}, \bibinfo {author} {\bibfnamefont {M.-A.}\ \bibnamefont {Miri}}, \bibinfo {author} {\bibfnamefont {S.}~\bibnamefont {St{\"u}tzer}}, \bibinfo {author} {\bibfnamefont {R.}~\bibnamefont {{El-Ganainy}}}, \bibinfo {author} {\bibfnamefont {S.}~\bibnamefont {Nolte}}, \bibinfo {author} {\bibfnamefont {A.}~\bibnamefont {Szameit}},\ and\ \bibinfo {author} {\bibfnamefont {D.~N.}\ \bibnamefont {Christodoulides}},\ }\bibfield  {title} {\bibinfo {title} {Supersymmetric mode converters},\ }\href {https://doi.org/10.1038/ncomms4698} {\bibfield  {journal} {\bibinfo  {journal} {Nature Communications}\ }\textbf {\bibinfo {volume} {5}},\ \bibinfo {pages} {3698} (\bibinfo {year} {2014})}\BibitemShut {NoStop}%
\bibitem [{\citenamefont {Ohana}\ and\ \citenamefont {Levy}(2014)}]{ohana2014}%
  \BibitemOpen
  \bibfield  {author} {\bibinfo {author} {\bibfnamefont {D.}~\bibnamefont {Ohana}}\ and\ \bibinfo {author} {\bibfnamefont {U.}~\bibnamefont {Levy}},\ }\bibfield  {title} {\bibinfo {title} {Mode conversion based on dielectric metamaterial in silicon},\ }\href {https://doi.org/10.1364/OE.22.027617} {\bibfield  {journal} {\bibinfo  {journal} {Optics Express}\ }\textbf {\bibinfo {volume} {22}},\ \bibinfo {pages} {27617} (\bibinfo {year} {2014})}\BibitemShut {NoStop}%
\bibitem [{\citenamefont {Zhao}\ \emph {et~al.}(2016)\citenamefont {Zhao}, \citenamefont {Liu}, \citenamefont {Zhang}, \citenamefont {Zhang}, \citenamefont {Wen},\ and\ \citenamefont {Wang}}]{zhao2016d}%
  \BibitemOpen
  \bibfield  {author} {\bibinfo {author} {\bibfnamefont {Y.}~\bibnamefont {Zhao}}, \bibinfo {author} {\bibfnamefont {Y.}~\bibnamefont {Liu}}, \bibinfo {author} {\bibfnamefont {L.}~\bibnamefont {Zhang}}, \bibinfo {author} {\bibfnamefont {C.}~\bibnamefont {Zhang}}, \bibinfo {author} {\bibfnamefont {J.}~\bibnamefont {Wen}},\ and\ \bibinfo {author} {\bibfnamefont {T.}~\bibnamefont {Wang}},\ }\bibfield  {title} {\bibinfo {title} {Mode converter based on the long-period fiber gratings written in the two-mode fiber},\ }\href {https://doi.org/10.1364/OE.24.006186} {\bibfield  {journal} {\bibinfo  {journal} {Optics Express}\ }\textbf {\bibinfo {volume} {24}},\ \bibinfo {pages} {6186} (\bibinfo {year} {2016})}\BibitemShut {NoStop}%
\bibitem [{\citenamefont {Huan}\ \emph {et~al.}(2000)\citenamefont {Huan}, \citenamefont {Scarmozzino}, \citenamefont {Nagy}, \citenamefont {Steel},\ and\ \citenamefont {Osgood}}]{huan2000}%
  \BibitemOpen
  \bibfield  {author} {\bibinfo {author} {\bibfnamefont {Z.}~\bibnamefont {Huan}}, \bibinfo {author} {\bibfnamefont {R.}~\bibnamefont {Scarmozzino}}, \bibinfo {author} {\bibfnamefont {G.}~\bibnamefont {Nagy}}, \bibinfo {author} {\bibfnamefont {J.}~\bibnamefont {Steel}},\ and\ \bibinfo {author} {\bibfnamefont {R.}~\bibnamefont {Osgood}},\ }\bibfield  {title} {\bibinfo {title} {Realization of a compact and single-mode optical passive polarization converter},\ }\href {https://doi.org/10.1109/68.826926} {\bibfield  {journal} {\bibinfo  {journal} {IEEE Photonics Technology Letters}\ }\textbf {\bibinfo {volume} {12}},\ \bibinfo {pages} {317} (\bibinfo {year} {2000})}\BibitemShut {NoStop}%
\bibitem [{\citenamefont {Guo}\ \emph {et~al.}(2017)\citenamefont {Guo}, \citenamefont {Xiao},\ and\ \citenamefont {Fan}}]{guo2017}%
  \BibitemOpen
  \bibfield  {author} {\bibinfo {author} {\bibfnamefont {Y.}~\bibnamefont {Guo}}, \bibinfo {author} {\bibfnamefont {M.}~\bibnamefont {Xiao}},\ and\ \bibinfo {author} {\bibfnamefont {S.}~\bibnamefont {Fan}},\ }\bibfield  {title} {\bibinfo {title} {Topologically {{Protected Complete Polarization Conversion}}},\ }\href {https://doi.org/10.1103/PhysRevLett.119.167401} {\bibfield  {journal} {\bibinfo  {journal} {Physical Review Letters}\ }\textbf {\bibinfo {volume} {119}},\ \bibinfo {pages} {167401} (\bibinfo {year} {2017})}\BibitemShut {NoStop}%
\bibitem [{\citenamefont {Guo}\ \emph {et~al.}(2019{\natexlab{a}})\citenamefont {Guo}, \citenamefont {Xiao}, \citenamefont {Zhou},\ and\ \citenamefont {Fan}}]{guo2019c}%
  \BibitemOpen
  \bibfield  {author} {\bibinfo {author} {\bibfnamefont {Y.}~\bibnamefont {Guo}}, \bibinfo {author} {\bibfnamefont {M.}~\bibnamefont {Xiao}}, \bibinfo {author} {\bibfnamefont {Y.}~\bibnamefont {Zhou}},\ and\ \bibinfo {author} {\bibfnamefont {S.}~\bibnamefont {Fan}},\ }\bibfield  {title} {\bibinfo {title} {Arbitrary {{Polarization Conversion}} with a {{Photonic Crystal Slab}}},\ }\href {https://doi.org/10.1002/adom.201801453} {\bibfield  {journal} {\bibinfo  {journal} {Advanced Optical Materials}\ }\textbf {\bibinfo {volume} {7}},\ \bibinfo {pages} {1801453} (\bibinfo {year} {2019}{\natexlab{a}})}\BibitemShut {NoStop}%
\bibitem [{\citenamefont {Guo}\ \emph {et~al.}(2020)\citenamefont {Guo}, \citenamefont {Xiao}, \citenamefont {Guo}, \citenamefont {Yuan},\ and\ \citenamefont {Fan}}]{guo2020b}%
  \BibitemOpen
  \bibfield  {author} {\bibinfo {author} {\bibfnamefont {C.}~\bibnamefont {Guo}}, \bibinfo {author} {\bibfnamefont {M.}~\bibnamefont {Xiao}}, \bibinfo {author} {\bibfnamefont {Y.}~\bibnamefont {Guo}}, \bibinfo {author} {\bibfnamefont {L.}~\bibnamefont {Yuan}},\ and\ \bibinfo {author} {\bibfnamefont {S.}~\bibnamefont {Fan}},\ }\bibfield  {title} {\bibinfo {title} {Meron {{Spin Textures}} in {{Momentum Space}}},\ }\href {https://doi.org/10.1103/PhysRevLett.124.106103} {\bibfield  {journal} {\bibinfo  {journal} {Physical Review Letters}\ }\textbf {\bibinfo {volume} {124}},\ \bibinfo {pages} {106103} (\bibinfo {year} {2020})}\BibitemShut {NoStop}%
\bibitem [{\citenamefont {{Guti{\'e}rrez-Vega}}(2020)}]{gutierrez-vega2020}%
  \BibitemOpen
  \bibfield  {author} {\bibinfo {author} {\bibfnamefont {J.~C.}\ \bibnamefont {{Guti{\'e}rrez-Vega}}},\ }\bibfield  {title} {\bibinfo {title} {The field of values of {{Jones}} matrices: Classification and special cases},\ }\href {https://doi.org/10.1098/rspa.2020.0361} {\bibfield  {journal} {\bibinfo  {journal} {Proceedings of the Royal Society A: Mathematical, Physical and Engineering Sciences}\ }\textbf {\bibinfo {volume} {476}},\ \bibinfo {pages} {20200361} (\bibinfo {year} {2020})}\BibitemShut {NoStop}%
\bibitem [{\citenamefont {Zeng}\ \emph {et~al.}(2021)\citenamefont {Zeng}, \citenamefont {Hu}, \citenamefont {Liu}, \citenamefont {Tang},\ and\ \citenamefont {Qiu}}]{zeng2021a}%
  \BibitemOpen
  \bibfield  {author} {\bibinfo {author} {\bibfnamefont {Y.}~\bibnamefont {Zeng}}, \bibinfo {author} {\bibfnamefont {G.}~\bibnamefont {Hu}}, \bibinfo {author} {\bibfnamefont {K.}~\bibnamefont {Liu}}, \bibinfo {author} {\bibfnamefont {Z.}~\bibnamefont {Tang}},\ and\ \bibinfo {author} {\bibfnamefont {C.-W.}\ \bibnamefont {Qiu}},\ }\bibfield  {title} {\bibinfo {title} {Dynamics of {{Topological Polarization Singularity}} in {{Momentum Space}}},\ }\href {https://doi.org/10.1103/PhysRevLett.127.176101} {\bibfield  {journal} {\bibinfo  {journal} {Physical Review Letters}\ }\textbf {\bibinfo {volume} {127}},\ \bibinfo {pages} {176101} (\bibinfo {year} {2021})}\BibitemShut {NoStop}%
\bibitem [{\citenamefont {Liu}\ \emph {et~al.}(2021)\citenamefont {Liu}, \citenamefont {Liu}, \citenamefont {Shi},\ and\ \citenamefont {Kivshar}}]{liu2021g}%
  \BibitemOpen
  \bibfield  {author} {\bibinfo {author} {\bibfnamefont {W.}~\bibnamefont {Liu}}, \bibinfo {author} {\bibfnamefont {W.}~\bibnamefont {Liu}}, \bibinfo {author} {\bibfnamefont {L.}~\bibnamefont {Shi}},\ and\ \bibinfo {author} {\bibfnamefont {Y.}~\bibnamefont {Kivshar}},\ }\bibfield  {title} {\bibinfo {title} {Topological polarization singularities in metaphotonics},\ }\href {https://doi.org/10.1515/nanoph-2020-0654} {\bibfield  {journal} {\bibinfo  {journal} {Nanophotonics}\ }\textbf {\bibinfo {volume} {10}},\ \bibinfo {pages} {1469} (\bibinfo {year} {2021})}\BibitemShut {NoStop}%
\bibitem [{\citenamefont {Biswas}\ \emph {et~al.}(2021)\citenamefont {Biswas}, \citenamefont {Grajower}, \citenamefont {Watanabe}, \citenamefont {Taniguchi},\ and\ \citenamefont {Atwater}}]{biswas2021}%
  \BibitemOpen
  \bibfield  {author} {\bibinfo {author} {\bibfnamefont {S.}~\bibnamefont {Biswas}}, \bibinfo {author} {\bibfnamefont {M.~Y.}\ \bibnamefont {Grajower}}, \bibinfo {author} {\bibfnamefont {K.}~\bibnamefont {Watanabe}}, \bibinfo {author} {\bibfnamefont {T.}~\bibnamefont {Taniguchi}},\ and\ \bibinfo {author} {\bibfnamefont {H.~A.}\ \bibnamefont {Atwater}},\ }\bibfield  {title} {\bibinfo {title} {Broadband electro-optic polarization conversion with atomically thin black phosphorus},\ }\href {https://doi.org/10.1126/science.abj7053} {\bibfield  {journal} {\bibinfo  {journal} {Science}\ }\textbf {\bibinfo {volume} {374}},\ \bibinfo {pages} {448} (\bibinfo {year} {2021})}\BibitemShut {NoStop}%
\bibitem [{\citenamefont {Dorrah}\ \emph {et~al.}(2021)\citenamefont {Dorrah}, \citenamefont {Rubin}, \citenamefont {Zaidi}, \citenamefont {Tamagnone},\ and\ \citenamefont {Capasso}}]{dorrah2021a}%
  \BibitemOpen
  \bibfield  {author} {\bibinfo {author} {\bibfnamefont {A.~H.}\ \bibnamefont {Dorrah}}, \bibinfo {author} {\bibfnamefont {N.~A.}\ \bibnamefont {Rubin}}, \bibinfo {author} {\bibfnamefont {A.}~\bibnamefont {Zaidi}}, \bibinfo {author} {\bibfnamefont {M.}~\bibnamefont {Tamagnone}},\ and\ \bibinfo {author} {\bibfnamefont {F.}~\bibnamefont {Capasso}},\ }\bibfield  {title} {\bibinfo {title} {Metasurface optics for on-demand polarization transformations along the optical path},\ }\href {https://doi.org/10.1038/s41566-020-00750-2} {\bibfield  {journal} {\bibinfo  {journal} {Nature Photonics}\ }\textbf {\bibinfo {volume} {15}},\ \bibinfo {pages} {287} (\bibinfo {year} {2021})}\BibitemShut {NoStop}%
\bibitem [{\citenamefont {He}\ \emph {et~al.}(2021)\citenamefont {He}, \citenamefont {He}, \citenamefont {Chang}, \citenamefont {Chen}, \citenamefont {Ma},\ and\ \citenamefont {Booth}}]{he2021a}%
  \BibitemOpen
  \bibfield  {author} {\bibinfo {author} {\bibfnamefont {C.}~\bibnamefont {He}}, \bibinfo {author} {\bibfnamefont {H.}~\bibnamefont {He}}, \bibinfo {author} {\bibfnamefont {J.}~\bibnamefont {Chang}}, \bibinfo {author} {\bibfnamefont {B.}~\bibnamefont {Chen}}, \bibinfo {author} {\bibfnamefont {H.}~\bibnamefont {Ma}},\ and\ \bibinfo {author} {\bibfnamefont {M.~J.}\ \bibnamefont {Booth}},\ }\bibfield  {title} {\bibinfo {title} {Polarisation optics for biomedical and clinical applications: A review},\ }\href {https://doi.org/10.1038/s41377-021-00639-x} {\bibfield  {journal} {\bibinfo  {journal} {Light: Science \& Applications}\ }\textbf {\bibinfo {volume} {10}},\ \bibinfo {pages} {194} (\bibinfo {year} {2021})}\BibitemShut {NoStop}%
\bibitem [{\citenamefont {Li}\ \emph {et~al.}(2021{\natexlab{b}})\citenamefont {Li}, \citenamefont {Li}, \citenamefont {Han}, \citenamefont {Huang}, \citenamefont {Guo},\ and\ \citenamefont {Cao}}]{li2021g}%
  \BibitemOpen
  \bibfield  {author} {\bibinfo {author} {\bibfnamefont {Z.~Y.}\ \bibnamefont {Li}}, \bibinfo {author} {\bibfnamefont {S.~J.}\ \bibnamefont {Li}}, \bibinfo {author} {\bibfnamefont {B.~W.}\ \bibnamefont {Han}}, \bibinfo {author} {\bibfnamefont {G.~S.}\ \bibnamefont {Huang}}, \bibinfo {author} {\bibfnamefont {Z.~X.}\ \bibnamefont {Guo}},\ and\ \bibinfo {author} {\bibfnamefont {X.~Y.}\ \bibnamefont {Cao}},\ }\bibfield  {title} {\bibinfo {title} {Quad-{{Band Transmissive Metasurface}} with {{Linear}} to {{Dual-Circular Polarization Conversion Simultaneously}}},\ }\href {https://doi.org/10.1002/adts.202100117} {\bibfield  {journal} {\bibinfo  {journal} {Advanced Theory and Simulations}\ }\textbf {\bibinfo {volume} {4}},\ \bibinfo {pages} {2100117} (\bibinfo {year} {2021}{\natexlab{b}})}\BibitemShut {NoStop}%
\bibitem [{\citenamefont {Li}\ \emph {et~al.}(2022)\citenamefont {Li}, \citenamefont {Zheng}, \citenamefont {Xu}, \citenamefont {Li}, \citenamefont {Song}, \citenamefont {Li}, \citenamefont {Wu}, \citenamefont {Yang}, \citenamefont {Zhang}, \citenamefont {Shi},\ and\ \citenamefont {Yao}}]{li2022a}%
  \BibitemOpen
  \bibfield  {author} {\bibinfo {author} {\bibfnamefont {H.}~\bibnamefont {Li}}, \bibinfo {author} {\bibfnamefont {C.}~\bibnamefont {Zheng}}, \bibinfo {author} {\bibfnamefont {H.}~\bibnamefont {Xu}}, \bibinfo {author} {\bibfnamefont {J.}~\bibnamefont {Li}}, \bibinfo {author} {\bibfnamefont {C.}~\bibnamefont {Song}}, \bibinfo {author} {\bibfnamefont {J.}~\bibnamefont {Li}}, \bibinfo {author} {\bibfnamefont {L.}~\bibnamefont {Wu}}, \bibinfo {author} {\bibfnamefont {F.}~\bibnamefont {Yang}}, \bibinfo {author} {\bibfnamefont {Y.}~\bibnamefont {Zhang}}, \bibinfo {author} {\bibfnamefont {W.}~\bibnamefont {Shi}},\ and\ \bibinfo {author} {\bibfnamefont {J.}~\bibnamefont {Yao}},\ }\bibfield  {title} {\bibinfo {title} {Diatomic terahertz metasurfaces for arbitrary-to-circular polarization conversion},\ }\href {https://doi.org/10.1039/D2NR03483B} {\bibfield  {journal} {\bibinfo  {journal} {Nanoscale}\ }\textbf {\bibinfo {volume} {14}},\ \bibinfo {pages} {12856} (\bibinfo {year} {2022})}\BibitemShut {NoStop}%
\bibitem [{\citenamefont {Kang}\ \emph {et~al.}(2022)\citenamefont {Kang}, \citenamefont {Zhang}, \citenamefont {Wu}, \citenamefont {Zhang}, \citenamefont {Xu}, \citenamefont {Krasnok}, \citenamefont {Han},\ and\ \citenamefont {Al{\`u}}}]{kang2022}%
  \BibitemOpen
  \bibfield  {author} {\bibinfo {author} {\bibfnamefont {M.}~\bibnamefont {Kang}}, \bibinfo {author} {\bibfnamefont {Z.}~\bibnamefont {Zhang}}, \bibinfo {author} {\bibfnamefont {T.}~\bibnamefont {Wu}}, \bibinfo {author} {\bibfnamefont {X.}~\bibnamefont {Zhang}}, \bibinfo {author} {\bibfnamefont {Q.}~\bibnamefont {Xu}}, \bibinfo {author} {\bibfnamefont {A.}~\bibnamefont {Krasnok}}, \bibinfo {author} {\bibfnamefont {J.}~\bibnamefont {Han}},\ and\ \bibinfo {author} {\bibfnamefont {A.}~\bibnamefont {Al{\`u}}},\ }\bibfield  {title} {\bibinfo {title} {Coherent full polarization control based on bound states in the continuum},\ }\href {https://doi.org/10.1038/s41467-022-31726-1} {\bibfield  {journal} {\bibinfo  {journal} {Nature Communications}\ }\textbf {\bibinfo {volume} {13}},\ \bibinfo {pages} {4536} (\bibinfo {year} {2022})}\BibitemShut {NoStop}%
\bibitem [{\citenamefont {Kim}\ \emph {et~al.}(2009)\citenamefont {Kim}, \citenamefont {Takei}, \citenamefont {Shoji},\ and\ \citenamefont {Mizumoto}}]{kim2009a}%
  \BibitemOpen
  \bibfield  {author} {\bibinfo {author} {\bibfnamefont {S.-H.}\ \bibnamefont {Kim}}, \bibinfo {author} {\bibfnamefont {R.}~\bibnamefont {Takei}}, \bibinfo {author} {\bibfnamefont {Y.}~\bibnamefont {Shoji}},\ and\ \bibinfo {author} {\bibfnamefont {T.}~\bibnamefont {Mizumoto}},\ }\bibfield  {title} {\bibinfo {title} {Single-trench waveguide {{TE-TM}} mode converter},\ }\href {https://doi.org/10.1364/OE.17.011267} {\bibfield  {journal} {\bibinfo  {journal} {Optics Express}\ }\textbf {\bibinfo {volume} {17}},\ \bibinfo {pages} {11267} (\bibinfo {year} {2009})}\BibitemShut {NoStop}%
\bibitem [{\citenamefont {Li}\ \emph {et~al.}(2015)\citenamefont {Li}, \citenamefont {Mo}, \citenamefont {Hu}, \citenamefont {Du},\ and\ \citenamefont {Wang}}]{li2015}%
  \BibitemOpen
  \bibfield  {author} {\bibinfo {author} {\bibfnamefont {S.}~\bibnamefont {Li}}, \bibinfo {author} {\bibfnamefont {Q.}~\bibnamefont {Mo}}, \bibinfo {author} {\bibfnamefont {X.}~\bibnamefont {Hu}}, \bibinfo {author} {\bibfnamefont {C.}~\bibnamefont {Du}},\ and\ \bibinfo {author} {\bibfnamefont {J.}~\bibnamefont {Wang}},\ }\bibfield  {title} {\bibinfo {title} {Controllable all-fiber orbital angular momentum mode converter},\ }\href {https://doi.org/10.1364/OL.40.004376} {\bibfield  {journal} {\bibinfo  {journal} {Optics Letters}\ }\textbf {\bibinfo {volume} {40}},\ \bibinfo {pages} {4376} (\bibinfo {year} {2015})}\BibitemShut {NoStop}%
\bibitem [{\citenamefont {Fu}\ \emph {et~al.}(2018)\citenamefont {Fu}, \citenamefont {Liu}, \citenamefont {Bai}, \citenamefont {He}, \citenamefont {Liao}, \citenamefont {Wang}, \citenamefont {Li}, \citenamefont {Zhang}, \citenamefont {Yang}, \citenamefont {Yu},\ and\ \citenamefont {Wang}}]{fu2018}%
  \BibitemOpen
  \bibfield  {author} {\bibinfo {author} {\bibfnamefont {C.}~\bibnamefont {Fu}}, \bibinfo {author} {\bibfnamefont {S.}~\bibnamefont {Liu}}, \bibinfo {author} {\bibfnamefont {Z.}~\bibnamefont {Bai}}, \bibinfo {author} {\bibfnamefont {J.}~\bibnamefont {He}}, \bibinfo {author} {\bibfnamefont {C.}~\bibnamefont {Liao}}, \bibinfo {author} {\bibfnamefont {Y.}~\bibnamefont {Wang}}, \bibinfo {author} {\bibfnamefont {Z.}~\bibnamefont {Li}}, \bibinfo {author} {\bibfnamefont {Y.}~\bibnamefont {Zhang}}, \bibinfo {author} {\bibfnamefont {K.}~\bibnamefont {Yang}}, \bibinfo {author} {\bibfnamefont {B.}~\bibnamefont {Yu}},\ and\ \bibinfo {author} {\bibfnamefont {Y.}~\bibnamefont {Wang}},\ }\bibfield  {title} {\bibinfo {title} {Orbital {{Angular Momentum Mode Converter Based}} on {{Helical Long Period Fiber Grating Inscribed}} by {{Hydrogen}}{\textendash}{{Oxygen Flame}}},\ }\href {https://doi.org/10.1109/JLT.2017.2787120} {\bibfield  {journal} {\bibinfo  {journal} {Journal of Lightwave Technology}\ }\textbf {\bibinfo {volume}
  {36}},\ \bibinfo {pages} {1683} (\bibinfo {year} {2018})}\BibitemShut {NoStop}%
\bibitem [{\citenamefont {Vukusic}\ \emph {et~al.}(1992)\citenamefont {Vukusic}, \citenamefont {{Bryan-Brown}},\ and\ \citenamefont {Sambles}}]{vukusic1992}%
  \BibitemOpen
  \bibfield  {author} {\bibinfo {author} {\bibfnamefont {P.~S.}\ \bibnamefont {Vukusic}}, \bibinfo {author} {\bibfnamefont {G.~P.}\ \bibnamefont {{Bryan-Brown}}},\ and\ \bibinfo {author} {\bibfnamefont {J.~R.}\ \bibnamefont {Sambles}},\ }\bibfield  {title} {\bibinfo {title} {Surface plasmon resonance on gratings as a novel means for gas sensing},\ }\href {https://doi.org/10.1016/0925-4005(92)80173-U} {\bibfield  {journal} {\bibinfo  {journal} {Sensors and Actuators B: Chemical}\ }\textbf {\bibinfo {volume} {8}},\ \bibinfo {pages} {155} (\bibinfo {year} {1992})}\BibitemShut {NoStop}%
\bibitem [{\citenamefont {Gu}\ \emph {et~al.}(2014)\citenamefont {Gu}, \citenamefont {Qi}, \citenamefont {Zhou}, \citenamefont {Wu}, \citenamefont {Shum},\ and\ \citenamefont {Luan}}]{gu2014a}%
  \BibitemOpen
  \bibfield  {author} {\bibinfo {author} {\bibfnamefont {B.}~\bibnamefont {Gu}}, \bibinfo {author} {\bibfnamefont {W.}~\bibnamefont {Qi}}, \bibinfo {author} {\bibfnamefont {Y.}~\bibnamefont {Zhou}}, \bibinfo {author} {\bibfnamefont {Z.}~\bibnamefont {Wu}}, \bibinfo {author} {\bibfnamefont {P.~P.}\ \bibnamefont {Shum}},\ and\ \bibinfo {author} {\bibfnamefont {F.}~\bibnamefont {Luan}},\ }\bibfield  {title} {\bibinfo {title} {Reflective liquid level sensor based on modes conversion in thin-core fiber incorporating tilted fiber {{Bragg}} grating},\ }\href {https://doi.org/10.1364/OE.22.011834} {\bibfield  {journal} {\bibinfo  {journal} {Optics Express}\ }\textbf {\bibinfo {volume} {22}},\ \bibinfo {pages} {11834} (\bibinfo {year} {2014})}\BibitemShut {NoStop}%
\bibitem [{\citenamefont {Verre}\ \emph {et~al.}(2016)\citenamefont {Verre}, \citenamefont {Maccaferri}, \citenamefont {Fleischer}, \citenamefont {Svedendahl}, \citenamefont {L{\"a}nk}, \citenamefont {Dmitriev}, \citenamefont {Vavassori}, \citenamefont {V.~Shvets},\ and\ \citenamefont {K{\"a}ll}}]{verre2016}%
  \BibitemOpen
  \bibfield  {author} {\bibinfo {author} {\bibfnamefont {R.}~\bibnamefont {Verre}}, \bibinfo {author} {\bibfnamefont {N.}~\bibnamefont {Maccaferri}}, \bibinfo {author} {\bibfnamefont {K.}~\bibnamefont {Fleischer}}, \bibinfo {author} {\bibfnamefont {M.}~\bibnamefont {Svedendahl}}, \bibinfo {author} {\bibfnamefont {N.~O.}\ \bibnamefont {L{\"a}nk}}, \bibinfo {author} {\bibfnamefont {A.}~\bibnamefont {Dmitriev}}, \bibinfo {author} {\bibfnamefont {P.}~\bibnamefont {Vavassori}}, \bibinfo {author} {\bibfnamefont {I.}~\bibnamefont {V.~Shvets}},\ and\ \bibinfo {author} {\bibfnamefont {M.}~\bibnamefont {K{\"a}ll}},\ }\bibfield  {title} {\bibinfo {title} {Polarization conversion-based molecular sensing using anisotropic plasmonic metasurfaces},\ }\href {https://doi.org/10.1039/C6NR01336H} {\bibfield  {journal} {\bibinfo  {journal} {Nanoscale}\ }\textbf {\bibinfo {volume} {8}},\ \bibinfo {pages} {10576} (\bibinfo {year} {2016})}\BibitemShut {NoStop}%
\bibitem [{\citenamefont {Zhang}\ \emph {et~al.}(2021{\natexlab{a}})\citenamefont {Zhang}, \citenamefont {Liu}, \citenamefont {Liu}, \citenamefont {Liu}, \citenamefont {Liu}, \citenamefont {Fu}, \citenamefont {Wang},\ and\ \citenamefont {Shen}}]{zhang2021h}%
  \BibitemOpen
  \bibfield  {author} {\bibinfo {author} {\bibfnamefont {H.}~\bibnamefont {Zhang}}, \bibinfo {author} {\bibfnamefont {Y.}~\bibnamefont {Liu}}, \bibinfo {author} {\bibfnamefont {Z.}~\bibnamefont {Liu}}, \bibinfo {author} {\bibfnamefont {X.}~\bibnamefont {Liu}}, \bibinfo {author} {\bibfnamefont {G.}~\bibnamefont {Liu}}, \bibinfo {author} {\bibfnamefont {G.}~\bibnamefont {Fu}}, \bibinfo {author} {\bibfnamefont {J.}~\bibnamefont {Wang}},\ and\ \bibinfo {author} {\bibfnamefont {Y.}~\bibnamefont {Shen}},\ }\bibfield  {title} {\bibinfo {title} {Multi-functional polarization conversion manipulation via graphene-based metasurface reflectors},\ }\href {https://doi.org/10.1364/OE.412925} {\bibfield  {journal} {\bibinfo  {journal} {Optics Express}\ }\textbf {\bibinfo {volume} {29}},\ \bibinfo {pages} {70} (\bibinfo {year} {2021}{\natexlab{a}})}\BibitemShut {NoStop}%
\bibitem [{\citenamefont {Zhang}\ \emph {et~al.}(2021{\natexlab{b}})\citenamefont {Zhang}, \citenamefont {Zhong}, \citenamefont {Fan}, \citenamefont {Liu},\ and\ \citenamefont {Chang}}]{zhang2021g}%
  \BibitemOpen
  \bibfield  {author} {\bibinfo {author} {\bibfnamefont {Z.}~\bibnamefont {Zhang}}, \bibinfo {author} {\bibfnamefont {C.}~\bibnamefont {Zhong}}, \bibinfo {author} {\bibfnamefont {F.}~\bibnamefont {Fan}}, \bibinfo {author} {\bibfnamefont {G.}~\bibnamefont {Liu}},\ and\ \bibinfo {author} {\bibfnamefont {S.}~\bibnamefont {Chang}},\ }\bibfield  {title} {\bibinfo {title} {Terahertz polarization and chirality sensing for amino acid solution based on chiral metasurface sensor},\ }\href {https://doi.org/10.1016/j.snb.2020.129315} {\bibfield  {journal} {\bibinfo  {journal} {Sensors and Actuators B: Chemical}\ }\textbf {\bibinfo {volume} {330}},\ \bibinfo {pages} {129315} (\bibinfo {year} {2021}{\natexlab{b}})}\BibitemShut {NoStop}%
\bibitem [{\citenamefont {Xavier}\ \emph {et~al.}(2008)\citenamefont {Xavier}, \citenamefont {de~Faria}, \citenamefont {Tempor{\~a}o},\ and\ \citenamefont {von~der Weid}}]{xavier2008}%
  \BibitemOpen
  \bibfield  {author} {\bibinfo {author} {\bibfnamefont {G.~B.}\ \bibnamefont {Xavier}}, \bibinfo {author} {\bibfnamefont {G.~V.}\ \bibnamefont {de~Faria}}, \bibinfo {author} {\bibfnamefont {G.~P.}\ \bibnamefont {Tempor{\~a}o}},\ and\ \bibinfo {author} {\bibfnamefont {J.~P.}\ \bibnamefont {von~der Weid}},\ }\bibfield  {title} {\bibinfo {title} {Full polarization control for fiber optical quantum communication systems using polarization encoding},\ }\href {https://doi.org/10.1364/OE.16.001867} {\bibfield  {journal} {\bibinfo  {journal} {Optics Express}\ }\textbf {\bibinfo {volume} {16}},\ \bibinfo {pages} {1867} (\bibinfo {year} {2008})}\BibitemShut {NoStop}%
\bibitem [{\citenamefont {Mohanty}\ \emph {et~al.}(2017)\citenamefont {Mohanty}, \citenamefont {Zhang}, \citenamefont {Dutt}, \citenamefont {Ramelow}, \citenamefont {Nussenzveig},\ and\ \citenamefont {Lipson}}]{mohanty2017a}%
  \BibitemOpen
  \bibfield  {author} {\bibinfo {author} {\bibfnamefont {A.}~\bibnamefont {Mohanty}}, \bibinfo {author} {\bibfnamefont {M.}~\bibnamefont {Zhang}}, \bibinfo {author} {\bibfnamefont {A.}~\bibnamefont {Dutt}}, \bibinfo {author} {\bibfnamefont {S.}~\bibnamefont {Ramelow}}, \bibinfo {author} {\bibfnamefont {P.}~\bibnamefont {Nussenzveig}},\ and\ \bibinfo {author} {\bibfnamefont {M.}~\bibnamefont {Lipson}},\ }\bibfield  {title} {\bibinfo {title} {Quantum interference between transverse spatial waveguide modes},\ }\href {https://doi.org/10.1038/ncomms14010} {\bibfield  {journal} {\bibinfo  {journal} {Nature Communications}\ }\textbf {\bibinfo {volume} {8}},\ \bibinfo {pages} {14010} (\bibinfo {year} {2017})}\BibitemShut {NoStop}%
\bibitem [{\citenamefont {Fabre}\ and\ \citenamefont {Treps}(2020)}]{fabre2020b}%
  \BibitemOpen
  \bibfield  {author} {\bibinfo {author} {\bibfnamefont {C.}~\bibnamefont {Fabre}}\ and\ \bibinfo {author} {\bibfnamefont {N.}~\bibnamefont {Treps}},\ }\bibfield  {title} {\bibinfo {title} {Modes and states in quantum optics},\ }\href {https://doi.org/10.1103/RevModPhys.92.035005} {\bibfield  {journal} {\bibinfo  {journal} {Reviews of Modern Physics}\ }\textbf {\bibinfo {volume} {92}},\ \bibinfo {pages} {035005} (\bibinfo {year} {2020})}\BibitemShut {NoStop}%
\bibitem [{\citenamefont {Liang}\ \emph {et~al.}(2023)\citenamefont {Liang}, \citenamefont {Li},\ and\ \citenamefont {Wu}}]{liang2023}%
  \BibitemOpen
  \bibfield  {author} {\bibinfo {author} {\bibfnamefont {Z.}~\bibnamefont {Liang}}, \bibinfo {author} {\bibfnamefont {J.}~\bibnamefont {Li}},\ and\ \bibinfo {author} {\bibfnamefont {Y.}~\bibnamefont {Wu}},\ }\bibfield  {title} {\bibinfo {title} {All-optical polarization-state engineering in quantum cavity optomagnonics},\ }\href {https://doi.org/10.1103/PhysRevA.107.033701} {\bibfield  {journal} {\bibinfo  {journal} {Physical Review A}\ }\textbf {\bibinfo {volume} {107}},\ \bibinfo {pages} {033701} (\bibinfo {year} {2023})}\BibitemShut {NoStop}%
\bibitem [{\citenamefont {Guo}\ \emph {et~al.}(2019{\natexlab{b}})\citenamefont {Guo}, \citenamefont {Guo},\ and\ \citenamefont {Fan}}]{guo2019a}%
  \BibitemOpen
  \bibfield  {author} {\bibinfo {author} {\bibfnamefont {C.}~\bibnamefont {Guo}}, \bibinfo {author} {\bibfnamefont {Y.}~\bibnamefont {Guo}},\ and\ \bibinfo {author} {\bibfnamefont {S.}~\bibnamefont {Fan}},\ }\bibfield  {title} {\bibinfo {title} {Relation between photon thermal {{Hall}} effect and persistent heat current in nonreciprocal radiative heat transfer},\ }\href {https://doi.org/10.1103/PhysRevB.100.205416} {\bibfield  {journal} {\bibinfo  {journal} {Physical Review B}\ }\textbf {\bibinfo {volume} {100}},\ \bibinfo {pages} {205416} (\bibinfo {year} {2019}{\natexlab{b}})}\BibitemShut {NoStop}%
\bibitem [{\citenamefont {Horn}\ and\ \citenamefont {Johnson}(2008)}]{horn2008}%
  \BibitemOpen
  \bibfield  {author} {\bibinfo {author} {\bibfnamefont {R.~A.}\ \bibnamefont {Horn}}\ and\ \bibinfo {author} {\bibfnamefont {C.~R.}\ \bibnamefont {Johnson}},\ }\href@noop {} {\emph {\bibinfo {title} {Topics in Matrix Analysis}}}\ (\bibinfo  {publisher} {{Cambridge University Press}},\ \bibinfo {address} {{Cambridge}},\ \bibinfo {year} {2008})\BibitemShut {NoStop}%
\bibitem [{\citenamefont {Gustafson}\ and\ \citenamefont {Rao}(1997)}]{gustafson1997}%
  \BibitemOpen
  \bibfield  {author} {\bibinfo {author} {\bibfnamefont {K.~E.}\ \bibnamefont {Gustafson}}\ and\ \bibinfo {author} {\bibfnamefont {D.~K.~M.}\ \bibnamefont {Rao}},\ }\href@noop {} {\emph {\bibinfo {title} {Numerical Range: The Field of Values of Linear Operators and Matrices}}},\ Universitext\ (\bibinfo  {publisher} {{Springer}},\ \bibinfo {address} {{New York}},\ \bibinfo {year} {1997})\BibitemShut {NoStop}%
\bibitem [{\citenamefont {Toeplitz}(1918)}]{toeplitz1918}%
  \BibitemOpen
  \bibfield  {author} {\bibinfo {author} {\bibfnamefont {O.}~\bibnamefont {Toeplitz}},\ }\bibfield  {title} {\bibinfo {title} {{Das algebraische Analogon zu einem Satze von Fej\'er}},\ }\href {https://doi.org/10.1007/BF01212904} {\bibfield  {journal} {\bibinfo  {journal} {Mathematische Zeitschrift}\ }\textbf {\bibinfo {volume} {2}},\ \bibinfo {pages} {187} (\bibinfo {year} {1918})}\BibitemShut {NoStop}%
\bibitem [{\citenamefont {Hausdorff}(1919)}]{hausdorff1919}%
  \BibitemOpen
  \bibfield  {author} {\bibinfo {author} {\bibfnamefont {F.}~\bibnamefont {Hausdorff}},\ }\bibfield  {title} {\bibinfo {title} {{Der Wertvorrat einer Bilinearform}},\ }\href {https://doi.org/10.1007/BF01292610} {\bibfield  {journal} {\bibinfo  {journal} {Mathematische Zeitschrift}\ }\textbf {\bibinfo {volume} {3}},\ \bibinfo {pages} {314} (\bibinfo {year} {1919})}\BibitemShut {NoStop}%
\bibitem [{Note3()}]{Note3}%
  \BibitemOpen
  \bibinfo {note} {See Supplemental Material at [URL] for numerical range, examples of scattering submatrices, Kato's theorem, matrix determinant lemma, proof of Eq.~(\ref {eq:main_result}), topological phases of a resonance, and additional numerical results, which includes Refs.~\cite {munkres2000,kippenhahn1951a,zachlin2008,helton2007,henrion2010,helton2012,keeler1997,brown2004,gau2006,militzer2017,cox2021,geryba2021,meyer2000,needham2021,zhao2019c}.}\BibitemShut {Stop}%
\bibitem [{\citenamefont {Crawford}(1976)}]{crawford1976}%
  \BibitemOpen
  \bibfield  {author} {\bibinfo {author} {\bibfnamefont {C.~R.}\ \bibnamefont {Crawford}},\ }\bibfield  {title} {\bibinfo {title} {A {{Stable Generalized Eigenvalue Problem}}},\ }\href {https://doi.org/10.1137/0713067} {\bibfield  {journal} {\bibinfo  {journal} {SIAM Journal on Numerical Analysis}\ }\textbf {\bibinfo {volume} {13}},\ \bibinfo {pages} {854} (\bibinfo {year} {1976})}\BibitemShut {NoStop}%
\bibitem [{\citenamefont {Wang}\ \emph {et~al.}(2010)\citenamefont {Wang}, \citenamefont {Wu},\ and\ \citenamefont {Gau}}]{wang2010}%
  \BibitemOpen
  \bibfield  {author} {\bibinfo {author} {\bibfnamefont {K.-Z.}\ \bibnamefont {Wang}}, \bibinfo {author} {\bibfnamefont {P.~Y.}\ \bibnamefont {Wu}},\ and\ \bibinfo {author} {\bibfnamefont {H.-L.}\ \bibnamefont {Gau}},\ }\bibfield  {title} {\bibinfo {title} {Crawford numbers of powers of a matrix},\ }\href {https://doi.org/10.1016/j.laa.2010.08.004} {\bibfield  {journal} {\bibinfo  {journal} {Linear Algebra and its Applications}\ }\textbf {\bibinfo {volume} {433}},\ \bibinfo {pages} {2243} (\bibinfo {year} {2010})}\BibitemShut {NoStop}%
\bibitem [{\citenamefont {Choi}\ and\ \citenamefont {Lee}(2023)}]{choi2023}%
  \BibitemOpen
  \bibfield  {author} {\bibinfo {author} {\bibfnamefont {G.}~\bibnamefont {Choi}}\ and\ \bibinfo {author} {\bibfnamefont {H.~J.}\ \bibnamefont {Lee}},\ }\bibfield  {title} {\bibinfo {title} {On the {{Crawford}} number attaining operators},\ }\href {https://doi.org/10.1007/s13163-022-00445-y} {\bibfield  {journal} {\bibinfo  {journal} {Revista Matem\'atica Complutense}\ }\textbf {\bibinfo {volume} {36}},\ \bibinfo {pages} {841} (\bibinfo {year} {2023})}\BibitemShut {NoStop}%
\bibitem [{\citenamefont {Psarrakos}\ and\ \citenamefont {Tsatsomeros}(2002)}]{psarrakos2002}%
  \BibitemOpen
  \bibfield  {author} {\bibinfo {author} {\bibfnamefont {P.~J.}\ \bibnamefont {Psarrakos}}\ and\ \bibinfo {author} {\bibfnamefont {M.~J.}\ \bibnamefont {Tsatsomeros}},\ }\bibfield  {title} {\bibinfo {title} {Numerical range: (in) a matrix nutshell},\ }\href@noop {} {\bibfield  {journal} {\bibinfo  {journal} {Lecture notes, Department of Mathematics, Washington State University}\ }\textbf {\bibinfo {volume} {1}},\ \bibinfo {pages} {1} (\bibinfo {year} {2002})}\BibitemShut {NoStop}%
\bibitem [{\citenamefont {Wu}\ \emph {et~al.}(2019)\citenamefont {Wu}, \citenamefont {Huang},\ and\ \citenamefont {{Alatancang}}}]{wu2019h}%
  \BibitemOpen
  \bibfield  {author} {\bibinfo {author} {\bibfnamefont {X.~H.}\ \bibnamefont {Wu}}, \bibinfo {author} {\bibfnamefont {J.~J.}\ \bibnamefont {Huang}},\ and\ \bibinfo {author} {\bibnamefont {{Alatancang}}},\ }\bibfield  {title} {\bibinfo {title} {On the {{Origin}} and {{Numerical Range}} of {{Bounded Operators}}},\ }\href {https://doi.org/10.1007/s10114-019-8478-2} {\bibfield  {journal} {\bibinfo  {journal} {Acta Mathematica Sinica, English Series}\ }\textbf {\bibinfo {volume} {35}},\ \bibinfo {pages} {1715} (\bibinfo {year} {2019})}\BibitemShut {NoStop}%
\bibitem [{Note4()}]{Note4}%
  \BibitemOpen
  \bibinfo {note} {This difference arises because coherent orthogonal scattering, as defined by a system of quadratic equations [Eq.~(\ref {eq:COS_definition})] is more complicated than coherent perfect extinction, as defined by a system of linear equations [Eq.~(\ref {eq:CPE_definition})].}\BibitemShut {Stop}%
\bibitem [{\citenamefont {Roe}(2015)}]{roe2015}%
  \BibitemOpen
  \bibfield  {author} {\bibinfo {author} {\bibfnamefont {J.}~\bibnamefont {Roe}},\ }\href@noop {} {\emph {\bibinfo {title} {Winding around: The Winding Number in Topology, Geometry, and Analysis}}},\ \bibinfo {series} {Student Mathematical Library}\ No.\ \bibinfo {number} {volume 76}\ (\bibinfo  {publisher} {{American Mathematical Society}},\ \bibinfo {address} {{Providence, Rhode Island}},\ \bibinfo {year} {2015})\BibitemShut {NoStop}%
\bibitem [{\citenamefont {Stein}(2009)}]{stein2009}%
  \BibitemOpen
  \bibfield  {author} {\bibinfo {author} {\bibfnamefont {W.~A.}\ \bibnamefont {Stein}},\ }\href@noop {} {\emph {\bibinfo {title} {Elementary Number Theory: Primes, Congruences, and Secrets: A Computational Approach}}},\ Undergraduate Texts in Mathematics\ (\bibinfo  {publisher} {{Springer}},\ \bibinfo {address} {{New York, NY}},\ \bibinfo {year} {2009})\BibitemShut {NoStop}%
\bibitem [{\citenamefont {Spitkovsky}(1974)}]{spitkovsky1974}%
  \BibitemOpen
  \bibfield  {author} {\bibinfo {author} {\bibfnamefont {I.~M.}\ \bibnamefont {Spitkovsky}},\ }\bibfield  {title} {\bibinfo {title} {{Stability of partial indices of a Riemann boundary-value problem with a strictly nonsingular matrix}},\ }\href@noop {} {\bibfield  {journal} {\bibinfo  {journal} {Doklady Akademii Nauk SSSR}\ }\textbf {\bibinfo {volume} {218}},\ \bibinfo {pages} {46} (\bibinfo {year} {1974})}\BibitemShut {NoStop}%
\bibitem [{\citenamefont {Guo}\ and\ \citenamefont {Fan}(2023{\natexlab{b}})}]{guo2023i}%
  \BibitemOpen
  \bibfield  {author} {\bibinfo {author} {\bibfnamefont {C.}~\bibnamefont {Guo}}\ and\ \bibinfo {author} {\bibfnamefont {S.}~\bibnamefont {Fan}},\ }\href {https://doi.org/10.48550/arXiv.2311.00849} {\bibinfo {title} {Winding number criterion for the origin to belong to the numerical range of a matrix on a loop of matrices}} (\bibinfo {year} {2023}{\natexlab{b}}),\ \Eprint {https://arxiv.org/abs/2311.00849} {arxiv:2311.00849 [math.FA]} \BibitemShut {NoStop}%
\bibitem [{\citenamefont {Kato}(1995)}]{kato1995}%
  \BibitemOpen
  \bibfield  {author} {\bibinfo {author} {\bibfnamefont {T.}~\bibnamefont {Kato}},\ }\href@noop {} {\emph {\bibinfo {title} {Perturbation Theory for Linear Operators}}},\ Classics in Mathematics\ (\bibinfo  {publisher} {{Springer}},\ \bibinfo {address} {{Berlin}},\ \bibinfo {year} {1995})\BibitemShut {NoStop}%
\bibitem [{\citenamefont {Li}\ and\ \citenamefont {Zhang}(2019)}]{li2019a}%
  \BibitemOpen
  \bibfield  {author} {\bibinfo {author} {\bibfnamefont {C.-K.}\ \bibnamefont {Li}}\ and\ \bibinfo {author} {\bibfnamefont {F.}~\bibnamefont {Zhang}},\ }\bibfield  {title} {\bibinfo {title} {Eigenvalue continuity and {{Gersgorin}}'s theorem},\ }\href {https://doi.org/10.13001/ela.2019.5179} {\bibfield  {journal} {\bibinfo  {journal} {The Electronic Journal of Linear Algebra}\ }\textbf {\bibinfo {volume} {35}},\ \bibinfo {pages} {619} (\bibinfo {year} {2019})}\BibitemShut {NoStop}%
\bibitem [{\citenamefont {Haus}(1984)}]{haus1984}%
  \BibitemOpen
  \bibfield  {author} {\bibinfo {author} {\bibfnamefont {H.~A.}\ \bibnamefont {Haus}},\ }\href@noop {} {\emph {\bibinfo {title} {Waves and Fields in Optoelectronics}}},\ Prentice-{{Hall}} Series in Solid State Physical Electronics\ (\bibinfo  {publisher} {{Prentice-Hall}},\ \bibinfo {address} {{Englewood Cliffs, NJ}},\ \bibinfo {year} {1984})\BibitemShut {NoStop}%
\bibitem [{\citenamefont {Fan}\ \emph {et~al.}(2003)\citenamefont {Fan}, \citenamefont {Suh},\ and\ \citenamefont {Joannopoulos}}]{fan2003}%
  \BibitemOpen
  \bibfield  {author} {\bibinfo {author} {\bibfnamefont {S.}~\bibnamefont {Fan}}, \bibinfo {author} {\bibfnamefont {W.}~\bibnamefont {Suh}},\ and\ \bibinfo {author} {\bibfnamefont {J.~D.}\ \bibnamefont {Joannopoulos}},\ }\bibfield  {title} {\bibinfo {title} {Temporal coupled-mode theory for the {{Fano}} resonance in optical resonators},\ }\href {https://doi.org/10.1364/JOSAA.20.000569} {\bibfield  {journal} {\bibinfo  {journal} {Journal of the Optical Society of America A}\ }\textbf {\bibinfo {volume} {20}},\ \bibinfo {pages} {569} (\bibinfo {year} {2003})}\BibitemShut {NoStop}%
\bibitem [{\citenamefont {Suh}\ \emph {et~al.}(2004)\citenamefont {Suh}, \citenamefont {Wang},\ and\ \citenamefont {Fan}}]{suh2004}%
  \BibitemOpen
  \bibfield  {author} {\bibinfo {author} {\bibfnamefont {W.}~\bibnamefont {Suh}}, \bibinfo {author} {\bibfnamefont {Z.}~\bibnamefont {Wang}},\ and\ \bibinfo {author} {\bibfnamefont {S.}~\bibnamefont {Fan}},\ }\bibfield  {title} {\bibinfo {title} {Temporal coupled-mode theory and the presence of non-orthogonal modes in lossless multimode cavities},\ }\href {https://doi.org/10.1109/JQE.2004.834773} {\bibfield  {journal} {\bibinfo  {journal} {IEEE Journal of Quantum Electronics}\ }\textbf {\bibinfo {volume} {40}},\ \bibinfo {pages} {1511} (\bibinfo {year} {2004})}\BibitemShut {NoStop}%
\bibitem [{\citenamefont {Zhou}\ \emph {et~al.}(2016)\citenamefont {Zhou}, \citenamefont {Zhen}, \citenamefont {Hsu}, \citenamefont {Miller}, \citenamefont {Johnson}, \citenamefont {Joannopoulos},\ and\ \citenamefont {Solja{\v c}i{\'c}}}]{zhou2016}%
  \BibitemOpen
  \bibfield  {author} {\bibinfo {author} {\bibfnamefont {H.}~\bibnamefont {Zhou}}, \bibinfo {author} {\bibfnamefont {B.}~\bibnamefont {Zhen}}, \bibinfo {author} {\bibfnamefont {C.~W.}\ \bibnamefont {Hsu}}, \bibinfo {author} {\bibfnamefont {O.~D.}\ \bibnamefont {Miller}}, \bibinfo {author} {\bibfnamefont {S.~G.}\ \bibnamefont {Johnson}}, \bibinfo {author} {\bibfnamefont {J.~D.}\ \bibnamefont {Joannopoulos}},\ and\ \bibinfo {author} {\bibfnamefont {M.}~\bibnamefont {Solja{\v c}i{\'c}}},\ }\bibfield  {title} {\bibinfo {title} {Perfect single-sided radiation and absorption without mirrors},\ }\href {https://doi.org/10.1364/OPTICA.3.001079} {\bibfield  {journal} {\bibinfo  {journal} {Optica}\ }\textbf {\bibinfo {volume} {3}},\ \bibinfo {pages} {1079} (\bibinfo {year} {2016})}\BibitemShut {NoStop}%
\bibitem [{\citenamefont {Fan}\ and\ \citenamefont {Joannopoulos}(2002)}]{Fan2002}%
  \BibitemOpen
  \bibfield  {author} {\bibinfo {author} {\bibfnamefont {S.}~\bibnamefont {Fan}}\ and\ \bibinfo {author} {\bibfnamefont {J.~D.}\ \bibnamefont {Joannopoulos}},\ }\bibfield  {title} {\bibinfo {title} {Analysis of guided resonances in photonic crystal slabs},\ }\href {https://doi.org/10.1103/PhysRevB.65.235112} {\bibfield  {journal} {\bibinfo  {journal} {Physical Review B}\ }\textbf {\bibinfo {volume} {65}},\ \bibinfo {pages} {235112} (\bibinfo {year} {2002})}\BibitemShut {NoStop}%
\bibitem [{\citenamefont {Munkres}(2000)}]{munkres2000}%
  \BibitemOpen
  \bibfield  {author} {\bibinfo {author} {\bibfnamefont {J.~R.}\ \bibnamefont {Munkres}},\ }\href@noop {} {\emph {\bibinfo {title} {Topology}}},\ \bibinfo {edition} {2nd}\ ed.\ (\bibinfo  {publisher} {{Prentice Hall, Inc}},\ \bibinfo {address} {{Upper Saddle River, NJ}},\ \bibinfo {year} {2000})\BibitemShut {NoStop}%
\bibitem [{\citenamefont {Kippenhahn}(1951)}]{kippenhahn1951a}%
  \BibitemOpen
  \bibfield  {author} {\bibinfo {author} {\bibfnamefont {R.}~\bibnamefont {Kippenhahn}},\ }\bibfield  {title} {\bibinfo {title} {\"uber den {{Wertevorrat}} einer {{Matrix}}},\ }\href {https://doi.org/10.1002/mana.19510060306} {\bibfield  {journal} {\bibinfo  {journal} {Mathematische Nachrichten}\ }\textbf {\bibinfo {volume} {6}},\ \bibinfo {pages} {193} (\bibinfo {year} {1951})}\BibitemShut {NoStop}%
\bibitem [{\citenamefont {Zachlin}\ and\ \citenamefont {Hochstenbach}(2008)}]{zachlin2008}%
  \BibitemOpen
  \bibfield  {author} {\bibinfo {author} {\bibfnamefont {P.}~\bibnamefont {Zachlin}}\ and\ \bibinfo {author} {\bibfnamefont {M.}~\bibnamefont {Hochstenbach}},\ }\bibfield  {title} {\bibinfo {title} {On the numerical range of a matrix [{{Translation}} of {{R}}. {{Kippenhahn}} (1951). {{\"Uber}} den {{Wertevorrat}} einer {{Matrix}}. {{Mathematische Nachrichten}} 6, 193-228]},\ }\href {https://doi.org/10.1080/03081080701553768} {\bibfield  {journal} {\bibinfo  {journal} {Linear and Multilinear Algebra}\ }\textbf {\bibinfo {volume} {56}},\ \bibinfo {pages} {185} (\bibinfo {year} {2008})}\BibitemShut {NoStop}%
\bibitem [{\citenamefont {Helton}\ and\ \citenamefont {Vinnikov}(2007)}]{helton2007}%
  \BibitemOpen
  \bibfield  {author} {\bibinfo {author} {\bibfnamefont {J.~W.}\ \bibnamefont {Helton}}\ and\ \bibinfo {author} {\bibfnamefont {V.}~\bibnamefont {Vinnikov}},\ }\bibfield  {title} {\bibinfo {title} {Linear matrix inequality representation of sets},\ }\href {https://doi.org/10.1002/cpa.20155} {\bibfield  {journal} {\bibinfo  {journal} {Communications on Pure and Applied Mathematics}\ }\textbf {\bibinfo {volume} {60}},\ \bibinfo {pages} {654} (\bibinfo {year} {2007})}\BibitemShut {NoStop}%
\bibitem [{\citenamefont {Henrion}(2010)}]{henrion2010}%
  \BibitemOpen
  \bibfield  {author} {\bibinfo {author} {\bibfnamefont {D.}~\bibnamefont {Henrion}},\ }\bibfield  {title} {\bibinfo {title} {Semidefinite geometry of the numerical range},\ }\bibfield  {journal} {\bibinfo  {journal} {The Electronic Journal of Linear Algebra}\ }\textbf {\bibinfo {volume} {20}},\ \href {https://doi.org/10.13001/1081-3810.1377} {10.13001/1081-3810.1377} (\bibinfo {year} {2010})\BibitemShut {NoStop}%
\bibitem [{\citenamefont {Helton}\ and\ \citenamefont {Spitkovsky}(2011)}]{helton2011}%
  \BibitemOpen
  \bibfield  {author} {\bibinfo {author} {\bibfnamefont {J.~W.}\ \bibnamefont {Helton}}\ and\ \bibinfo {author} {\bibfnamefont {I.~M.}\ \bibnamefont {Spitkovsky}},\ }\href {https://doi.org/10.48550/arXiv.1104.4587} {\bibinfo {title} {The possible shapes of numerical ranges}} (\bibinfo {year} {2011}),\ \Eprint {https://arxiv.org/abs/arXiv:1104.4587} {arxiv:arXiv:1104.4587} \BibitemShut {NoStop}%
\bibitem [{\citenamefont {Keeler}\ \emph {et~al.}(1997)\citenamefont {Keeler}, \citenamefont {Rodman},\ and\ \citenamefont {Spitkovsky}}]{keeler1997}%
  \BibitemOpen
  \bibfield  {author} {\bibinfo {author} {\bibfnamefont {D.~S.}\ \bibnamefont {Keeler}}, \bibinfo {author} {\bibfnamefont {L.}~\bibnamefont {Rodman}},\ and\ \bibinfo {author} {\bibfnamefont {I.~M.}\ \bibnamefont {Spitkovsky}},\ }\bibfield  {title} {\bibinfo {title} {The numerical range of 3 \texttimes{} 3 matrices},\ }\href {https://doi.org/10.1016/0024-3795(95)00674-5} {\bibfield  {journal} {\bibinfo  {journal} {Linear Algebra and its Applications}\ }\textbf {\bibinfo {volume} {252}},\ \bibinfo {pages} {115} (\bibinfo {year} {1997})}\BibitemShut {NoStop}%
\bibitem [{\citenamefont {Brown}\ and\ \citenamefont {Spitkovsky}(2004)}]{brown2004}%
  \BibitemOpen
  \bibfield  {author} {\bibinfo {author} {\bibfnamefont {E.~S.}\ \bibnamefont {Brown}}\ and\ \bibinfo {author} {\bibfnamefont {I.~M.}\ \bibnamefont {Spitkovsky}},\ }\bibfield  {title} {\bibinfo {title} {On {{Matrices}} with {{Elliptical Numerical Ranges}}},\ }\href {https://doi.org/10.1080/0308108031000112589} {\bibfield  {journal} {\bibinfo  {journal} {Linear and Multilinear Algebra}\ }\textbf {\bibinfo {volume} {52}},\ \bibinfo {pages} {177} (\bibinfo {year} {2004})}\BibitemShut {NoStop}%
\bibitem [{\citenamefont {Gau}(2006)}]{gau2006}%
  \BibitemOpen
  \bibfield  {author} {\bibinfo {author} {\bibfnamefont {H.-L.}\ \bibnamefont {Gau}},\ }\bibfield  {title} {\bibinfo {title} {Elliptic {{Numerical Ranges}} of 4 \texttimes{} 4 {{Matrices}}},\ }\href@noop {} {\bibfield  {journal} {\bibinfo  {journal} {Taiwanese Journal of Mathematics}\ }\textbf {\bibinfo {volume} {10}},\ \bibinfo {pages} {117} (\bibinfo {year} {2006})},\ \Eprint {https://arxiv.org/abs/43833649} {43833649} \BibitemShut {NoStop}%
\bibitem [{\citenamefont {Militzer}\ \emph {et~al.}(2017)\citenamefont {Militzer}, \citenamefont {Patton}, \citenamefont {Spitkovsky},\ and\ \citenamefont {Tsai}}]{militzer2017}%
  \BibitemOpen
  \bibfield  {author} {\bibinfo {author} {\bibfnamefont {E.}~\bibnamefont {Militzer}}, \bibinfo {author} {\bibfnamefont {L.~J.}\ \bibnamefont {Patton}}, \bibinfo {author} {\bibfnamefont {I.~M.}\ \bibnamefont {Spitkovsky}},\ and\ \bibinfo {author} {\bibfnamefont {M.-C.}\ \bibnamefont {Tsai}},\ }\bibfield  {title} {\bibinfo {title} {Numerical {{Ranges}} of 4-by-4 {{Nilpotent Matrices}}: {{Flat Portions}} on the {{Boundary}}},\ }in\ \href {https://doi.org/10.1007/978-3-319-49182-0_23} {\emph {\bibinfo {booktitle} {Large {{Truncated Toeplitz Matrices}}, {{Toeplitz Operators}}, and {{Related Topics}}: {{The Albrecht B\"ottcher Anniversary Volume}}}}},\ \bibinfo {series and number} {Operator {{Theory}}: {{Advances}} and {{Applications}}},\ \bibinfo {editor} {edited by\ \bibinfo {editor} {\bibfnamefont {D.~A.}\ \bibnamefont {Bini}}, \bibinfo {editor} {\bibfnamefont {T.}~\bibnamefont {Ehrhardt}}, \bibinfo {editor} {\bibfnamefont {A.~Y.}\ \bibnamefont {Karlovich}},\ and\ \bibinfo {editor} {\bibfnamefont
  {I.}~\bibnamefont {Spitkovsky}}}\ (\bibinfo  {publisher} {{Springer International Publishing}},\ \bibinfo {address} {{Cham}},\ \bibinfo {year} {2017})\ pp.\ \bibinfo {pages} {561--591}\BibitemShut {NoStop}%
\bibitem [{\citenamefont {Cox}\ \emph {et~al.}(2020)\citenamefont {Cox}, \citenamefont {Grewe}, \citenamefont {Hochrein}, \citenamefont {Patton},\ and\ \citenamefont {Spitkovsky}}]{cox2020}%
  \BibitemOpen
  \bibfield  {author} {\bibinfo {author} {\bibfnamefont {M.}~\bibnamefont {Cox}}, \bibinfo {author} {\bibfnamefont {W.~M.}\ \bibnamefont {Grewe}}, \bibinfo {author} {\bibfnamefont {G.~K.}\ \bibnamefont {Hochrein}}, \bibinfo {author} {\bibfnamefont {L.~J.}\ \bibnamefont {Patton}},\ and\ \bibinfo {author} {\bibfnamefont {I.~M.}\ \bibnamefont {Spitkovsky}},\ }\href {https://doi.org/10.48550/arXiv.2011.13409} {\bibinfo {title} {Non-parallel {{Flat Portions}} on the {{Boundaries}} of {{Numerical Ranges}} of 4-by-4 {{Nilpotent Matrices}}}} (\bibinfo {year} {2020}),\ \Eprint {https://arxiv.org/abs/arXiv:2011.13409} {arxiv:arXiv:2011.13409} \BibitemShut {NoStop}%
\bibitem [{\citenamefont {Geryba}\ and\ \citenamefont {Spitkovsky}(2021)}]{geryba2021}%
  \BibitemOpen
  \bibfield  {author} {\bibinfo {author} {\bibfnamefont {T.}~\bibnamefont {Geryba}}\ and\ \bibinfo {author} {\bibfnamefont {I.~M.}\ \bibnamefont {Spitkovsky}},\ }\bibfield  {title} {\bibinfo {title} {On some 4-by-4 matrices with bi-elliptical numerical ranges},\ }\href {https://doi.org/10.1080/03081087.2020.1857326} {\bibfield  {journal} {\bibinfo  {journal} {Linear and Multilinear Algebra}\ }\textbf {\bibinfo {volume} {69}},\ \bibinfo {pages} {855} (\bibinfo {year} {2021})}\BibitemShut {NoStop}%
\bibitem [{\citenamefont {Meyer}(2000)}]{meyer2000}%
  \BibitemOpen
  \bibfield  {author} {\bibinfo {author} {\bibfnamefont {C.~D.}\ \bibnamefont {Meyer}},\ }\href@noop {} {\emph {\bibinfo {title} {Matrix Analysis and Applied Linear Algebra}}}\ (\bibinfo  {publisher} {{Society for Industrial and Applied Mathematics}},\ \bibinfo {address} {{Philadelphia}},\ \bibinfo {year} {2000})\BibitemShut {NoStop}%
\bibitem [{\citenamefont {Needham}(2021)}]{needham2021}%
  \BibitemOpen
  \bibfield  {author} {\bibinfo {author} {\bibfnamefont {T.}~\bibnamefont {Needham}},\ }\href@noop {} {\emph {\bibinfo {title} {Visual Differential Geometry and Forms: A Mathematical Drama in Five Acts}}}\ (\bibinfo  {publisher} {{Princeton University Press}},\ \bibinfo {address} {{Princeton}},\ \bibinfo {year} {2021})\BibitemShut {NoStop}%
\bibitem [{\citenamefont {Zhao}\ \emph {et~al.}(2019)\citenamefont {Zhao}, \citenamefont {Guo},\ and\ \citenamefont {Fan}}]{zhao2019c}%
  \BibitemOpen
  \bibfield  {author} {\bibinfo {author} {\bibfnamefont {Z.}~\bibnamefont {Zhao}}, \bibinfo {author} {\bibfnamefont {C.}~\bibnamefont {Guo}},\ and\ \bibinfo {author} {\bibfnamefont {S.}~\bibnamefont {Fan}},\ }\bibfield  {title} {\bibinfo {title} {Connection of temporal coupled-mode-theory formalisms for a resonant optical system and its time-reversal conjugate},\ }\href {https://doi.org/10.1103/PhysRevA.99.033839} {\bibfield  {journal} {\bibinfo  {journal} {Physical Review A}\ }\textbf {\bibinfo {volume} {99}},\ \bibinfo {pages} {033839} (\bibinfo {year} {2019})}\BibitemShut {NoStop}%
\bibitem [{\citenamefont {Halmos}(1982)}]{halmos1982}%
  \BibitemOpen
  \bibfield  {author} {\bibinfo {author} {\bibfnamefont {P.~R.}\ \bibnamefont {Halmos}},\ }\href@noop {} {\emph {\bibinfo {title} {A {{Hilbert}} Space Problem Book}}},\ \bibinfo {edition} {2nd}\ ed.,\ \bibinfo {series} {Graduate Texts in Mathematics}\ No.~\bibinfo {number} {19}\ (\bibinfo  {publisher} {{Springer-Verlag}},\ \bibinfo {address} {{New York}},\ \bibinfo {year} {1982})\BibitemShut {NoStop}%
\bibitem [{\citenamefont {Istratescu}(2020)}]{istratescu2020}%
  \BibitemOpen
  \bibfield  {author} {\bibinfo {author} {\bibfnamefont {V.~I.}\ \bibnamefont {Istratescu}},\ }\href {https://doi.org/10.1201/9781003065050} {\emph {\bibinfo {title} {Introduction to {{Linear Operator Theory}}}}}\ (\bibinfo  {publisher} {CRC Press},\ \bibinfo {address} {Boca Raton},\ \bibinfo {year} {2020})\BibitemShut {NoStop}%
\bibitem [{\citenamefont {Helton}\ and\ \citenamefont {Spitkovsky}(2012)}]{helton2012}%
  \BibitemOpen
  \bibfield  {author} {\bibinfo {author} {\bibfnamefont {J.~W.}\ \bibnamefont {Helton}}\ and\ \bibinfo {author} {\bibfnamefont {I.~M.}\ \bibnamefont {Spitkovsky}},\ }\bibfield  {title} {\bibinfo {title} {The possible shapes of numerical ranges},\ }\href {https://doi.org/10.7153/oam-06-41} {\bibfield  {journal} {\bibinfo  {journal} {Operators and Matrices}\ }\textbf {\bibinfo {volume} {6}},\ \bibinfo {pages} {607} (\bibinfo {year} {2012})}\BibitemShut {NoStop}%
\bibitem [{\citenamefont {Cox}\ \emph {et~al.}(2021)\citenamefont {Cox}, \citenamefont {Grewe}, \citenamefont {Hochrein}, \citenamefont {Patton},\ and\ \citenamefont {Spitkovsky}}]{cox2021}%
  \BibitemOpen
  \bibfield  {author} {\bibinfo {author} {\bibfnamefont {M.}~\bibnamefont {Cox}}, \bibinfo {author} {\bibfnamefont {W.}~\bibnamefont {Grewe}}, \bibinfo {author} {\bibfnamefont {G.}~\bibnamefont {Hochrein}}, \bibinfo {author} {\bibfnamefont {L.}~\bibnamefont {Patton}},\ and\ \bibinfo {author} {\bibfnamefont {I.}~\bibnamefont {Spitkovsky}},\ }\bibfield  {title} {\bibinfo {title} {Nonparallel flat portions on the boundaries of numerical ranges of 4-by-4 nilpotent matrices},\ }\href {https://doi.org/10.13001/ela.2021.5785} {\bibfield  {journal} {\bibinfo  {journal} {The Electronic Journal of Linear Algebra}\ }\textbf {\bibinfo {volume} {37}},\ \bibinfo {pages} {504} (\bibinfo {year} {2021})}\BibitemShut {NoStop}%
\end{thebibliography}%

\end{document}